\documentclass[10pt,twocolumn]{amsart} 

\usepackage[twocolumn,textwidth=16.31cm,columnsep=.81cm]{geometry}
\usepackage{amssymb,verbatim,mathabx,appendix}

\usepackage[shortlabels]{enumitem}
    \setenumerate[0]{label={\rm (\roman*)}, leftmargin=*}

\usepackage[T1]{fontenc}
\usepackage[english]{babel}

\usepackage[textsize=footnotesize,textwidth=20ex,colorinlistoftodos]{todonotes}

\usepackage{hyperref}
\hypersetup{
    colorlinks=true,
    linkcolor=blue,
    filecolor=magenta, 
    citecolor=red,     
    urlcolor=cyan,
}

\usepackage[capitalize]{cleveref}
    \crefname{enumi}{}{}
    \Crefname{enumi}{Item}{Items}
    \crefname{equation}{}{}
    \Crefname{equation}{Equation}{Equations}

\newtheorem{proposition}{Proposition}[section]

\newtheorem{theorem}[proposition]{Theorem}

\theoremstyle{definition}

\newtheorem{example}[proposition]{Example}

\newtheorem{remark}[proposition]{Remark}

\newcommand{\one}{\mathbf{1}}

\begin{document}
\title{Dynamical Lie algebras generated by Pauli strings and quadratic spaces over $\mathbb{F}_2$}
\author{Hans Cuypers} 
\maketitle


\section{Introduction}
In quantum physics and quantum science dynamical Lie algebras (and dynamical Lie groups) are a basic mathematical structure
that provides information on the symmetries of the studied quantum system.
They are Lie subalgebras of the unitary Lie algebra and relate to the Hamiltonian of 
a quantum systems.
Recent work has shown the importance of these Lie algebras in various areas
within quantum science. 
For example, in quantum control theory  the dynamical Lie algebra being the full special unitary algebra, indicates that one has full control over the system, see for example \cite{schirmer}. For 
variational quantum algorithms and quantum machine learning methods, the dimension of the Lie
algebra is directly related to phenomena like
barren plateaus  and overparametrization. See \cite{barren,barren2,barren3}.

Often  dynamical Lie algebras are given by a generating set of Pauli strings. 
Pauli strings are tensor products of the  Pauli matrices

$$I=\begin{pmatrix}1&0\\0&1\end{pmatrix}, X=\begin{pmatrix}0&1\\1&0\end{pmatrix},$$
$$Y=\begin{pmatrix}0&-i\\i&0\end{pmatrix}, Z=\begin{pmatrix}1&0\\0&-1\end{pmatrix}.$$

The set of all Pauli strings which are the tensor product of $n$ Pauli matrices
is denoted by $\mathcal{P}_n$. 
These Pauli strings generate the Pauli group $\Pi_n$ which consists of all scalar multiples of
the elements in $\mathcal{P}_n$ where the scalar is in $\{1,-1,i,-i\}$.
We will call all elements of $\Pi_n$ Pauli strings.

The elements in $\mathcal{P}_n\cup i\mathcal{P}_n$ 
form a basis of $M_{2^n}(\mathbb{C})$ the real vector space of all complex
$2^n\times 2^n$ matrices.
The matrices in  $\mathcal{P}_n$ are unitary matrices, while those in $i\mathcal{P}_n$ are anti-unitary.
So, we find the matrices in $i\mathcal{P}_n$ to be a basis for the Lie algebra $\mathfrak{u}(2^n)$,
the \emph{real unitary Lie algebra}, of anti-unitary matrices in $M_{2^n}(\mathbb{C})$.

Here we define the  Lie bracket $[\cdot,\cdot]$ on $M_{2^n}(\mathbb{C})$ by
$$[A,B]=\frac{1}{2}(AB-BA)$$
for all $A,B\in M_{2^n}(\mathbb{C})$.
The factor $\frac{1}{2}$ has the advantage that for any two Pauli strings $p,q\in \Pi_n$
we find 

$$[p,q]=\frac{1}{2}(pq-qp)=\begin{cases}pq\ \text{if} \ pq=-qp\\ 0\ \text{if} \ pq=qp.\end{cases}$$

So either $[p,q]$ is again a Pauli string or it is $0$.

The elements in $i\mathcal{P}_n$ different from $i I_{2^n}$, where $I_{2^n}$ denotes the $2^n\times 2^n$ identity matrix,
form a basis for the \emph{special real unitary Lie algebra} $\mathfrak{su}(2^n)$.

Dynamical Lie algebras, i.e. Lie subalgebras of $\mathfrak{su}(2^n)$,   generated by Pauli strings have recently been 
studied intensively. They are also called \emph{Pauli Lie algebras} or \emph{Hamiltonian Lie algebras}.
For example, in  \cite{fullclass} a complete classification of the isomorphism types of Pauli Lie algebras has been obtained, while \cite{wiersema1,wiersema2,freeLie} study Lie algebras  generated by Pauli strings with 2-local
interaction.
The papers \cite{generators},\cite{optimal} and \cite{minimalparitybasis} focus on minimal sets of Pauli strings needed to generate $\mathfrak{su}(2^n)$, or a specific subalgebra.

In this paper we provide a uniform mathematical approach to most of the results presented in these papers.
Moreover, we present an algorithm that on input of a set of Pauli strings determines the isomorphism type of the 
dynamical Lie algebra generated by these Pauli's in  time $\mathcal{O}(\max(n,m)^3)$ where $m$ is the size of the generating set.

Our approach makes use of the following. The set of all Pauli strings (and their scalar multiples by $\pm 1,\pm i$)
is a group $\Pi_n$ of order $4\cdot 4^n$. This group is called the Pauli group. It contains the normal subgroup $\mathcal{Z}_2= \{\pm I_{2^n}\}$ such that
$\Pi_n/\mathcal{Z}_2$ can be identified with an $\mathbb{F}_2$ vector space $V$ of dimension $2^{2n+1}$.
The map $p\in \Pi_n\mapsto p^2$ induces a quadratic form $Q$ on this vector space, which maps the elements
of $i\mathcal{P}_n$ to non-isotropic vectors, i.e. vectors $v$ with $Q(v)=1$.
Triples $p,q,r$ of $i\mathcal{P}_n$ with $[p,q]=\pm r$ correspond to elliptic lines with respect
to $Q$, i.e. $2$-dimensional subspaces on which $Q$ takes the value $1$ at all nonzero vectors. 

This implies that we can study Lie subalgebras $\mathfrak{g}$ of $\mathfrak{su}(2^n)$
generated by Pauli strings
by  subspaces of the  geometry $NO(V,Q)$ of non-isotropic points and elliptic lines of the quadratic space $(V,Q)$.

A classification of these subspaces leads to a classification of the  Lie subalgebras $\mathfrak{g}$ of $\mathfrak{su}(2^n)$
generated by Pauli strings.
Moreover, given a set of non-isotropic vectors in $V$ we can use straightforward linear algebra
over the field $\mathbb{F}_2$ and some graph theory to identify the subspace  of $NO(V,Q)$ generated by these vectors.
Using the correspondence with Lie subalgebras as described above, this translates to  an
efficient algorithm that given a set of generating Pauli strings
in $\mathfrak{su}(2^n)$ determines the isomorphism type of the Lie algebra generated by this set.

In the next three sections we introduce quadratic spaces $(V,Q)$ over $\mathbb{F}_2$ and the geometries
$NO(V,Q)$ on non-isotropic points and elliptic lines
and establish the connection with subgeometries of $NO(V,Q)$
and Lie subalgebras of $\mathfrak{su}(2^n)$ generated by Pauli strings.

\cref{sect:Lie} presents the classification of Lie subalgebras of   $\mathfrak{su}(2^n)$ generated by Pauli strings, while \cref{sect:generators} focuses on the connection between the frustration graph on a set of Pauli strings and the Lie subalgebra generated by them.

This connection forms the basis for the algorithm presented in \cref{sect:alg} to determine
the isomorphism type of a Lie subalgebra generated by a given set of generating Pauli strings.

In the two final sections we discuss the relation of our approach to recent work on 
Dynamical Lie algebras generated by Pauli strings.

Our approach is based on \cite{cuypers_clifford} where we consider Lie algebras (as well as Jordan algebras) of quasi-Clifford algebras over a field $\mathbb{F}$ of characteristic $\neq 2$, which are generated by  a set of elements $X$, being the vertex set of an ordinary graph $\Gamma=(X,E)$, 
such that for all $x,y\in X$
we have an element $\lambda_x\in \mathbb{F}$
such that

$$\begin{array}{lll} 
   x^2&=\lambda_x\one &\\
  {[x,y]}&=0&\textrm{if } \{x,y\}\not\in \mathcal{E},\\
  {[x,[x,y]]}&=\lambda_x y&\textrm{if } \{x,y\}\in \mathcal{E}.\\
\end{array}$$

A set of $2n+1$ Pauli strings
generating the Pauli group does also generate the real algebra $M_{2n}(\mathbb{C})$ as
a (quasi-)Clifford algebra, and satisfies the above
multiplication with $\lambda_x\in \{\pm 1\}$.

Here $\mathbf{1}$ the identity element.

Details and proofs of various statements in the main part of this article can be found in the appendices.

\section{The Pauli group and a quadratic $\mathbb{F}_2$-space}
\label{sect:pauli}
Let $n\geq 1$ be an integer and consider the set $\mathcal{P}_n\cup i\mathcal{P}_n$
of Pauli strings.
As for any two such Pauli strings $p,q\in\mathcal{P}_n\cup i\mathcal{P}_n$  we have 
$$p^2=q^2=\pm \one\ \mathrm{and} \ pq=\pm qp,$$
where $\one$ denotes the $2^n\times 2^n$-identity matrix,
we find that the group generated by all  Pauli strings is a group $\Pi_n$ of order $4\cdot 4^n=4\cdot 2^{2n}$.
This group is called the \emph{Pauli group}.

The group $$\mathcal{Z}_4=\langle i \one\rangle$$
is normal in $\Pi_n$ and the quotient 
$$\Pi_n/\mathcal{Z}_4$$
is an elementary abelian $2$-group which we can identify with an $\mathbb{F}_2$-vector space of dimension $2^{2n}$.
Here $\mathbb{F}_2$ is the field with two elements $0$ and $1$.
This vector space is equipped with a symplectic form $\overline{f}$ defined by
$$\overline{f}(p\mathcal{Z}_4,q\mathcal{Z}_4)=
\begin{cases}
1\ \mathrm{if} \ pq=-qp\\
0\ \mathrm{if} \ pq=qp.\\
\end{cases}$$

This space  plays a crucial role in all the above investigations on Dynamical Lie algebras generated by Pauli strings.
See \cite{fullclass,wiersema1,wiersema2,generators,optimal}.

In this paper, however, we notice that also the subgroup $\mathcal{Z}_2=\{ \pm \one\}$
is normal in $\Pi_n$ and the quotient
$$\Pi_n/\mathcal{Z}_2$$
is an elementary abelian $2$-group which we can identify with an $\mathbb{F}_2$-vector space $V$ of dimension $2n+1$. Here addition in $V$ corresponds to group multiplication in $\Pi_n/\mathcal{Z}_2$. For each $p\in \Pi_n$ we denote by $v_p$ the coset $p\mathcal{Z}_2$ and identify it with a vector in $V$.
So for Pauli strings $p,q$ we have 
$$v_{pq}=v_p+v_q.$$

The map

$$Q:V\rightarrow \mathbb{F}_2=\{0,1\}$$
defined by 
$$Q(v_p)=\begin{cases}
1\ \mathrm{if} \ p^2=-\one\\
0\ \mathrm{if} \ p^2=\one\\
\end{cases}$$
is a quadratic
form with associated symplectic form 
$f$ defined by
$$f(v_p, v_q)=
\begin{cases}
1\ \mathrm{if} \ pq=-qp\\
0\ \mathrm{if} \ pq=qp.\\
\end{cases}$$


The subspace $\langle v_{i\one}\rangle$ is the radical of the form $f$ but is non-isotropic with respect to $Q$,
which means $Q(v_{i\one})=1$. 
So $(V,Q)$ is a non-degenerate quadratic space of dimension $2n+1$.
The form $\overline{f}$ is the form induced by $f$ on $\Pi_n/\mathcal{Z}_4\simeq V/\langle v_{i\one} \rangle  $.

\begin{example}\label{basis}
For $P\in \{I,X,Y,Z\}$ we denote by $P_i$ the Pauli string
$$I\otimes \cdots \otimes P\otimes \cdots \otimes I$$
with $P$ being the $i^{th}$ tensor. Then 
we can take the generators
$$X_j,Z_j,i\one,\ \mathrm{where}\ 1\leq j\leq n$$
for the Pauli group and as a basis for $V$ the $2n+1$ vectors
$$\begin{array}{l}
x_1=v_{X_1},\dots,x_n= v_{X_n},\\
z_1=v_{Z_1}, \cdots, z_n=v_{Z_n},\\
r=v_{i\one}.\\
\end{array}$$

For $v\in V$ with coordinates $(v_1,\dots, v_{2n+1})$ with respect to this basis, we find
$$Q(v)=v_1v_{n+1}+\dots +v_{n}v_{2n}+v_{2n+1}^2.$$ 

Moreover, with $w=(w_1,\dots, w_{2n+1})\in V$ we find 
$$f(v,w)=v_1w_{n+1}+w_1v_{n+1}+\cdots+v_{n}w_{2n}+w_{n}v_{2n}.$$
\end{example}

We will study Lie subalgebras of $\mathfrak{su}(2^n)$ generated by Pauli strings
by relating them to certain subspaces and subgeometries of this quadratic space $(V,Q)$.

\section{Lie subalgebras and the geometry of elliptic lines}
\label{sect:Liegeometry}

We continue with the setting of the previous section.
The elements of $i\mathcal{P}_n$ map under the quotient map
$$\Pi_n\rightarrow V=\Pi_n/\mathcal{Z}_2$$ 
bijectively to the set of non-isotropic points of $V$, i.e., the 
elements $v\in V$ with $Q(v)=1$.

Two elements $p,q\in i\mathcal{P}_n$
satisfy
$$[p,q]=0\Leftrightarrow f(v_p,v_q)=0.$$
Moreover, 
$$[p,q]\neq 0\Leftrightarrow [p,q]\in \pm i\mathcal{P}_n\Leftrightarrow Q(v_p+v_q)=1,$$
and, if $[p,q]\neq 0$ then
$$v_{[p,q]}=v_p+v_q.$$

A $2$-space of $V$ is called \emph{elliptic} or of \emph{$-$-type} if for all three vectors $v\in V$ different from $0$
we have that $v$ is non-isotropic, so $Q(v)=1$.

In general, suppose $V$ is an $\mathbb{F}_2$ vector space equipped with 
a quadratic form $Q$ and associated symplectic form $f$.
By $NO(V,Q)$  (Non-isotropic points in the Orthogonal geometry of $(V,Q)$) we denote the \emph{point-line geometry} $(P,L)$
whose \emph{points} (elements of $P$) are the non-isotropic elements $v\in V_0$, which are not in the radical of $f_0$, 
and whose \emph{lines} (elements in $L$) 
are the triples of non-isotropic vectors in the elliptic 2-spaces
of $V$. 
Then in $NO(V,Q)$ any two points are contained in at most one line.

A \emph{subspace} of $NO(V,Q)$ is a subset $S$ of points which is closed under lines, meaning that for any two points $p,q$ in $S$
which are collinear we find the third point on the  line though $p$ and $q$ also in $S$.
As the intersection of two subspaces is again a subspace, we can define the \emph{subspace generated by} a subset $P_0$ to be the intersection
of all subspaces containing $P_0$. This subspace is denoted by $$\langle P_0\rangle_{NO}.$$
By $V_{P_0}$ we denote the linear subspace of $V$ generated by the elements of $P_0$.
Notice that $$\langle P_0\rangle_{NO}\subseteq V_{P_0}\cap P.$$

The \emph{collinearity graph} of $NO(V,Q)$ is the graph with $P$ as vertex set, and two vertices adjacent if and only if there is a line containing
both of them. We call a subset $P_0$ of $P$ \emph{connected} if the induced subgraph on the points in $P_0$ of the  collinearity graph is connected.

Now consider the Pauli group $Pi_n$ and identify the quotient $\Pi/\mathcal{Z}_2$ with 
the $\mathbb{F}_2$-vector space $V$ equipped with the quadratic form $Q$ and associated 
symplectic form $f$ as in \cref{sect:pauli}.

To a Lie subalgebra $\mathfrak{g}$ of $\mathfrak{su}(2^n)$ we can associate  a subset $$S_\mathfrak{g}=\{v_p\mid p\in i\mathcal{P}_n\setminus \{i\one\}\ \text{with}\ p\in \mathfrak{g}\}$$ of $NO(V,Q)$ and to a subset $S$ of $NO(V,Q)$ a Lie subalgebra
$$\mathfrak{g}_S=\langle p\in i\mathcal{P}_n| v_p\in S\rangle_{[\cdot,\cdot]},$$
the smallest linear subspace of $\mathfrak{g}$ closed under the Lie bracket $[\cdot,\cdot]$.

The following result provides the correspondence between Lie subalgebras of $\mathfrak{su}(2^n)$ generated by Pauli strings and subspaces of $NO(V,Q)$. 

\begin{proposition}
\phantom{a}
\begin{enumerate}
\item 
If $\mathfrak{g}$ is a Lie subalgebra of $\mathfrak{su}(2^n)$, then  $S_\mathfrak{g}$  is a subspace of $NO(V,Q)$.
\item 
If $S$ is a subspace of $NO(V,Q)$, then
$\mathfrak{g}_S$ is a Lie subalgebra of $\mathfrak{su}(2^n)$
 with $S_{\mathfrak{g}_S}=S.$
\end{enumerate}
\end{proposition}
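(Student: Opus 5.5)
For part (i), I will simply translate the line condition of $NO(V,Q)$ into a bracket condition. Let $v_p, v_q \in S_\mathfrak{g}$ be distinct collinear points, with $p,q\in i\mathcal{P}_n\setminus\{i\one\}$ and $p,q\in\mathfrak{g}$. Collinearity means $\{v_p,v_q,v_p+v_q\}$ is an elliptic line, so $Q(v_p+v_q)=1$. By the facts listed at the start of \cref{sect:Liegeometry}, $[p,q]\neq 0$, $[p,q]=\pm r$ for some $r\in i\mathcal{P}_n$, and $v_r=v_{[p,q]}=v_p+v_q$.

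Since $\mathfrak{g}$ is a Lie subalgebra, $[p,q]\in\mathfrak{g}$, and because $\mathfrak{g}$ is a real subspace we also get $r=\pm[p,q]\in\mathfrak{g}$. It remains to check $r\neq i\one$. The point $v_p+v_q$ is a point of $NO(V,Q)$: it lies on a line, and points are taken outside the radical $\langle v_{i\one}\rangle$ of $f$. Alternatively, $r=i\one$ would force $q=\pm i p$, hence $v_q=v_p+v_{i\one}$ and $f(v_p,v_q)=0$, contradicting $[p,q]\neq 0$. So $v_p+v_q\in S_\mathfrak{g}$, and $S_\mathfrak{g}$ is a subspace. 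I also note that $\mathfrak{g}\subseteq\mathfrak{su}(2^n)$ already excludes $i\one$, which justifies the convention.

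For part (ii), let $S$ be a subspace and set $W=\mathrm{span}_\mathbb{R}\{p\in i\mathcal{P}_n : v_p\in S\}$. The key claim is that $W$ is already closed under the bracket, so that $\mathfrak{g}_S=W$. By bilinearity it suffices to check basis elements $p,q$. If $pq=qp$ then $[p,q]=0$. Otherwise $[p,q]=\pm r$ with $v_r=v_p+v_q$, and the three points $v_p,v_q,v_p+v_q$ are non-isotropic and form an elliptic line. As $S$ is closed under lines, $v_r\in S$, so $[p,q]\in W$. Since $W$ is spanned by elements of $i\mathcal{P}_n\setminus\{i\one\}$, it lies in $\mathfrak{su}(2^n)$. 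Hence $\mathfrak{g}_S=W$ is a Lie subalgebra of $\mathfrak{su}(2^n)$.

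Finally, to show $S_{\mathfrak{g}_S}=S$, note first that $S\subseteq S_{\mathfrak{g}_S}$ is immediate. For the reverse inclusion I use that $i\mathcal{P}_n$ is linearly independent over $\mathbb{R}$, being part of a basis of $M_{2^n}(\mathbb{C})$ as stated in the introduction. The map $p\mapsto v_p$ restricted to $i\mathcal{P}_n$ is injective, because the sign ambiguity is removed by taking $p\in i\mathcal{P}_n$ rather than $\pm i\mathcal{P}_n$. So if $p\in i\mathcal{P}_n$ lies in $W$, linear independence forces $p$ to be one of the spanning elements, hence $v_p\in S$. The only delicate point in the whole argument is this sign and uniqueness bookkeeping, namely that $v_p$ determines $p$ up to sign and that $\pm r\in W$ gives $r\in W$. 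The rest is routine.
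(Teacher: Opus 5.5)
Your proof is correct and follows the same route as the paper. For (ii), the paper likewise shows that the real span of $\{p\in i\mathcal{P}_n : v_p\in S\}$ is closed under the bracket, so it equals $\mathfrak{g}_S$, and then uses linear independence of $i\mathcal{P}_n$ to get $S_{\mathfrak{g}_S}=S$; for (i), which the paper calls obvious, your translation of collinearity into $[p,q]=\pm r$ with $v_r=v_p+v_q$, together with the checks that $r\neq i\one$ (because $f(v_p,v_q)=1$) and that $\pm r\in\mathfrak{g}$ forces $r\in\mathfrak{g}$, supplies exactly the details the paper leaves implicit.
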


\section{Subspaces of the geometry of elliptic lines}
\label{sect:subspaces}

Suppose $V$ is an $\mathbb{F}_2$ vector space equipped with 
a quadratic form $Q$ and associated symplectic form $f$.

The point-line geometry $NO(V,Q)$ has been extensively studied in the 70ties and 80ties in relation to
(simple) groups.
The geometry $NO(V,Q)$ has the  property that any two intersecting lines
generate a subspace consisting of $6$ points and $4$ lines, called a \emph{dual affine plane}
or \emph{symplectic plane}.
See \cref{fig:plane}

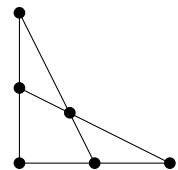
\begin{figure}
\begin{tikzpicture}
\filldraw[color=black]
(0,0) circle [radius=2pt]
(1,0) circle [radius=2pt]
(2,0) circle [radius=2pt]
(0.67,0.67) circle [radius=2pt]
(0,1) circle [radius=2pt]
(0,2) circle [radius=2pt];

\draw (0,0)--(2,0);
\draw (0,0)--(0,2);
\draw (2,0)--(0,1);
\draw (0,2)--(1,0);

\end{tikzpicture}
\caption{A dual affine plane}
\label{fig:plane}
\end{figure}
In particular, results of Jonathan Hall \cite{hall} imply that all connected subspaces of $NO(V,Q)$ are known.
There are two classes of such subspaces. The default class is obtained from linear subspaces of $V$.
Indeed if $W$ is a linear subspace of $V$ then an elliptic line containing two vectors from $W$ is contained in $W$, and the set of non-isotropic vectors of $W$ is a subspace of $NO(V,Q)$.
Notice that such subspace need not be connected, even if the 
quadratic form $Q_W$, being the restriction of $Q$ to $W$, is nondegenerate. It might happen that
the quadratic space $(W,Q_W)$ is of odd dimension with a non-isotropic vector in the radical of $f_W$, the restriction of $f$ to $W$,
which is an isolated vertex in the collinearity graph of the subspace, 
The  connected subspaces which we can obtain as $NO(W,Q_W)$ for some subspace
$W$ of $V_0$ will be called \emph{natural subspaces}.

The second class of connected subspaces can be described as follows.
Let $\Omega$ and $\Omega'$ be two disjoint finite sets.
By $T(\Omega,\Omega')$ (or $T(\Omega)$ in case $\Omega'=\emptyset$) we denote the partial linear space with
as point set all pairs $(T,T')$ of subsets $T$ of $\Omega$ of size $2$ and $T'$ of $\Omega'$ of
arbitrary size. A line of $T(\Omega,\Omega')$ is then a triple of points $\{x,y,z\}$
closed under taking the  symmetric difference.
So, $x=(T_x,T_x')$  and $y=(T_y,T_y')$ are collinear if and only if $|T_x\cap T_y|=1$.
The third point $z$ on the line through collinear $x$ and $y$ is then $z=(T_z,T_z')$
with $T_z=(T_x\cup T_y)\setminus (T_x\cap T_y)$ and $T_z'=(T_x'\cup T_y')\setminus (T_x'\cap T_y')$.

Now let $\hat{V}$ be the $\mathbb{F}_2$ space with basis $\Omega\cup \Omega'$.
We can identify any subset of $\Omega\cup \Omega'$ with the corresponding vector of $\hat{V}$
having precisely the coordinates in the subset equal to $1$.
Then let $V$ be  the subspace of $\hat{V}$ of all vectors with an even $\Omega$-weight.
By $P_0$ we denote the set of the elements of $V$ with $\Omega$-weight equal to $2$
and $\Omega'$-weight $0$.
On $V$ there is a unique quadratic form $Q$ with $Q(p)=1$ for all $p\in P_0$  and $Q(p)=0$ for all $p\in \mathbb{F}_2\Omega'$,
and as associated symplectic form $f$ which is the ordinary dot product on $\mathbb{F}_2\Omega$ and has $\Omega'$ in its radical.
We can identify the elements $T(\Omega,\Omega')$
with the corresponding non-isotropic vectors in $V$, lines corresponding to elliptic lines.  
This embedding of $T(\Omega,\Omega')$ inside $NO(V,Q)$ is called a \emph{standard embedding}
of $T(\Omega,\Omega')$. 

Notice that the radical of $f$ equals $\mathbb{F}_2\Omega'$,
if the size of $\Omega$ is odd, and 
$\mathbb{F}_2\Omega'\oplus \langle v_0\rangle$, where $v_0$ is the all-one vector of $\mathbb{F}_2\Omega$, if the size of $\Omega$ is even.
If $|\Omega|$ is even, but not a multiple of $4$, we find $Q(v_0)=1$, while 
if $4\mid|\Omega|$, then we find the vector $v_0$ to be isotropic 
and in the radical of $Q$.
This implies that when $|\Omega|$ is a multiple of $4$ we also have an embedding of $T(\Omega,\Omega')$ inside the space $NO(V/\langle v_0\rangle,\overline{Q})$, where $\overline{Q}$ is the form induced by $Q$ on $V/\langle v_0\rangle$.
We call this embedding the \emph{quotient of the standard embedding}.

From \cite{hall} we deduce the following classification of subspaces of $NO(V,Q)$:

\begin{theorem}
Let $S$ be a connected subspace of $NO(V,Q)$, then 
we have one of the following.

\begin{enumerate}
\item There is a subspace $W$ of $V$ such that $S$ equals $NO(W,Q_W)$;
\item There are disjoint sets $\Omega$ and $\Omega'$  such that 
$S$ is isomorphic to a standard  embedding of $T(\Omega,\Omega')$ in $V_S$, or, if $4\mid |\Omega|$ to its quotient.
\end{enumerate}
\end{theorem}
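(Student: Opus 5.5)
The plan is to reduce the statement to Hall's classification of connected partial linear spaces in which any two intersecting lines generate a dual affine plane, and then to identify the embedding inside $V_S$.

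\textbf{Step 1: local structure.} Let $S$ be a connected subspace of $NO(V,Q)$. Two intersecting lines of $S$ span a $3$-space $U$ of $V$ whose non-isotropic points not in the radical of $f_U$ form a dual affine plane. Since $S$ is closed under lines, this plane lies in $S$. So $S$, with the induced lines, is a connected partial linear space in which every pair of intersecting lines lies in a symplectic plane. These are exactly the hypotheses of Hall's theorem \cite{hall}, whose conclusion is that $S$ is either a \emph{symplectic-type} geometry (the points of a non-degenerate symplectic space over $\mathbb{F}_2$, lines being hyperbolic triples, possibly with a radical quotient) or isomorphic to some $T(\Omega)$. Allowing the radical coordinates $\Omega'$ gives the $T(\Omega,\Omega')$ version once the embedding is taken into account.

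\textbf{Step 2: the embedding.} Put $W=V_S$, the linear span of $S$. The main work is the \emph{embedding} question: given the abstract type of $S$, show that its inclusion in $W\subseteq V$ agrees with the natural one. I would proceed as follows.

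\emph{Universal embedding.} Build it as the $\mathbb{F}_2$-space on the points of $S$ modulo the relations $x+y+z=0$ for each line $\{x,y,z\}$. The inclusion factors through this universal space, so $W$ is a quotient of it, and $Q$ is determined on $W$ because $Q=1$ on points and $f(x,y)=1$ exactly when $x,y$ are collinear.

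\emph{Symplectic type.} The universal embedding is the natural module. Every non-isotropic vector of $W$ outside $\mathrm{Rad}(f_W)$ is then a sum of points and lies in $S$, since $S$ is closed under lines and connectedness fills it out. This gives $S=NO(W,Q_W)$, which is case (i).

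\emph{Type $T(\Omega)$.} The universal embedding of $T(\Omega,\Omega')$ is $\mathbb{F}_2$ on the even-weight vectors of $\Omega$, plus the extra radical coordinates $\Omega'$ which record how the points' images differ off the $\Omega$-part. Its quotients that still carry a consistent $Q$ are limited: the kernel must lie in the radical and be isotropic. The only isotropic radical vector available in the $\Omega$-part is the all-one vector $v_0$, and it is isotropic exactly when $4\mid|\Omega|$. This gives case (ii), either the standard embedding or its quotient.

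\textbf{Main obstacle.} The hardest step is controlling the radical part, meaning the $\Omega'$ coordinates and the quotient by $v_0$. One must show that any embedding of $T(\Omega)$ into $V$ that respects $Q$ has the form $T(\Omega,\Omega')$ for suitable $\Omega'$. My first attempt would be to pick a spanning tree of points, for example $\{1,i\}$ for $i\in\Omega$, and define $\Omega'$ by a basis of $W\cap\mathrm{Rad}(f_W)$ relative to the even-weight part. I would then check the line relations directly.

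Small $|\Omega|$ needs care, because there the two classes overlap ($T(\Omega)$ can itself be natural). In those cases the statement allows either alternative.
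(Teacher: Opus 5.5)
Your Step~1 is fine, but the route you take afterwards differs from the paper's, and as written it does not close. You use Hall's \emph{abstract} classification followed by a universal-embedding analysis.

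\textbf{The orthogonal family is missing.} The list you attribute to Hall is incomplete. Besides $T(\Omega)$ and the symplectic spaces $Sp(2m,2)$, the connected reduced cotriangular spaces include the orthogonal spaces $NO^{\pm}(2m,2)$: the non-isotropic points and elliptic lines of a non-degenerate $2m$-dimensional quadratic space. These are exactly the subspaces $NO(W,Q_W)$ with $W/\mathrm{Rad}(Q_W)$ even-dimensional, the ones that later give the $\mathfrak{so}$ and $\mathfrak{sp}$ cases, and your dichotomy does not produce them. Nor can you assume that their universal embedding is the natural module: $NO^+(6,2)\cong T(8)$ has a $7$-dimensional one.

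\textbf{The filling-out step is invalid.} The inference ``every non-isotropic vector of $W$ outside $\mathrm{Rad}(f_W)$ is a sum of points and lies in $S$, since $S$ is closed under lines and connected'' does not hold. In the standard embedding of $T(\Omega)$ with $|\Omega|\ge 7$, the weight-$6$ vectors are sums of points, non-isotropic and outside the radical, yet not in $S$. You need to actually determine $\dim W$ and $Q_W$, and then, for instance, compare $|S|$ with $|NO(W,Q_W)|$.

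\textbf{$\Omega'$ is misplaced.} $T(\Omega,\Omega')$ has $\binom{|\Omega|}{2}2^{|\Omega'|}$ points, so $\Omega'$ belongs to the abstract isomorphism type of $S$. It cannot arise from an embedding of $T(\Omega)$, so your ``main obstacle'' is not the right problem. If your version of Hall only covers reduced spaces, what is really missing is the treatment of twin points. Twins differ by a vector of $\mathrm{Rad}(Q_W)$, which is easy. $S$ modulo twins is a connected subspace of $NO(\overline{W},\overline{Q})$, with $\overline{W}=W/\mathrm{Rad}(Q_W)$, to which the reduced classification applies. One must then decide which lifts occur in $W$.

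That last step is not automatic, because twin classes need not be full cosets of $\mathrm{Rad}(Q_W)$. In the standard embedding of $T(\Omega)$ with $4\mid|\Omega|\ge 8$, one has $v_0\in\mathrm{Rad}(Q_W)$ but $u+v_0\notin S$ for every point $u$. This is precisely the standard-versus-quotient dichotomy, and the same issue hides in your ``possibly with a radical quotient'' for the symplectic case. Also, a kernel $\langle v_0+A\rangle$ with $\emptyset\ne A\subseteq\Omega'$ must be reduced to $\langle v_0\rangle$. One way is the isometry $x\mapsto x+x_\omega A$, where $x_\omega$ is the coordinate at a fixed $\omega\in\Omega$.

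\textbf{What the paper does.} It avoids all of this by quoting Hall's Theorem~5.5, which already classifies connected cotriangular spaces together with their spanning $\mathbb{F}_2$-embeddings. In that theorem, $S$ is one of:
\begin{itemize}
\item the set of vectors outside $\mathrm{Rad}(f_1)$ for a symplectic form $f_1$;
\item the set where a quadratic form $Q_1$ equals $1$;
\item $T(\Omega,\Omega')$ in its standard embedding or its quotient.
\end{itemize}
The paper then only checks two things. In the $f_1$-case, every $2$-space avoiding $\mathrm{Rad}(f_1)$ must be elliptic, so $W/\mathrm{Rad}(f_W)$ is an elliptic $2$-space. In the $Q_1$-case, $Q_1=Q$, because the two forms agree on the spanning set $S$ and on the polar form of pairs of points.
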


\section{Lie subalgebras and subspaces}
\label{sect:Lie}

As follows from the above sections, 
a Lie subalgebra $\mathfrak{g}$ of $\mathfrak{su}(2^n)$ generated by
Pauli strings corresponds to a subspace $S_\mathfrak{g}$ of $NO(V,Q)$,
where $V$ is isomorphic to $\Pi_n/\mathcal{Z}_2$ and  $Q$ the quadratic form as in \cref{sect:pauli} 

We often identify a subspace $S$ of $NO(V,Q)$ with the point-line geometry $(S,\{\ell\in L\mid \ell\subseteq S\})$.
This leads to the following result:

\begin{theorem}\label{samegeometry}
Two Lie subalgebras  $\mathfrak{g}$ and $\mathfrak{g}'$ of $\mathfrak{su}(2^n)$ generated by Pauli strings are isomorphic if
 $S_\mathfrak{g}$ and $S_{\mathfrak{g}'}$ are isomorphic.
\end{theorem}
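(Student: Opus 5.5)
The strategy is to find a basis of $\mathfrak{g}$ whose structure constants are determined by the geometry $S_\mathfrak{g}$ alone. An isomorphism $\varphi\colon S_\mathfrak{g}\to S_{\mathfrak{g}'}$ then transports this basis to a basis of $\mathfrak{g}'$ with the same structure constants.

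\emph{Step 1: a basis of $\mathfrak{g}$ indexed by $S_\mathfrak{g}$.} Since $\mathfrak{g}$ is generated by Pauli strings, and the bracket of two Pauli strings is $0$ or $\pm$ a Pauli string, $\mathfrak{g}$ is spanned by the elements of $i\mathcal{P}_n$ it contains (up to sign). By the Proposition of \cref{sect:Liegeometry} this set is exactly $\{p \mid v_p\in S_\mathfrak{g}\}$. Choose for each $s\in S=S_\mathfrak{g}$ one Pauli string $p_s$ with $v_{p_s}=s$. For $s,t\in S$ we have
$$[p_s,p_t]=\begin{cases} c(s,t)\,p_{s+t} & \text{if } s,t \text{ are collinear},\\ 0 & \text{otherwise},\end{cases}$$
with $c(s,t)\in\{\pm1\}$. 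This holds because $[p,q]\neq 0$ iff $f(v_p,v_q)=1$ iff $\{v_p,v_q,v_p+v_q\}$ is an elliptic line. So the support of the structure constants is determined by the line set of $S$, and only the sign function $c$ remains.

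\emph{Step 2: normalising the signs.} Write $c'$ for the corresponding sign function of $\mathfrak{g}'$ with respect to Pauli strings $p'_{s'}$, $s'\in S'=S_{\mathfrak{g}'}$. Then $p_s\mapsto \varepsilon_s\, p'_{\varphi(s)}$ extends to a Lie algebra isomorphism iff
$$c'(\varphi s,\varphi t)=\varepsilon_s\varepsilon_t\varepsilon_{s+t}\,c(s,t)$$
for all collinear $s,t$. To make this checkable I would exploit that $[p,q]=pq$ for anticommuting Pauli strings, so iterated brackets are ordered products in $\Pi_n$. Their signs are controlled entirely by the relations $p^2=-\one$ (for $p\in i\mathcal{P}_n$) and $pq=-qp$ on anticommuting pairs, i.e. by $Q$ and $f$ restricted to points and lines of $S$. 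Concretely:
\begin{enumerate}
\item Decompose $S$ into connected components. Points in distinct components commute, so $\mathfrak{g}$ is the direct sum of the subalgebras of the components, and $\varphi$ permutes components. It therefore suffices to treat a connected $S$.
\item Fix a spanning tree $T$ of the collinearity graph of $S$ rooted at a point $s_0$, and put $\varepsilon_{s_0}=1$.
\item Define $\varepsilon$ along tree edges by forcing the displayed identity on each edge of $T$ used to reach a new point, i.e. choose $p'_{\varphi(s)}$ as the bracket word obtained by replacing each Pauli string by its image.
\end{enumerate}

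\emph{Step 3: consistency.} This is the main obstacle: showing the identity then holds on every line, not just on tree edges. The plan is to reduce this to a local check. Any cycle of lines should decompose into triangles and dual affine planes, since by \cref{sect:subspaces} two intersecting lines generate a dual affine plane. Inside a dual affine plane the six Pauli strings realise a copy of $\mathfrak{su}(2)\oplus\mathfrak{su}(2)$ or $\mathfrak{so}(4)$-type relations. Their products satisfy identities like $[p,[p,q]]=-q$, which hold identically in both $\mathfrak{g}$ and $\mathfrak{g}'$. So the obstruction cocycle $c'\circ\varphi/c$ is trivial on each plane.

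To pass from local to global I would invoke the Theorem of \cref{sect:subspaces}, which says $S$ is either natural, $NO(W,Q_W)$, or of type $T(\Omega,\Omega')$ (standard or quotient). In each case I would verify that the planes generate all cycles, i.e. that the geometry is simply connected by planes. For $T(\Omega,\Omega')$ this can be done directly with the $2$-subsets of $\Omega$, giving the analogue of $\mathfrak{so}(|\Omega|)$ tensored with the $\Omega'$-part. The natural case is handled via the linear structure of $W$. The quotient embedding, where $V_S$ has smaller dimension than in the standard embedding, needs separate care: there the product of Pauli strings over $v_0$ is central in the group but does not occur in $\mathfrak{g}$, so it imposes no Lie relation. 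I expect this case to be the most delicate, and I would check it first.
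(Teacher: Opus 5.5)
The paper gives no separate argument for this theorem; it is presented as immediate from the correspondence of \cref{sect:Liegeometry}. You rightly isolate the signs as the real issue. However, Step~3, which carries all the content, is only announced, and the part you do argue is not justified.

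What you need is $\varepsilon\colon S\to\{\pm1\}$ with $\varepsilon_s\varepsilon_t\varepsilon_{s+t}=d(\ell)$ for every line $\ell=\{s,t,s+t\}$, where $d(\ell)=c(s,t)\,c'(\varphi s,\varphi t)$. This is well defined because $p_sp_t=c\,p_{s+t}$ forces $c(s,t)=c(t,s+t)=c(s+t,s)$. It is an $\mathbb{F}_2$-linear system with three unknowns per equation, so a spanning tree of the collinearity graph does not organise it. It is solvable if and only if $\prod_{\ell\in\Lambda}d(\ell)=1$ for every set $\Lambda$ of lines covering each point of $S$ an even number of times.

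The four lines of a dual affine plane form such a $\Lambda$, and $\prod d=1$ does hold there. The reason is that the non-collinear pairs $\{x,x+r\}$ of the plane differ by an isotropic vector $r$ whose lift $z$ centralises the plane and has $z^2=\one$. This forces the four line signs, oriented compatibly with these pairs, to multiply to $+1$ in every realisation. The identity $[p,[p,q]]=-q$ you cite concerns a single line and holds for every sign choice, so it says nothing about planes.

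More seriously, the claim that every such $\Lambda$ is a sum of plane line-sets is a genuine simple-connectedness statement. For $T(\Omega)$ it is true: points, lines and planes are the edges, triangles and tetrahedra of the simplex on $\Omega$, and $2$-cycles of a simplex are boundaries. For $NO(W,Q_W)$ and for $T(\Omega,\Omega')$ with $\Omega'\neq\emptyset$ you give no argument, and ``handled via the linear structure of $W$'' is not one. Your concern about the quotient embedding is misplaced: this step depends only on the abstract geometry.

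The missing idea is that the sign obstruction vanishes for a structural reason, with no case analysis. Let $U\subseteq V_S\oplus V_{S'}$ be spanned by the vectors $u_s=(s,\varphi(s))$, $s\in S$. Since $\varphi$ sends the third point of a line to the third point, $u_s+u_t=u_{s+t}$ for collinear $s,t$. Choose points $b_1,\dots,b_k\in S$ such that $u_{b_1},\dots,u_{b_k}$ is a basis of $U$. For $u=\sum_{i\in I}u_{b_i}$, let $P_u$ (respectively $P'_u$) be the product of the $p_{b_i}$ (respectively the $p'_{\varphi(b_i)}$) over $i\in I$ in increasing order.

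Rewriting $P_uP_w$ as $\pm P_{u+w}$ only uses $p_{b_i}^2=-\one$ and $p_{b_i}p_{b_j}=(-1)^{f(b_i,b_j)}p_{b_j}p_{b_i}$. Here $f(b_i,b_j)=1$ exactly when $b_i,b_j$ are collinear, which $\varphi$ preserves. Hence $P_uP_w=\sigma(u,w)P_{u+w}$ and $P'_uP'_w=\sigma(u,w)P'_{u+w}$ with the same sign. Since $P_{u_s}=\varepsilon_sp_s$ and $P'_{u_s}=\varepsilon'_sp'_{\varphi(s)}$ for some signs, the linear map $p_s\mapsto\varepsilon_s\varepsilon'_s\,p'_{\varphi(s)}$ sends $P_{u_s}$ to $P'_{u_s}$. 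It therefore preserves all brackets (non-collinear pairs commute on both sides), and it maps a basis of $\mathfrak{g}$ onto a basis of $\mathfrak{g}'$. This proves the theorem without Hall's classification, and it handles standard versus quotient embeddings automatically.
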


\begin{remark}
It is not obvious that one can replace the 'if' in the above theorem by an 'if and only if'.
Two Lie subalgebras  $\mathfrak{g}$ and $\mathfrak{g}'$ of $\mathfrak{su}(2^n)$ generated by Pauli strings can be isomorphic, however,
the corresponding isomorphism need not map a Pauli string in $\mathfrak{g}$ to a 
Pauli string of $S_{\mathfrak{g}'}$.  
\end{remark}

Let $\mathfrak{g}$ be a Lie subalgebra of $\mathfrak{su}(2^n)$ generated by Pauli strings.
Denote by $\mathcal{P}_\mathfrak{g}$ the set of Pauli strings in $i\mathcal{P}_n$ contained in $\mathfrak{g}$.
For $p,q\in \mathcal{P}_\mathfrak{g}$ we define $p\equiv q$ if and only if for all Pauli strings in $\mathcal{P}_\mathfrak{g}$ 
 we have 
$$[p,r]=0\Leftrightarrow [q,r]=0.$$

The relation $\equiv$ is an equivalence relation.

If $\equiv$ is nontrivial, then we can find non-trivial ideals in $\mathfrak{g}$: 

\begin{proposition}\label{ideals}
Suppose $S_\mathfrak{g}$ is connected.
\begin{enumerate}
\item If $\equiv$ is trivial, then $\mathfrak{g}$ is simple.
\item Suppose $p_0\neq q_0\in \mathcal{P}_\mathfrak{g}$.
Then $$p_0\equiv q_0 \Leftrightarrow 0\neq v_{p_0}+v_{q_0}\in \mathrm{Rad}(Q_{V_\mathfrak{g}}).$$
\item 
If $p_0\neq q_0\in \mathcal{P}_\mathfrak{g}$ and $p_0\equiv q_0$ then with $z=p_0q_0$ we find
$$I_z^+=\langle (1+z)p\mid p\in \mathcal{P}_\mathfrak{g}\rangle$$ and 
$$I_{z}^-=\langle (1-z)p\mid p\in \mathcal{P}_\mathfrak{g}\rangle$$
to be ideals of $\mathfrak{g}$.

Moreover, $$\mathfrak{g}/I_z^{+}\simeq I_z^{+}\simeq I_z^{-}\simeq\mathfrak{g}/I_z^{-}$$
is isomorphic to the Lie subalgebra $\mathfrak{g}_{S_0}$, where $S_0$ is the intersection of $S$ with $W_0$, a hyperplane in the subspace of $V$ generated by $S_\mathfrak{g}$ not containing $v_{p_0}+v_{q_0}$.
\end{enumerate}
\end{proposition}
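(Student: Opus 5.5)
I would prove the three items in the order (ii), (iii), (i), since (ii) turns $\equiv$ into linear algebra over $\mathbb{F}_2$ and the other two items are built on it. Throughout I write $V_\mathfrak{g}$ for the span of $S_\mathfrak{g}$. I use that $\mathcal{P}_\mathfrak{g}$ spans $\mathfrak{g}$: by the correspondence proposition of \cref{sect:Liegeometry}, $\mathfrak{g}$ is the span of the Pauli strings it contains.

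\textbf{Item (ii).} Fix $p_0\neq q_0$ and put $u=v_{p_0}+v_{q_0}$, which is nonzero.
\begin{itemize}
\item By definition, $p_0\equiv q_0$ says $f(v_{p_0},v_r)=f(v_{q_0},v_r)$ for every $r\in\mathcal{P}_\mathfrak{g}$. Equivalently, $f(u,v_r)=0$ for all such $r$.
\item The vectors $v_r$ with $r\in\mathcal{P}_\mathfrak{g}$ span $V_\mathfrak{g}$. Hence $p_0\equiv q_0$ is equivalent to $u\in\mathrm{Rad}(f_{V_\mathfrak{g}})$.
\item It remains to show $Q(u)=0$, so that $u\in\mathrm{Rad}(Q_{V_\mathfrak{g}})$. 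Since $f(v_{p_0},v_{q_0})=0$, we get $Q(u)=Q(v_{p_0})+Q(v_{q_0})=0$.
\item Conversely, if $u$ lies in the radical, then $f(u,\cdot)$ vanishes on $V_\mathfrak{g}$, which gives $p_0\equiv q_0$.
\end{itemize}
Connectedness is not needed for this item.

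\textbf{Item (iii).} Put $z=p_0q_0$. Then $z$ is a Pauli string with $z^2=\one$ (because $Q(u)=0$), and $z$ commutes with every $p\in\mathcal{P}_\mathfrak{g}$ (because $u$ is in the radical). So $e_\pm=\frac12(1\pm z)$ are orthogonal central idempotents in the associative algebra generated by $\mathfrak{g}$.

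For $p,q\in\mathcal{P}_\mathfrak{g}$ we get $[e_\pm p,q]=e_\pm[p,q]$, and $[p,q]$ is again in $\mathfrak{g}$. Hence $I_z^\pm$ are ideals. Moreover $p=e_+p+e_-p$, so $\mathfrak{g}\subseteq I_z^+\oplus I_z^-$. Since $e_+e_-=0$, this gives $[I_z^+,I_z^-]=0$.

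One point needs checking: that $I_z^\pm$ lies inside $\mathfrak{g}$ rather than only in $\mathfrak{g}e_+\oplus\mathfrak{g}e_-$. For this I use connectedness. Take $p\in\mathcal{P}_\mathfrak{g}$ and a neighbour $q$, with $r=[p,q]$. The element $z$ lies in the associative algebra generated by $\mathcal{P}_\mathfrak{g}$. I would show that $zp\in\mathfrak{g}$ by writing $zp$ as an iterated bracket along a path from $p_0$ to $q_0$ in the collinearity graph. The difficulty is that $p_0q_0$ is not itself a bracket.

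If that step fails, I would fall back on a different argument: work with the projections $\pi_\pm:\mathfrak{g}\to\mathfrak{g}e_\pm$. Each is an injective-or-isomorphic map, and $\mathfrak{g}$ is a subdirect product of the two images. The ideals are then $\mathfrak{g}\cap\mathfrak{g}e_\pm$, and they are nonzero because $e_\pm p_0-e_\pm q_0$-type differences lie in them.

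To identify the quotients, choose a hyperplane $W_0$ of $V_\mathfrak{g}$ with $u\notin W_0$. Then every point of $S$ is $w$ or $w+u$ for a unique $w\in W_0$. The map $w\mapsto e_\pm p_w$ sends brackets to brackets, because $f(w+u,\cdot)=f(w,\cdot)$ and $e_\pm z=\pm e_\pm$. It is injective on $S_0=S\cap W_0$. This gives $\mathfrak{g}_{S_0}\cong I_z^\pm\cong\mathfrak{g}/I_z^\mp$.

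\textbf{Item (i).} Suppose $\equiv$ is trivial and let $I\neq0$ be an ideal. Expand a nonzero $x\in I$ in the Pauli basis. Bracketing with $p\in\mathcal{P}_\mathfrak{g}$ kills exactly the terms that commute with $p$, and rescales the others by Pauli factors. Triviality of $\equiv$ means any two distinct Pauli terms are separated by some $p$. So repeated brackets, combined with taking linear combinations via $[p,[p,\cdot]]$ (which is $\pm$ the identity on anticommuting terms), isolate a single Pauli string in $I$. Connectedness of $S$ then spreads this to all of $\mathcal{P}_\mathfrak{g}$, since if $q\in I$ and $r$ is adjacent to $q$, then $[q,r]\in I$ and $r=\pm[q,[q,r]]$-type relations give $r\in I$. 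Hence $I=\mathfrak{g}$.

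The main obstacle is item (iii): showing $I_z^\pm\subseteq\mathfrak{g}$ and that the isomorphism is well defined on all of $S$ and not just on $S_0$.
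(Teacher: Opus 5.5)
\textbf{Gap in (iii).} Items (i) and (ii) are correct and match the paper. Item (ii) is the same radical computation. In (i), your splitting of $x$ into its commuting and anticommuting parts via $-[p,[p,x]]$ is a valid variant of the paper's single bracket $[x,u]$ with a separating $u$. Your identity $[e_\pm p,q]=e_\pm[p,q]$ is exactly what the paper writes for (iii).

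The genuine gap is the step you flag and leave open: you never prove $I_z^\pm\subseteq\mathfrak{g}$. Equivalently, $zp\in\mathfrak{g}$ for every $p\in\mathcal{P}_{\mathfrak{g}}$, or $S_{\mathfrak{g}}+u=S_{\mathfrak{g}}$. All of (iii) depends on this:
\begin{itemize}
\item without it, $I_z^\pm$ are not ideals \emph{of} $\mathfrak{g}$;
\item you do not get $\mathfrak{g}=I_z^+\oplus I_z^-$, hence not $\mathfrak{g}/I_z^\mp\cong I_z^\pm$;
\item your map $x\mapsto e_\pm x$ from $\mathfrak{g}_{S_0}$ is not known to be onto $I_z^\pm$, since for $v_p\notin W_0$ you need $v_p+u\in S_0$ to write $e_\pm p=\pm e_\pm(zp)$ with $zp\in\mathfrak{g}_{S_0}$.
\end{itemize}
Your fallback does not repair this. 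It swaps $I_z^\pm$ for $\mathfrak{g}\cap\mathfrak{g}e_\pm$ without identifying the two. Also, $\pi_\pm$ is not injective: its kernel is $\mathfrak{g}\cap\mathfrak{g}e_\mp\neq 0$.

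\textbf{The missing step.} You never need $z$ itself as a bracket, only $zp$, and the base case is free. Since $p_0q_0=q_0p_0$, we have $zp_0=p_0q_0p_0=p_0^2q_0=-q_0\in\mathfrak{g}$. Now induct along a path in the collinearity graph starting at $v_{p_0}$. Suppose $zp\in\mathfrak{g}$ and $r\in\mathcal{P}_{\mathfrak{g}}$ anticommutes with $p$. Then $pr$ anticommutes with $zp$, and using $prp=r$ (as $p^2=-\one$),
\[
[[p,r],zp]=pr\cdot zp=z\,prp=zr\in\mathfrak{g}.
\]
Geometrically, $v_r+u$ is the third point on the line through the collinear points $v_p+v_r$ and $v_p+u$. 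In the language of your fallback: $e_\pm p_0=\tfrac12(p_0\mp q_0)\in\mathfrak{g}$ and $e_\pm r=[[p,r],e_\pm p]$.

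Connectedness then gives $S_{\mathfrak{g}}+u=S_{\mathfrak{g}}$, and your hyperplane argument is complete. The map $x\mapsto e_\pm x$ is injective on $\mathfrak{g}_{S_0}$ because $x$ and $zx$ have disjoint Pauli supports. It is onto $I_z^\pm$, and $\mathfrak{g}=I_z^+\oplus I_z^-$.

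The paper's own proof of (iii) stops at the bracket identity and addresses neither this containment nor the isomorphisms. With this step added, your argument covers more of the statement than the written proof does.
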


The above proposition shows that  
$\mathfrak{g}$ is isomorphic to a direct sum of $2^r$
copies of a simple Lie subalgebra, where $r$ is the dimension of the isotropic radical of
$Q_W$, where $W=V_{\mathfrak{g}}$ and $S_\mathfrak{g}$ is isomorphic to
$NO(W,Q_W)$ or the size of the set $\Omega'$ in case $S_\mathfrak{g}$ is isomorphic to $T(\Omega,\Omega')$ for disjoint sets $\Omega$ and $\Omega'$, where $|\Omega|\neq 4$.
(In  case $|\Omega|=4$ we find $T(\Omega,\Omega')$ to be isomorphic to $T(\Omega_1,\Omega_1')$
where $|\Omega_1|=3$ and $\Omega_1'|=|\Omega'|+1$.)

So, to identify the Lie subalgebras of $\mathfrak{su}(2^n)$ generated by Pauli strings
we need only consider the geometries $NO(W,Q_W)$ with $Q$ non-degenerate
and $T(\Omega,\Omega')$, where $|\Omega|\neq 4$ and $\Omega'$ is the empty set.

With the following three classes of Lie subalgebra of $\mathfrak{sl}(n)$,
the Lie algebra of all real traceless $n\times n$ matrices,
we can identify all Lie subalgebra of $\mathfrak{su}(2^n)$ generated by Pauli strings.

$$\begin{array}{ll}
\mathfrak{so}(n)&=\{x\in \mathfrak{sl}(n)\mid x=-x^\top \}\\
\mathfrak{su}(n)&=\{x\in \mathfrak{sl}(n)\mid x=-\overline{x}^\top \}\\
\mathfrak{sp}({n})&=\{x\in \mathfrak{su}(2n)\mid x=yx^\top y\}
\end{array}$$
where $$y=\begin{pmatrix} 0 & -1_{{n-1}}\\ 1_{{n-1}} &0\end{pmatrix}.$$

We obtain:

\begin{theorem}
Let $\mathfrak{g}$ be a Lie subalgebra of $\mathfrak{su}(2^n)$ generated by Pauli strings.
\begin{enumerate}
\item If $S_\mathfrak{g}$ is isomorphic to $NO(W,Q_W)$, for some non-degenerate quadratic space  $(W,Q_W)$
of odd dimension $2m+1$, then $\mathfrak{g}$ is isomorphic to $\mathfrak{su}(2^m)$;
\item If $S_\mathfrak{g}$ is isomorphic to $NO(W,Q_W)$, for some non-degenerate quadratic space $(W,Q_W)$
of even dimension $2m$ of $+$-type, then $\mathfrak{g}$ is isomorphic to $\mathfrak{so}(2^m)$;
\item If $S_\mathfrak{g}$ is isomorphic to $NO(W,Q_W)$, for some non-degenerate quadratic space $(W,Q_W)$
of even dimension $2m$ of $-$-type, then $\mathfrak{g}$ is isomorphic to $\mathfrak{sp}(2^m)$;
\item If $S_\mathfrak{g}$ is isomorphic to $T(\Omega,\emptyset)$ for some finite set $\Omega$ of size $3$ or at least $5$,
then $\mathfrak{g}$ is isomorphic to $\mathfrak{so}(|\Omega|)$.
\end{enumerate}

\end{theorem}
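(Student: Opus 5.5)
The plan is to use \cref{samegeometry} throughout. For each case I will exhibit one concrete Lie subalgebra $\mathfrak{g}'$ generated by Pauli strings such that $S_{\mathfrak{g}'}$ is isomorphic to the given geometry and $\mathfrak{g}'$ is visibly the claimed classical algebra. Then $\mathfrak{g}\simeq\mathfrak{g}'$. In the $NO(W,Q_W)$ cases the isomorphism of geometries will come from Witt's theorem. Non-degenerate quadratic spaces over $\mathbb{F}_2$ are determined up to isometry by dimension and (in even dimension) type, and an isometry $W\to W'$ induces an isomorphism $NO(W,Q_W)\to NO(W',Q_{W'})$, because it preserves non-isotropic points and elliptic lines.

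\emph{Case (i).} Take $W\simeq \Pi_m/\mathcal{Z}_2$, realized inside $V$ as the span of $x_1,\dots,x_m,z_1,\dots,z_m,r$ from \cref{basis}. By the computation there, this is non-degenerate of dimension $2m+1$. Its non-isotropic points other than $r$ are exactly the $v_p$ with $p\in i\mathcal{P}_m\otimes I\setminus\{i\one\}$. The algebra $\mathfrak{g}_S$ is then spanned by $i\mathcal{P}_m\setminus\{i\one\}$ acting on the first $m$ qubits, which is $\mathfrak{su}(2^m)$. I will have to check the minor point that the radical point $r$ is excluded from $NO$ by definition.

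\emph{Cases (ii) and (iii).} Inside this $(2m+1)$-dimensional space, take the hyperplanes $W$ complementing $r$. Every such hyperplane is non-degenerate of dimension $2m$, and both types occur. A convenient choice is the following. The type-$+$ space is spanned by $x_j$ and $v_{iZ_j}$-style modifications: it is the set of $v_p$ with $p$ ranging over real antisymmetric Pauli strings times $i$ versus symmetric ones. More precisely, $i p$ with $p\in\mathcal{P}_m$ is real antisymmetric iff $p$ contains an odd number of $Y$'s. These $ip$ span $\mathfrak{so}(2^m)$, which has dimension $2^{m-1}(2^m-1)$. Counting non-isotropic points in $O^+_{2m}(2)$ gives $2^{2m-1}-2^{m-1}$, matching this dimension. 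I then verify that these vectors, together with $0$ and the isotropic ones, form a linear hyperplane not containing $r$, for instance by identifying the hyperplane as the kernel of a suitable linear functional. The $-$-type case is analogous with $\mathfrak{sp}(2^{m-1})$ realized as $\{x: x=yx^\top y\}$. There the count is $2^{2m-1}+2^{m-1}$ points, which equals $\dim \mathfrak{sp}$ at rank $2^{m-1}$, i.e. $2^{m-1}(2^m+1)$. The relevant Pauli strings are those $ip$ with $p$ commuting appropriately with $y=iY\otimes I\otimes\cdots$. I expect the main obstacle to be this bookkeeping: showing that the chosen set of Pauli strings is exactly the set of non-isotropic points of a hyperplane of the right type, and matching the paper's indexing conventions for $\mathfrak{sp}$.

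\emph{Case (iv).} Realize $T(\Omega,\emptyset)$ with $|\Omega|=N$ through the standard matrices $E_{ab}-E_{ba}$ in $\mathfrak{so}(N)$. Pauli strings do not obviously give these directly, so I will use a Jordan–Wigner type embedding instead. Take $N$ pairwise anticommuting Pauli strings $\gamma_1,\dots,\gamma_N$ (Majorana operators, available when $N\le 2n+1$) and set $p_{ab}=\gamma_a\gamma_b/1$, normalized to lie in $i\mathcal{P}_n$. The map $\{a,b\}\mapsto v_{p_{ab}}$ is a standard embedding: $p_{ab}$ and $p_{cd}$ anticommute iff $|\{a,b\}\cap\{c,d\}|=1$, and $v_{p_{ab}}+v_{p_{bc}}=v_{p_{ac}}$. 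The span of the $p_{ab}$ is closed under the bracket and satisfies the $\mathfrak{so}(N)$ relations, as in the spin representation. By \cref{samegeometry} this gives $\mathfrak{g}\simeq\mathfrak{so}(N)$. If $n$ is too small to host $N$ Majoranas, I will note that the isomorphism type depends only on the geometry, so the algebra can be computed after enlarging $n$, provided \cref{samegeometry} can be applied across different $n$.
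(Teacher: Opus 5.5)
Your proposal is correct, and for (i)--(iii) it uses the same mechanism as the paper's appendix proof, but it runs that mechanism in the opposite direction. The shared mechanism is this: $\mathfrak{so}(2^m)$ and $\mathfrak{sp}(2^{m-1})$ are fixed-point algebras of transpose-type involutions, their Pauli strings are the points of a hyperplane of $V$ missing $r$, and the type is read off by counting. The paper starts from all the involutions $\tau_y$, asserts without proof that every geometric hyperplane of $NO(V,Q)$ is a hyperplane section, and uses transitivity of the Clifford group on the $y$'s. You instead fix one hyperplane of each type (Witt's theorem) and exhibit the matching involution, which avoids both of those steps.

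The functional you leave implicit is easy to write down. Complex conjugation gives a homomorphism $g\mapsto\lambda(g)\in\{\pm1\}$ on $\Pi_m$ with $\overline{g}=\lambda(g)g$ and $\lambda(-\mathbf{1})=1$. This yields a linear functional $\ell$ on $V$ with $\ell(r)=1$.
\begin{itemize}
\item The kernel of $\ell$ is $\langle x_1,\dots,x_m,z_1,\dots,z_m\rangle$, visibly of $+$-type by \cref{basis}. Its points are the $ip$ with an odd number of $Y$'s.
\item The kernel of $\ell+f(\cdot,v_{Y_1})$ is $\langle x_1+r,z_1+r,x_2,z_2,\dots,x_m,z_m\rangle$, an elliptic line perpendicular to $m-1$ hyperbolic lines. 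Its points are exactly the Pauli strings fixed by $x\mapsto yx^\top y$ with $y=\begin{pmatrix}0&-1\\1&0\end{pmatrix}\otimes I$.
\end{itemize}
So even the point count can be skipped. Your identification of the $-$-type algebra as the one of rank $2^{m-1}$, of dimension $2^{m-1}(2^m+1)$, is the right one. It agrees with the appendix's $\mathfrak{sp}(2^{n-1})\subseteq\mathfrak{su}(2^n)$; the $\mathfrak{sp}(2^m)$ in the statement only fits the $2^m\times 2^m$-matrix convention.

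For (iv) you go beyond the paper, whose appendix proves only the first three items. Your Majorana model works as you describe.
\begin{itemize}
\item With the halved bracket and $\gamma_b^2=\mathbf{1}$, one gets $[\gamma_a\gamma_b,\gamma_b\gamma_c]=\gamma_a\gamma_c$.
\item The vectors $v_{\gamma_a\gamma_b}$ are distinct. The Gram matrix $J+I$ of the $\bar\gamma_a$ over $\mathbb{F}_2$ is invertible for $N$ even and has kernel $\{0,\mathbf{1}\}$ for $N$ odd. Hence the only possible dependency among the $v_{\gamma_a}$ is the full set, of odd size, and the map is injective on even-weight vectors.
\end{itemize}
Enlarging $n$ is needed only when $|\Omega|=2n+2$; for example $T(\Omega)=NO(V,Q)$ when $|\Omega|=6$ and $n=2$. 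The step is harmless: $x\mapsto x\otimes I_2$ is an injective Lie homomorphism $\mathfrak{su}(2^n)\to\mathfrak{su}(2^{n+1})$. It sends Pauli strings to Pauli strings and induces an isometric embedding of the quadratic spaces fixing $r$, so it preserves $S_\mathfrak{g}$.

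Finally, every case of your proof rests on \cref{samegeometry}, and so does the paper's, where the passage from hyperplane sections of the full $V$ to an arbitrary $\mathfrak{g}$ is left implicit. The paper states that result without proof, so your argument is complete relative to it.
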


\begin{example}
Suppose $\Omega$ is a set of size $n$ with $n=3,5$ or $6$.
Then in the standard embedding of $T(\Omega)$ into the quadratic space $(V,Q)$ we find
$T(\Omega)$ to be equal to $NO(V,Q)$.
Indeed, we can identify $V$ with the hyperplane of $\mathbb{F}_2\Omega$
of vectors of even weight as described in \cref{sect:subspaces}.
The $\binom{n}{2}$ vectors of weight $2$
are non-isotropic. The $\binom{n}{4}$ vectors of weight $4$ are all  isotropic vectors.
If $n=6$, then the unique weight $6$ vector is anisotropic and in the radical of the quadratic form $Q$.

These isomorphisms of geometries imply the following isomorphism between Lie algebras:

$$\begin{array}{l}
\mathfrak{so}(3)\simeq \mathfrak{su}(2)\\
\mathfrak{so}(5)\simeq \mathfrak{sp}(2)\\
\mathfrak{so}(6)\simeq \mathfrak{su}(4)\\
\end{array}
$$

If $\Omega$ has size $8$, then we find  
the weight $8$ vector in $\mathbb{F}_2\Omega$ to span the  isotropic radical of $Q$.
So, in the quotient of the standard embedding we find that all 
anisotropic vectors are mapped to elements from $T(\Omega)$.
Hence if $\Omega$ has size $8$, then the quotient of the standard embedding maps $T(\Omega)$
isomorphically to $N(V,Q)$, where $(V,Q)$ is 6-dimensional of $+$-type.
We find two  representations of $\mathfrak{so}(8)=\mathfrak{so}(2^3)$ to be isomorphic.

If $\Omega$ has size $7$ or larger than $8$, we do find that weight $6$ vectors of $\mathbb{F}_2$
are anisotropic vectors in $V$ in both  the
standard embedding of $T(\Omega)$ or its quotient.
So, $T(\Omega)$ yields a proper subspace of $NO(V,Q)$.
\end{example}

The above two results provide the following  classification, also obtained in  \cite{fullclass}:

\begin{theorem}[Classification of Lie subalgebras generated by Pauli strings]\label{classification}
Let $\mathfrak{g}$ be a Lie subalgebra of $\mathfrak{su}(2^n)$ generated by Pauli strings.
Then $$\mathfrak{g}\simeq\bigoplus_{i=1}^k \mathfrak{g}_i,$$ 
with $\mathfrak{g}_i$ isomorphic to $\mathfrak{su}(2^{m_i})$,
$\mathfrak{so}(2^{m_i})$,  $\mathfrak{sp}(2^{m_i})$ or $\mathfrak{so}(m_i)$ for some $m_i$.
\end{theorem}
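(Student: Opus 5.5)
The plan is to reduce to the case where $S_\mathfrak{g}$ is connected, and then combine \cref{ideals} with the theorem listing the four simple types.

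\textbf{Step 1: reduction to connected components.} Write $S=S_\mathfrak{g}$ and split it into the connected components $S_1,\dots,S_t$ of the collinearity graph. Each $S_j$ is again a subspace of $NO(V,Q)$: if two points of $S_j$ are collinear, the third point on their line lies in $S$ and is collinear with both, so it lies in $S_j$. Points in different components are not collinear. For the corresponding Pauli strings this means $[p,q]=0$, since $f(v_p,v_q)=1$ would force $Q(v_p+v_q)=1$, i.e.\ collinearity. Hence the algebras $\mathfrak{g}_{S_j}$ commute pairwise and together span $\mathfrak{g}$.

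The Pauli strings in $i\mathcal{P}_n$ are linearly independent, and each $\mathfrak{g}_{S_j}$ is spanned by the Pauli strings of $S_j$ (by part (ii) of the correspondence proposition). The sum is therefore direct, giving
$$\mathfrak{g}=\bigoplus_j \mathfrak{g}_{S_j}.$$
So we may assume $S$ is connected.

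\textbf{Step 2: removing the radical.} For connected $S$, apply \cref{ideals}. If $\equiv$ is trivial, $\mathfrak{g}$ is simple. Otherwise, pick $p_0\equiv q_0$ and set $z=p_0q_0$. Then $\mathfrak{g}=I_z^+\oplus I_z^-$: the two ideals have trivial intersection since $(1+z)(1-z)=0$, and each $p=\tfrac12((1+z)p+(1-z)p)$ lies in their sum. Each summand is isomorphic to $\mathfrak{g}_{S_0}$, where $S_0=S\cap W_0$ for a hyperplane $W_0$ of $V_\mathfrak{g}$ avoiding the radical vector $v_{p_0}+v_{q_0}$.

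I then check that $S_0$ is again a subspace with $\dim V_{S_0}<\dim V_S$, and that it is connected. Connectivity should follow because $S$ is a union of cosets of $\{0,v_{p_0}+v_{q_0}\}$, so adjacency in $S$ projects to adjacency in $S_0$. By induction on $\dim V_S$, $\mathfrak{g}$ is a direct sum of $2^r$ copies of $\mathfrak{g}_{S^*}$, where $S^*$ is connected with trivial $\equiv$. Equivalently, by \cref{ideals}(ii), $V_{S^*}$ has no nonzero vector in the radical of $Q$ that is a difference of two points.

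\textbf{Step 3: identification of the simple pieces.} By Hall's classification theorem, the connected $S^*$ is either $NO(W,Q_W)$ or a (quotient of a) standard embedding of $T(\Omega,\Omega')$. After removing the radical as in Step 2, this becomes:
\begin{itemize}
\item $NO(W,Q_W)$ with $Q_W$ non-degenerate, of odd dimension or of $\pm$-type in even dimension; or
\item $T(\Omega,\emptyset)$ with $|\Omega|=3$ or $|\Omega|\ge5$. The case $|\Omega|=4$ is rewritten as $T(\Omega_1,\Omega_1')$ with $|\Omega_1|=3$ and reduced again.
\end{itemize}
The preceding theorem then identifies $\mathfrak{g}_{S^*}$ as $\mathfrak{su}(2^m)$, $\mathfrak{so}(2^m)$, $\mathfrak{sp}(2^m)$ or $\mathfrak{so}(|\Omega|)$. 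By \cref{samegeometry}, isomorphic geometries give isomorphic algebras, so the identification depends only on the isomorphism type of $S^*$. Collecting summands over all components yields the claimed decomposition.

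\textbf{Main obstacle.} I expect the delicate step to be Step 2's bookkeeping. One must confirm that $S\cap W_0$ is connected and still lies in Hall's list, so the induction stays within the classified cases. One must also confirm that the isotropic-radical reduction for $T(\Omega,\Omega')$ corresponds exactly to deleting $\Omega'$. I would check this directly with the standard embedding, where the radical is $\mathbb{F}_2\Omega'$ (plus possibly $v_0$), and choose $W_0$ to be coordinate hyperplanes complementary to $\Omega'$.
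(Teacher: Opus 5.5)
Your proposal is correct and is essentially the paper's own argument. The paper obtains \cref{classification} by combining \cref{ideals} (a connected $S_\mathfrak{g}$ yields $2^r$ copies of one simple piece) with Hall's classification and the four-case identification theorem. Your Steps 1--2 make explicit what the paper leaves implicit: the direct sum over collinearity components, the splitting $\mathfrak{g}=I_z^+\oplus I_z^-$, and connectivity of $S\cap W_0$ via projection along $v_{p_0}+v_{q_0}$.

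The obstacle you flag is unnecessary. Apply Hall's theorem directly to the final connected $S^*$: triviality of $\equiv$ already excludes a nonzero isotropic radical, $\Omega'\neq\emptyset$, and $|\Omega|=4$ (complementary pairs are $\equiv$-equivalent). The exact size of $\Omega'$ plays no role in this statement. Do add the isolated-point case $|\Omega|=2$, which gives the abelian summand $\mathfrak{so}(2)$.
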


In the next section we provide a more detailed description of the Lie subalgebras of $\mathfrak{su}(2^n)$.

\section{Generators and frustration graphs}
\label{sect:generators}

Let $\mathcal{G}$ be a set of Pauli strings in $\mathfrak{su}(2^n)$.
Then the \emph{frustration graph} (or \emph{anti-commuting graph}) of $\mathcal{G}$ is the graph
with as vertices the elements from $\mathcal{G}$, two vertices $p,q\in \mathcal{G}$ adjacent if and only if $[p,q]\neq 0$
if and only if $f(v_p,v_q)\neq 0$. This graph we denote by $\Gamma_\mathcal{G}$.

The  Lie subalgebra $\mathfrak{g}$ of $\mathfrak{su}(2^n)$ generated by $\mathcal{G}$
equals $\mathfrak{g}_S$, where $S$ is the subspace of $NO(V,Q)$ generated by $\{v_p\mid p\in \mathcal{G}\}$, where $(V,Q)$ is the quadratic space as constructed in \cref{sect:pauli}.

This subspace is isomorphic to $T(\Omega,\Omega')$
for some disjoint set $\Omega$ and $\Omega'$ if and only if $\Gamma_\mathcal{G}$ is the line graph of a multi-graph $\Delta$ (see \cite{cuypersgraphs}).
Here a \emph{multi-graph} $\Delta$ is a graph with possibly multiple edges on two vertices, but no loops. Its \emph{line graph} is the (ordinary) graph with as vertices
the edges of $\Delta$ and two such vertices adjacent if and only if as edges in $\Delta$ they share a single vertex. 

We have the following characterization of frustration graphs which are line graphs of multi-graphs,
see \cite{cuypersgraphs} and also \cite{Chapman2020}:

\begin{theorem}\label{E6theorem}
Suppose $\mathcal{G}$ is a set of  Pauli strings in $\mathfrak{su}(2^n)$ with connected frustration graph $\Gamma_\mathcal{G}$.
Then the following three statements are equivalent:
\begin{enumerate}
\item $\Gamma_\mathcal{G}$ is the line graph of a multi-graph;
\item $\Gamma_\mathcal{G}$ does not contain one of the $32$ graphs
in \cref{E6graphs};
\item for no subset $\mathcal{G}_0$ with $6$ elements of $\mathcal{G}$ 
we have $\{v_p\mid p\in \mathcal{G}_0\}$ generates a subspace $NO(V,Q)$ isomorphic to $NO(W,Q_W)$ where $(W,Q_W)$ is a
non-degenerate $6$-dimensional space of $-$-type;
\item for no subset $\mathcal{G}_0$ with $6$ elements of $\mathcal{G}$ 
we have $\{p\mid p\in \mathcal{G}_0\}$ generates a Lie subalgebra isomorphic to $\mathfrak{sp}(4)$.
\end{enumerate}
\end{theorem}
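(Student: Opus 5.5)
The plan is to prove the cycle (i)$\Rightarrow$(iii)$\Rightarrow$(ii)$\Rightarrow$(i) and to treat (iii)$\Leftrightarrow$(iv) separately as a translation step. Throughout, $S$ denotes the subspace of $NO(V,Q)$ generated by $\{v_p\mid p\in\mathcal{G}\}$. Since $\Gamma_\mathcal{G}$ is connected, $S$ is connected, and the classification theorem of \cref{sect:subspaces} (from \cite{hall}) applies to it.

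\textbf{(iii)$\Leftrightarrow$(iv).} This follows from the correspondence proposition of \cref{sect:Liegeometry} together with the classification theorem of \cref{sect:Lie}. A $6$-element subset generating $NO(W,Q_W)$ with $(W,Q_W)$ non-degenerate of $-$-type and dimension $6$ yields $\mathfrak{sp}(2^3)$. With the paper's indexing this is what \emph{(iv)} calls $\mathfrak{sp}(4)$; I will need to check that convention carefully.

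For the converse, suppose six Pauli strings generate a Lie algebra isomorphic to it. Its geometry is connected, because the algebra is simple, and by \cref{ideals} together with a comparison of dimensions it must be this $NO(W,Q_W)$. The dimension of $\mathfrak{g}_S$ equals $|S|$. In the $6$-dimensional $-$-type case $|S|=36$, and I will verify that no $T(\Omega)$ and no other $NO$-space has $36$ points with isomorphic Lie algebra: $\mathfrak{so}(9)$ has dimension $36$ but is of type $B_4$, not $C_4$.

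\textbf{(i)$\Rightarrow$(iii).} If $\Gamma_\mathcal{G}$ is the line graph of a multigraph $\Delta$ on vertex set $\Omega$, I will realise $S$ inside a $T(\Omega,\Omega')$. Parallel edges are distinguished by labels in $\Omega'$, and the label vectors are adjusted so that $f$ and $Q$ agree; this is the content of \cite{cuypersgraphs}. Every subspace generated by a subset then lies in a $T$-geometry. In $T(\Omega,\Omega')$ any connected subspace is again of $T$-type, so a $6$-generated connected subspace has the form $T(\Omega_0,\Omega_0')$ with $|\Omega_0|\le 7$. Its point count is therefore $\binom{|\Omega_0|}{2}2^{|\Omega_0'|}$, and the corresponding algebra is a sum of copies of $\mathfrak{so}(|\Omega_0|)$. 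None of these equals $NO$ of the $6$-dimensional $-$-type space, by the same $B_4$ versus $C_4$ comparison as above.

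\textbf{(iii)$\Rightarrow$(ii).} For each of the $32$ graphs in \cref{E6graphs}, I will check that any realisation by Pauli strings, meaning any set of non-isotropic vectors with the prescribed $f$-values, spans a $6$-dimensional non-degenerate $-$-type space whose full $NO$-geometry is generated. Since the Gram matrix of $f$ is determined by the graph, non-degeneracy of $f$ can be read off the adjacency matrix over $\mathbb{F}_2$ (it must be invertible). The type is given by the Arf invariant, which can be computed from $Q=1$ on the six generators. Generation of all of $NO(W,Q_W)$ follows because the generated subspace must be of natural type, and it is not of $T$-type since the graph is not a line graph.

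\textbf{(ii)$\Rightarrow$(i).} This is the main obstacle. It requires the forbidden-subgraph characterisation of line graphs of multigraphs; a Beineke-type list adapted to multigraphs via twin vertices has exactly these $32$ minimal graphs, which is the content of \cite{cuypersgraphs}/\cite{Chapman2020}. I will cite it rather than reprove it. If a self-contained argument is required, my first attempt is induction on $|\mathcal{G}|$: remove a vertex whose deletion keeps the graph connected, take a root multigraph for the remainder, and show that the remaining vertex can be attached unless one of the $32$ configurations appears on at most $6$ vertices. These configurations are exactly the $E_6$-type obstructions.
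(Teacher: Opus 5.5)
Your proposal is correct and follows the same route as the paper. The paper also takes the substantive input from \cite{cuypersgraphs}: that a set whose frustration graph is a multigraph line graph generates a $T(\Omega,\Omega')$-type subspace, and the forbidden-subgraph characterisation behind (ii)$\Rightarrow$(i). The other implications are the translations through Hall's classification, the correspondence proposition, \cref{ideals} and point counts that you describe.

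Two small fixes. First, the non-degenerate $6$-dimensional $-$-type space has $36$ points and gives $\mathfrak{sp}(2^2)=\mathfrak{sp}(4)$ of type $C_4$, not $\mathfrak{sp}(2^3)$; the index $\mathfrak{sp}(2^m)$ in the main-text classification is off by one, as the appendix and the Remark show. Second, in (iii)$\Rightarrow$(ii) you need an independent justification that the $32$ graphs are not multigraph line graphs, either cited or checked. Alternatively you can avoid that fact altogether: a $T$-type subspace whose span is a non-degenerate $6$-space must be $T(7)$ or the quotient of $T(8)$, and both are of $+$-type.
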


\begin{remark}
The $32$ graphs in \cref{E6graphs} are precisely the frustration graphs of  sets  of $6$ generators $\mathcal{G}$
that in $V$ generate a non-degenerate subspace of $-$-type.
So, in $\mathfrak{g}$ these $6$ Pauli strings generate a Lie subalgebra isomorphic to $\mathfrak{sp}(2^2)$.
\end{remark}

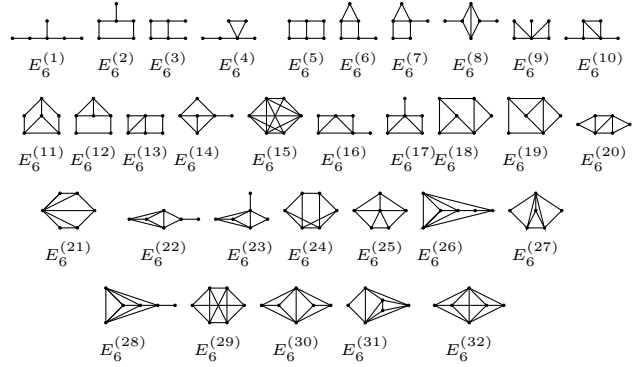
\begin{figure}
{\tiny{
\begin{tikzpicture}[scale=0.23]
\filldraw[color=black] 
(0,0) circle [radius=2pt]
(1,0) circle [radius=2pt]
(2,0) circle [radius=2pt]
(2,1) circle [radius=2pt]
(3,0) circle [radius=2pt]
(4,0) circle [radius=2pt];

\draw (0,0)--(4,0);
\draw (2,0)--(2,1);
\draw (2,-1.5) node {$E_6^{(1)}$};

\filldraw[color=black] 
(5,0) circle [radius=2pt]
(5,1) circle [radius=2pt]
(6,1) circle [radius=2pt]
(6,2) circle [radius=2pt]
(7,1) circle [radius=2pt]
(7,0) circle [radius=2pt];

\draw (7,0)--(5,0)--(5,1)--(7,1)--(7,0);
\draw (6,1)--(6,2);
\draw (6,-1.5) node {$E_6^{(2)}$};

\filldraw[color=black] 
(8,0) circle [radius=2pt]
(9,0) circle [radius=2pt]
(10,0) circle [radius=2pt]
(8,1) circle [radius=2pt]
(9,1) circle [radius=2pt]
(10,1) circle [radius=2pt];

\draw (8,0)--(10,0);
\draw (8,0)--(8,1)--(10,1);
\draw (9,0)--(9,1);
\draw (9,-1.5) node {$E_6^{(3)}$};

\filldraw[color=black] 
(11,0) circle [radius=2pt]
(12,0) circle [radius=2pt]
(13,0) circle [radius=2pt]
(14,0) circle [radius=2pt]
(12.5,1) circle [radius=2pt]
(13.5,1) circle [radius=2pt];

\draw (11,0)--(14,0);
\draw (13,0)--(12.5,1)--(13.5,1)--(13,0);
\draw (13,-1.5) node {$E_6^{(4)}$};

\filldraw[color=black] 
(16,0) circle [radius=2pt]
(17,0) circle [radius=2pt]
(18,0) circle [radius=2pt]
(16,1) circle [radius=2pt]
(17,1) circle [radius=2pt]
(18,1) circle [radius=2pt];

\draw (16,0)--(18,0)--(18,1)--(16,1)--(16,0);
\draw (17,0)--(17,1);
\draw (17,-1.5) node {$E_6^{(5)}$};

\filldraw[color=black] 
(19,0) circle [radius=2pt]
(20,0) circle [radius=2pt]
(19.5,2) circle [radius=2pt]
(20,1) circle [radius=2pt]
(21,0)  circle [radius=2pt]
(19,1) circle [radius=2pt];

\draw (19,0)--(20,0)--(20,1)--(19,1)--(19,0);
\draw (19,1)--(19.5,2)--(20,1);
\draw (20,0)--(21,0);
\draw (20,-1.5) node {$E_6^{(6)}$};

\filldraw[color=black] 
(22,0) circle [radius=2pt]
(23,0) circle [radius=2pt]
(22.5,2) circle [radius=2pt]
(23,1) circle [radius=2pt]
(24,1)  circle [radius=2pt]
(22,1) circle [radius=2pt];

\draw (22,0)--(23,0)--(23,1)--(22,1)--(22,0);
\draw (22,1)--(22.5,2)--(23,1);
\draw (23,1)--(24,1);
\draw (23,-1.5) node {$E_6^{(7)}$};

\filldraw[color=black] 
(25,1) circle [radius=2pt]
(26,1) circle [radius=2pt]
(26.5,2) circle [radius=2pt]
(26.5,0) circle [radius=2pt]
(27,1)  circle [radius=2pt]
(28,1) circle [radius=2pt];

\draw (25,1)--(26,1)--(26.5,2)--(26.5,0)--(27,1)--(28,1);
\draw (26,1)--(26.5,0);
\draw (26.5,2)--(27,1);
\draw (26.5,-1.5) node {$E_6^{(8)}$};

\filldraw[color=black] 
(29,0) circle [radius=2pt]
(29,1) circle [radius=2pt]
(30,0) circle [radius=2pt]
(31,0) circle [radius=2pt]
(31,1)  circle [radius=2pt]
(30,1) circle [radius=2pt];

\draw (29,0)--(31,0);
\draw (29,0)--(29,1)--(30,0)--(31,1)--(31,0);
\draw (30,0)--(30,1);
\draw (30,-1.5) node {$E_6^{(9)}$};

\filldraw[color=black] 
(32,0) circle [radius=2pt]
(33,0) circle [radius=2pt]
(34,0) circle [radius=2pt]
(35,0) circle [radius=2pt]
(33,1)  circle [radius=2pt]
(34,1) circle [radius=2pt];

\draw (32,0)--(35,0);
\draw (33,0)--(33,1)--(34,1)--(34,0)--(33,1);
\draw (34,-1.5) node {$E_6^{(10)}$};

\end{tikzpicture}

\medskip

\begin{tikzpicture}[scale=0.23]

\filldraw[color=black] 
(0,0) circle [radius=2pt]
(2,0) circle [radius=2pt]
(0,1) circle [radius=2pt]
(2,1) circle [radius=2pt]
(1,1)  circle [radius=2pt]
(1,2) circle [radius=2pt];

\draw (0,0)--(2,0)--(2,1)--(1,2)--(0,1)--(0,0);
\draw (0,0)--(1,1)--(2,0);
\draw (1,1)--(1,2);
\draw (1,-1.5) node {$E_6^{(11)}$};

\filldraw[color=black] 
(3,0) circle [radius=2pt]
(5,0) circle [radius=2pt]
(3,1) circle [radius=2pt]
(5,1) circle [radius=2pt]
(4,1)  circle [radius=2pt]
(4,2) circle [radius=2pt];

\draw (3,0)--(5,0)--(5,1)--(4,2)--(3,1)--(3,0);
\draw (4,1)--(4,2);
\draw (3,1)--(5,1);
\draw (4,-1.5) node {$E_6^{(12)}$};

\filldraw[color=black] 
(6,0) circle [radius=2pt]
(7,0) circle [radius=2pt]
(8,0) circle [radius=2pt]
(6,1) circle [radius=2pt]
(7,1)  circle [radius=2pt]
(8,1) circle [radius=2pt];

\draw (6,0)--(8,0)--(8,1)--(6,1)--(6,0)--(7,1)--(7,0);
\draw (7,-1.5) node {$E_6^{(13)}$};
\filldraw[color=black] 
(9,1) circle [radius=2pt]
(10,1) circle [radius=2pt]
(11,1) circle [radius=2pt]
(12,1) circle [radius=2pt]
(10,2)  circle [radius=2pt]
(10,0) circle [radius=2pt];

\draw (12,1)--(9,1)--(10,0)--(11,1);
\draw (10,0)--(10,1);
\draw (9,1)--(10,2)--(11,1);
\draw (10,-1.5) node {$E_6^{(14)}$};



\filldraw[color=black] 
(13,1) circle [radius=2pt]
(14,0) circle [radius=2pt]
(15,0) circle [radius=2pt]
(16,1) circle [radius=2pt]
(15,2)  circle [radius=2pt]
(14,2) circle [radius=2pt];

\draw (13,1)--(14,0)--(15,0)--(16,1)--(15,2)--(14,2)--(13,1);
\draw (13,1)--(16,1);
\draw (14,0)--(15,2);
\draw (15,0)--(14,2);
\draw (14,0)--(14,2);

\draw (13,1)--(15,0);
\draw (16,1)--(14,2);
\draw (14,0)--(16,1);

\draw (14.5,-1.5) node {$E_6^{(15)}$};

\filldraw[color=black] 
(17,1) circle [radius=2pt]
(18,1) circle [radius=2pt]
(19,1) circle [radius=2pt]
(17,0) circle [radius=2pt]
(19,0)  circle [radius=2pt]
(20,0) circle [radius=2pt];

\draw (17,1)--(19,1);
\draw (17,0)--(20,0);
\draw (17,0)--(17,1);
\draw (17,0)--(18,1);
\draw (19,0)--(18,1);
\draw (19,0)--(19,1);

\draw (18.5,-1.5) node {$E_6^{(16)}$};

\filldraw[color=black] 
(21,1) circle [radius=2pt]
(22,1) circle [radius=2pt]
(23,1) circle [radius=2pt]
(21,0) circle [radius=2pt]
(23,0)  circle [radius=2pt]
(22,2) circle [radius=2pt];

\draw (21,1)--(23,1);
\draw (21,0)--(23,0);
\draw (21,0)--(21,1);
\draw (21,0)--(22,1);
\draw (23,0)--(22,1);
\draw (22,1)--(22,2);
\draw (23,0)--(23,1);
\draw (22.5,-1.5) node {$E_6^{(17)}$};

\filldraw[color=black] 
(24,2) circle [radius=2pt]
(24,0) circle [radius=2pt]
(26,0) circle [radius=2pt]
(26,2) circle [radius=2pt]
(25,1)  circle [radius=2pt]
(27,1) circle [radius=2pt];

\draw (24,2)--(24,0)--(26,0)--(26,2)--(24,2);
\draw (24,2)--(26,0);
\draw (24,0)--(25,1);
\draw (26,2)--(27,1)--(26,0);

\draw (25,-1.5) node {$E_6^{(18)}$};

\filldraw[color=black] 
(28,2) circle [radius=2pt]
(28,0) circle [radius=2pt]
(30,0) circle [radius=2pt]
(30,2) circle [radius=2pt]
(29,1)  circle [radius=2pt]
(31,1) circle [radius=2pt];

\draw (28,2)--(28,0)--(30,0)--(30,2)--(28,2);
\draw (28,2)--(30,0);
\draw (30,2)--(29,1);
\draw (30,2)--(31,1)--(30,0);

\draw (29,-1.5) node {$E_6^{(19)}$};

\filldraw[color=black] 
(32,0.5) circle [radius=2pt]
(33,0) circle [radius=2pt]
(34,0) circle [radius=2pt]
(35,0.5) circle [radius=2pt]
(33,1)  circle [radius=2pt]
(34,1) circle [radius=2pt];

\draw (32,0.5)--(33,0)--(34,0)--(35,0.5)--(34,1)--(33,1)--(32,0.5);
\draw (33,0)--(34,1);
\draw (33,0)--(33,1);
\draw (34,0)--(34,1);
\draw (33.5,-1.5) node {$E_6^{(20)}$};

\end{tikzpicture}

\medskip

\begin{tikzpicture}[scale=0.23]
\filldraw[color=black] 
(-5,1) circle [radius=2pt]
(-4,0) circle [radius=2pt]
(-3,0) circle [radius=2pt]
(-2,1) circle [radius=2pt]
(-3,2)  circle [radius=2pt]
(-4,2) circle [radius=2pt];

\draw (-5,1)--(-4,0)--(-3,0)--(-2,1)--(-3,2)--(-4,2)--(-5,1);
\draw (-5,1)--(-3,0);
\draw (-5,1)--(-2,1);
\draw (-5,1)--(-3,2);
\draw (-5,1)--(-4,2);

\draw (-3.5,-1.5) node {$E_6^{(21)}$};

\filldraw[color=black] 
(0,0.5) circle [radius=2pt]
(1,0.5) circle [radius=2pt]
(2,0) circle [radius=2pt]
(2,1) circle [radius=2pt]
(3,0.5)  circle [radius=2pt]
(4,0.5) circle [radius=2pt];

\draw (0,0.5)--(1,0.5)--(2,0)--(2,1)--(1,0.5);
\draw (0,0.5)--(2,1)--(3,0.5)--(4,0.5);
\draw (0,0.5)--(2,0)--(3,0.5);

\draw (2,-1.5) node {$E_6^{(22)}$};

\filldraw[color=black] 
(5,0.5) circle [radius=2pt]
(6,0.5) circle [radius=2pt]
(7,0) circle [radius=2pt]
(7,1) circle [radius=2pt]
(8,0.5)  circle [radius=2pt]
(7,2) circle [radius=2pt];

\draw (5,0.5)--(6,0.5)--(7,0)--(7,1)--(6,0.5);
\draw (5,0.5)--(7,1)--(8,0.5);
\draw (7,1)--(7,2);
\draw (5,0.5)--(7,0)--(8,0.5);

\draw (7,-1.5) node {$E_6^{(23)}$};
\filldraw[color=black] 
(9,1) circle [radius=2pt]
(10,0) circle [radius=2pt]
(11,0) circle [radius=2pt]
(12,1) circle [radius=2pt]
(11,2)  circle [radius=2pt]
(10,2) circle [radius=2pt];

\draw (9,1)--(10,0)--(11,0)--(12,1)--(11,2)--(10,2)--(9,1);
\draw (10,0)--(10,2);
\draw (11,0)--(11,2);
\draw (9,1)--(11,0);
\draw (12,1)--(10,0);

\draw (10.5,-1.5) node {$E_6^{(24)}$};
\filldraw[color=black] 
(13,1) circle [radius=2pt]
(14,0) circle [radius=2pt]
(15,0) circle [radius=2pt]
(16,1) circle [radius=2pt]
(14.5,2)  circle [radius=2pt]
(14.5,1) circle [radius=2pt];

\draw (13,1)--(14,0)--(15,0)--(16,1)--(14.5,2)--(13,1)--(14.5,1)--(14,0);
\draw (15,0)--(14.5,1)--(16,1);
\draw (14.5,2)--(14.5,1);

\draw (14.5,-1.5) node {$E_6^{(25)}$};

\filldraw[color=black] 
(17,0) circle [radius=2pt]
(17,2) circle [radius=2pt]
(18,1) circle [radius=2pt]
(19,1) circle [radius=2pt]
(20,1)  circle [radius=2pt]
(21,1) circle [radius=2pt];

\draw (17,0)--(17,2)--(21,1)--(18,1)--(17,0)--(18,1)--(17,2);
\draw (21,1)--(17,0)--(19,1)--(17,2);

\draw (18,-1.5) node {$E_6^{(26)}$};

\filldraw[color=black] 
(22,1) circle [radius=2pt]
(23,0) circle [radius=2pt]
(24,0) circle [radius=2pt]
(25,1) circle [radius=2pt]
(23.5,2)  circle [radius=2pt]
(23.5,1) circle [radius=2pt];

\draw (22,1)--(23,0)--(24,0)--(25,1)--(23.5,2)--(22,1);

\draw (23.5,2)--(24,0)--(23.5,1);
\draw (23.5,2)--(23,0)--(23.5,1);

\draw (23.5,2)--(23.5,1);

\draw (23.5,-1.5) node {$E_6^{(27)}$};

\end{tikzpicture}

\medskip

\begin{tikzpicture}[scale=0.23]

\filldraw[color=black] 
(-5,0) circle [radius=2pt]
(-5,2) circle [radius=2pt]
(-4,1) circle [radius=2pt]
(-3,1) circle [radius=2pt]
(-2,1)  circle [radius=2pt]
(-1,1) circle [radius=2pt];

\draw (-5,0)--(-5,2)--(-4,1)--(-5,0);
\draw (-4,1)--(-1,1);
\draw (-5,0)--(-4,1)--(-5,2);
\draw (-5,0)--(-3,1)--(-5,2);
\draw (-5,0)--(-2,1)--(-5,2);

\draw (-4,-1.5) node {$E_6^{(28)}$};

\filldraw[color=black] 
(0,1) circle [radius=2pt]
(1,0) circle [radius=2pt]
(2,0) circle [radius=2pt]
(3,1) circle [radius=2pt]
(2,2)  circle [radius=2pt]
(1,2) circle [radius=2pt];

\draw (0,1)--(1,0)--(2,0)--(3,1)--(2,2)--(1,2)--(0,1);
\draw (1,0)--(1,2);
\draw (2,0)--(2,2);
\draw (1,2)--(2,0);
\draw (2,2)--(1,0);
\draw (0,1)--(3,1);
\draw (1.5,-1.5) node {$E_6^{(29)}$};

\filldraw[color=black] 
(4,1) circle [radius=2pt]
(5,1) circle [radius=2pt]
(6,2) circle [radius=2pt]
(6,0) circle [radius=2pt]
(7,1)  circle [radius=2pt]
(8,1) circle [radius=2pt];

\draw (4,1)--(6,2)--(8,1)--(6,0);
\draw (4,1)--(5,1)--(6,2)--(7,1)--(8,1);
\draw (5,1)--(6,0)--(7,1);
\draw (4,1)--(6,0)--(6,2);

\draw (6,-1.5) node {$E_6^{(30)}$};

\filldraw[color=black] 
(9,1) circle [radius=2pt]
(10,2) circle [radius=2pt]
(10,0) circle [radius=2pt]
(11,0.7) circle [radius=2pt]
(11,1.3)  circle [radius=2pt]
(12.5,1) circle [radius=2pt];

\draw (9,1)--(10,2)--(12.5,1)--(10,0)--(9,1);
\draw (10,0)--(10,2)--(11,1.3)--(11,0.7)--(10,0)--(11,1.3);
\draw (11,1.3)--(12.5,1)--(11,0.7);

\draw (10,-1.5) node {$E_6^{(31)}$};

\filldraw[color=black] 
(14,1) circle [radius=2pt]
(15,1) circle [radius=2pt]
(16,2) circle [radius=2pt]
(16,0) circle [radius=2pt]
(17,1)  circle [radius=2pt]
(18,1) circle [radius=2pt];

\draw (14,1)--(16,2)--(18,1)--(16,0);
\draw (14,1)--(15,1)--(16,2)--(17,1)--(18,1);
\draw (15,1)--(16,0)--(17,1);
\draw (14,1)--(16,0)--(16,2);
\draw (15,1)--(17,1);
\draw (16,-1.5) node {$E_6^{(32)}$};

\end{tikzpicture}
}}

\caption{The $32$ forbidden graphs.}
\label{E6graphs}
\end{figure}

This has the following implication (see also \cite{fullclass}):

\begin{theorem}\label{generators}
Let $\mathcal{G}$ be a set of  Pauli strings in $\mathfrak{su}(2^n)$ with $n\geq 3$
with connected frustration graph $\Gamma$.
Let $W$ be the subspace of $V$ spanned by $\{v_p\mid p\in \mathcal{G}\}$,
and $Q_W$ the quadratic form $Q$ restricted to $W$ and $R$ the radical of $Q_W$ in $W$ with dimension $r$. 
Then for $\mathfrak{g}$ the Lie subalgebra generated by $\mathcal{G}$ we  have one of the following:

\begin{enumerate}
\item
$\mathfrak{g}$ is isomorphic to  $\bigoplus_{i=1}^{2^r}\mathfrak{su}(2^k)$ for $k\geq 3$ if and only 
if  $\Gamma$ contains one of the subgraphs of \cref{E6graphs}
and $W/\mathrm{Rad}(Q_W)$ has odd dimension $2k+1$.
\item
$\mathfrak{g}$ is isomorphic to  $\bigoplus_{i=1}^{2^r}\mathfrak{so}(2^k)$  for $k\geq 3$
if $\Gamma$  contains one of the subgraphs of \cref{E6graphs}
and $W/\mathrm{Rad}(Q_w)$ is of $+$-type with dimension $2k$.
\item
$\mathfrak{g}$ is isomorphic to  $\bigoplus_{i=1}^{2^r}\mathfrak{sp}(2^k)$  for $n\geq 3$
if  $\Gamma$  contains one of the subgraphs of \cref{E6graphs} and
$W/\mathrm{Rad}(Q_w)$ is of $-$-type with dimension $2k$.
\item
$\mathfrak{g}$ is isomorphic to  $\bigoplus_{i=1}^{2^r}\mathfrak{so}(k)$  for $k\geq 3$
if and only if $\Gamma$  is the line graph of a multi-graph on $k$ vertices,
with $4\not\mid k$.
\item $\mathfrak{g}$ is isomorphic to  $\bigoplus_{i=1}^{2^{r'}}\mathfrak{so}(k)$  for $k\geq 3$ 
if and only if $\Gamma$   is the line graph of a multi-graph on $k$ vertices with $ 4\mid k$,
where $r'=r-1$ or $r$.
\end{enumerate}
\end{theorem}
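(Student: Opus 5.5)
We combine the preceding results: the Hall classification of connected subspaces of $NO(V,Q)$, \cref{ideals} on the reduction modulo the radical, the theorem identifying each geometry with a Lie algebra, and \cref{E6theorem} characterizing line graphs of multi-graphs. We proceed in four steps.

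First, let $S$ be the subspace of $NO(V,Q)$ generated by $\{v_p\mid p\in\mathcal{G}\}$, so that $\mathfrak{g}=\mathfrak{g}_S$. Since $\Gamma$ is connected and every line through two collinear points of $S$ stays in $S$, the subspace $S$ is connected. Moreover $V_S=W$, because $S\subseteq W$ and $S$ contains a spanning set of $W$. By the Hall classification, either $S=NO(W,Q_W)$, or $S$ is a (quotient of a) standard embedding of $T(\Omega,\Omega')$ in $W$.

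Second, we split these two cases using \cref{E6theorem}. The forbidden subgraphs occur in $\Gamma$ if and only if six of the generators span a non-degenerate $6$-space of $-$-type. In the $T(\Omega,\Omega')$ case this cannot happen, since $T$ contains no $NO$ of a $6$-dimensional $-$-type space: it is a line graph. Hence $\Gamma$ contains a forbidden graph exactly when $S$ is natural and not of $T$-type. For natural $S$ we then apply \cref{ideals} repeatedly. Each nonzero isotropic radical vector $v_{p_0}+v_{q_0}\in R$ splits $\mathfrak{g}$ into two isomorphic ideals, each isomorphic to $\mathfrak{g}_{S_0}$ with $S_0=NO(W_0,Q_{W_0})$ for a hyperplane $W_0$ complementing that vector. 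Inducting on $r$ gives $2^r$ copies of $\mathfrak{g}_{NO(\overline W,\overline Q)}$, where $\overline W=W/R$ carries a non-degenerate form. At this point one must check that $S$ is still connected after each reduction, and that the radical condition in \cref{ideals}(ii) matches $R$; I would verify both using the equivalence $\equiv$ on collinearity. The classification theorem of \cref{sect:Lie} then gives $\mathfrak{su}(2^k)$, $\mathfrak{so}(2^k)$ or $\mathfrak{sp}(2^k)$ according to whether $\overline W$ is odd-dimensional, of $+$-type, or of $-$-type. The hypothesis $n\ge 3$, together with the presence of an $\mathfrak{sp}(4)$-generating sextuple, forces $\dim\overline W\ge 6$, so $k\ge 3$. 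This avoids the small coincidences of the example with $\mathfrak{so}(3),\mathfrak{so}(5),\mathfrak{so}(6)$.

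Third, we treat the line-graph case, where $\Gamma$ is the line graph of a multi-graph $\Delta$ on vertex set $\Omega$ with $|\Omega|=k$. The generator attached to an edge $\{a,b\}$ corresponds to $\{a,b\}$ together with an $\Omega'$-component recording parallel-edge multiplicities. Connectivity of $\Gamma$ gives connectivity of $\Delta$, so $S=T(\Omega,\Omega')$ with $|\Omega|=k$. The radical of $Q_W$ is then $\mathbb{F}_2\Omega'$ when $4\nmid k$. When $4\mid k$ it is $\mathbb{F}_2\Omega'\oplus\langle v_0\rangle$, and that is exactly the situation in which the quotient embedding may occur. Applying \cref{ideals} to the isotropic part of the radical, and then case (iv) of the Lie-algebra theorem, yields $2^{|\Omega'|}$ copies of $\mathfrak{so}(k)$. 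This gives $r=|\Omega'|$ if $4\nmid k$. If $4\mid k$, then $r'=r-1$ for the standard embedding, where $v_0$ contributes to $R$ but not to $\Omega'$, and $r'=r$ for the quotient embedding.

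Fourth, for the ``only if'' directions we note that the three $NO$-types are distinguished by $\dim\overline W$ and its type, and that these are invariants of $\mathcal{G}$. We then use the fact that $\mathfrak{so}(k)$ with $k\ge 7$, $k\neq 8$, is not isomorphic to any $\mathfrak{su}(2^m)$ or $\mathfrak{sp}(2^m)$, by comparing dimensions or root systems; the remaining small $k$ are excluded by $n\ge3$ and $k\geq 3$ up to the listed isomorphisms.

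I expect the main obstacle to be the bookkeeping in the third step: matching $r$ with $|\Omega'|$, and deciding whether $v_0$ is isotropic and lies in $W$ or is quotiented out. For this I would compute $Q$ on the span of the weight-$2$ vectors of the standard embedding explicitly.
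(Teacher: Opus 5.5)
Your Steps 1--3 follow the paper's route. The paper gives no separate proof; it reads the theorem off from five ingredients: Hall's classification, the line-graph characterization of Theorem~\ref{E6theorem}, Proposition~\ref{ideals} (applied to those $u\in\mathrm{Rad}(Q_W)$ of the form $v_{p_0}+v_{q_0}$ with $v_{p_0},v_{q_0}\in S$), the identification of the reduced geometries, and the remark explaining $r'\in\{r-1,r\}$ when $4\mid k$. The checks you defer are routine. For natural $S$, every $0\neq u\in R$ has this form, because $x+u\in S$ for $x\in S$. Also $S\cap W_0$ is connected: it is the image of $S$ under $x\mapsto x$ or $x+u$, which preserves collinearity since $f(x+u,y)=f(x,y)$. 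One harmless inaccuracy: $\mathbb{F}_2\Omega'$ does not record parallel-edge multiplicities. It is spanned by the sums $\sum_{e\in C}v_e$ over cycles $C$ of $\Delta$; in the QAOA cycle example $\Delta$ is a simple cycle and $|\Omega'|=1$. Since you read off $|\Omega'|$ from the radical, nothing depends on this.

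The genuine gap is Step~4. For the converses you compare $\mathfrak{so}(k)$ only with $\mathfrak{su}(2^m)$ and $\mathfrak{sp}(2^m)$, but natural subspaces of $+$-type produce $\mathfrak{so}(2^m)$ itself. If $\Gamma$ contains a forbidden subgraph, six generators span a non-degenerate $-$-type $6$-space meeting $\mathrm{Rad}(f_W)$ trivially, and this space embeds in $\overline W$. So a $+$-type $\overline W$ has dimension $2m\geq 8$, and $\mathfrak g\cong\bigoplus_{i=1}^{2^r}\mathfrak{so}(2^m)$ with $2^m\in\{16,32,\dots\}$, a multiple of $4$. Hence the ``only if'' of (v) cannot be proved, because it is false. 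Take $n=4$ and let $\mathcal G$ be the $120$ strings $iP$ with an odd number of $Y$'s. They form a basis of $\mathfrak{so}(16)$, and $W=\langle x_j,z_j\mid j\leq 4\rangle$ is $8$-dimensional of $+$-type with $r=0$. $\Gamma$ is the connected collinearity graph of $NO(W,Q_W)$, and $\mathcal G$ contains six strings generating $\mathfrak{sp}(4)$, so $\Gamma$ is not a line graph. Yet $\mathfrak g\cong\mathfrak{so}(16)$ with $4\mid 16$ and $r'=r$.

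What Step~4 can establish, once all natural-case algebras are listed (types $A_{2^k-1}$ with $k\geq 3$, $D_{2^{m-1}}$ with $m\geq 4$, and $C$ of rank at least $4$), is the converse of (i), the converse of (iv), and the converse of (v) only for $k\notin\{16,32,\dots\}$. This has to be done with root systems, not dimensions: $\mathfrak{so}(9)$ and the algebra of a $6$-dimensional $-$-type space both have dimension $36$. Your exception $k\neq 8$ is beside the point. $\mathfrak{so}(8)$ is isomorphic to no $\mathfrak{su}$ or $\mathfrak{sp}$, and it cannot arise from a natural subspace containing a $-$-type $6$-space.
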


\begin{remark}\label{nonstandard}
Notice that in the last case we have two possibilities for $r'$.
This can be seen as follows.
Suppose $\Gamma$ is the line graph of the multi-graph $\Delta$ with vertex set $\Omega$ of size $k>4$ but with   $4\mid k$.

Let $\Delta_0$ be a spanning tree of $\Delta$.
The edges of $\Delta_0$ are vertices of $\Gamma$ generating a subspace
$S_0$ of $S_\mathfrak{g}$ isomorphic to $T(\Omega)$ embedded in a subspace $V_0$ of $V$.

If this embedding is the quotient of the standard embedding, then $\mathfrak{g}$ is isomorphic to  $\bigoplus_{i=1}^{2^{r}}\mathfrak{so}(k)$, as as each dimension of $\mathrm{Rad}(Q_W)$ doubles 
the number if copies of $\mathfrak{so}(k)$.

However, if this embedding is  the standard embedding and $V_0\cap S_\mathfrak{g}=S_0$ , then $\mathfrak{g}$ is isomorphic to  $\bigoplus_{i=1}^{2^{r-1}}\mathfrak{so}(k)$,
as the radical of $Q_{V_0}$ does not double the number of ideals.
But, inside $V_0$, there is still room for  extra points of $S$.
Indeed, if $S_0$ is a proper subset of $V_0\cap S_\mathfrak{g}$, then 
$\mathfrak{g}_{V_0\cap S_\mathfrak{g}}$ is $\mathfrak{so}(k)\oplus \mathfrak{so}(k)$
and  $\mathfrak{g}$ is isomorphic to  $\bigoplus_{i=1}^{2^{r}}\mathfrak{so}(k)$.
\end{remark}

\begin{example}[Dynamical Lie algebras related to Quantum Approximate Optimization Algorithm]
\label{freeLie}
Let $\Gamma$ be a connected graph with vertex set $\{1,\dots,n\}$. If vertex $j$ is adjacent to $k$, then we write $j\sim k$.

Let $\mathfrak{g}$ be generated by
$$iX_\ell,iZ_jZ_{k}$$
where $ \  1\leq \ell \leq n,\text{ and}\ \ 1\leq j,k\leq n-1, j\sim k$.

This is the dynamical Lie subalgebra of $\mathfrak{su}(2^n)$ related to a free Ansatz to the  Quantum Approximate Optimization Algorithm for finding an approximate solution to the maxcut problem for the graph $\Gamma$. See \cite{freeLie}.

The frustration graph on the Pauli strings is 
then the incidence graph of $\Gamma$, i.e., the graph with as vertices the vertices and edges of 
$\Gamma$ and adjacency defined by incidence of  such a vertex with an edge.  

If there exists a vertex $v$ with valency three in $\Gamma$, i.e, $\Gamma$ contains a claw,
then the incidence graph will contain a subdiagram $E_6$. See \cref{fig:incidence}.

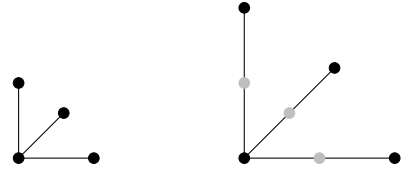
\begin{figure}
\begin{tikzpicture}
\filldraw[color=black]
(0,0) circle [radius=2pt]
(0.6,0.6) circle [radius=2pt]
(1,0) circle [radius=2pt]
(0,1) circle [radius=2pt];

\draw (0,0)--(1,0);
\draw (0,0)--(0,1);
\draw (0,0)--(0.6,0.6);

\filldraw[color=black]
(3,0) circle [radius=2pt]

(5,0) circle [radius=2pt]

(4.2,1.2) circle [radius=2pt]

(3,2) circle [radius=2pt];

\draw (3,0)--(5,0);
\draw (3,0)--(4.2,1.2);
\draw (3,0)--(3,2);

\filldraw[color=lightgray]
(3.6,0.6) circle [radius=2pt]
(3,1) circle [radius=2pt]
(4,0) circle [radius=2pt];

\end{tikzpicture}
\caption{A claw and its incidence graph
with edges in gray.}
\label{fig:incidence}
\end{figure}
If not, then $\Gamma$ is a path or cycle
and $\mathfrak{g}$ is isomorphic to a direct sum of copies of
$\mathfrak{so}(2k)$ for some fixed $k$.

Suppose $\Gamma$ is a path with $n$ vertices, then its incidence graph
is a path with $2n-1$ vertices.
But then $S_{\mathfrak{g}}$ is isomorphic to $T(\Omega)$ for some set $\Omega$
of size $2n$, implying $\mathfrak{g}$ to be isomorphic to $\mathfrak{so}(2n)$.

Now consider the Lie subalgebra $\hat{\mathfrak{g}}$ generated by $\mathfrak{g}$
and the additional generator $iZ_1Z_n$.
So, we consider the dynamical Lie algebra of the cycle graph on $n$ vertices from \cite{freeLie}. As frustration graph on the $2n+1$
generators we obtain the cycle graph on $2n+1$
 vertices, which is (isomorphic to) its own line graph.

The extra generator $iZ_1Z_n$ maps to the vector
$$z_{1,n}=z_1+z_n+r\in V,$$
which is not in $S_\mathfrak{g}$.
(Here we use the basis $$x_1,\dots, x_n,z_1,\dots, z_n,r$$ as in \cref{basis}.)
Indeed, with $$z_{j,j+1}=v_{iZ_jZ_{j+1}}=z_j+z_{j+1}+r$$ and $\delta=0$ if $n$ is odd, and $1$ otherwise,  we find
$$\begin{array}{ll}
z_{1,n} &=z_{1,2}+z_{2,3}+\cdots +z_{n-1,n}+\delta r\\
        &=(x_1+r)+z_{1,2}+\cdots+ (x_{n-1}+r)+\\
        &+z_{n-1,n}+r+x_1+\cdots +x_n\\
        &=v_{iX_1Z_1Z_2\cdots X_{n-1}Z_{n-1}Z_n}+x_1+\cdots +x_n.\\
\end{array}$$

The vector $x_1+x_2+\dots+x_n$ is isotropic and perpendicular to all vectors $x_i$ and $z_{j,j+1}$.
So, $iZ_1Z_n$ and $iX_1Z_1Z_2X_2\cdots Z_{n-1}Z_n$ are in relation $\equiv$.
We find $S_{\hat{\mathfrak{g}}}$ to be isomorphic to $T(\Omega,\Omega')$
where $|\Omega|=2n$ and $|\Omega'|=1$.
In particular, we find the Lie subalgebra $\hat{\mathfrak{g}}$ to be isomorphic to  $\mathfrak{so}(2n)\oplus \mathfrak{so}(2n)$.

Now suppose $\Gamma$ is a connected graph on $n$ vertices 
which is not a path nor a cycle and the generators
are $iX_j$ with $1\leq j\leq n$ and $iZ_jZ_k$ where $\{j,k\}$ is an edge,
then let $\mathfrak{g}$ be the Lie subalgebra  generated by these Pauli strings.
Clearly with respect to the basis $x_j,z_j,r=i\one$
we find that all elements of $S_\mathfrak{g}$ are of the form 
$\sum_{i=1}^n\alpha_i x_i+\sum_{j=1}^n\beta_jz_j+\gamma r$
with $\sum_{i=1}^n\beta_i+\gamma=0$.

If the graph is bipartite with the vertex set split into $I\cup J=\{1,\dots,n\}$
we even find them to also satisfy  $\sum_{i=1}^n\alpha_i+\sum_{i\in I}\beta_i+\gamma=0$.

As for such  a bipartite graph we find $S_\mathfrak{g}$
to contain the vectors
$$\begin{array}{ll}
x_i+r & \text{ with } 1\leq i\leq n\\
z_i+z_j+r &\text{ with } i\in I, j\in J \text{ and } i\sim j\\
\end{array}$$
we find $V_{S_\mathfrak{g}}$ of codimension  $2$.

Indeed, if $i_1, i_2\in I$ are at distance $2$ with common neighbour $j$, 
then $$z_{i_1}+z_{i_2}=(z_{i_1}+z_j+r)+(z_{i_2}+z_j+r)\in V_{S_\mathfrak{g}}$$
and hence by induction and symmetry of the argument 
$$z_{k_1}+z_{k_2}\in V_{S_\mathfrak{g}}$$
for any $k_1,k_2\in J$ and $k_1,k_2\in J$.
So, $$\langle z_i+z_j+r \mid i\in I, j\in J \text{ and } i\sim j\rangle$$
has dimension at least $n-2$.

Similarly, we find that in case $\Gamma$ is not bipartite
$V_{S_\mathfrak{g}}$ is of codimension $1$
and equals the hyperplane
$\sum_{i=1}^n\beta_i+\gamma=0$, which is the hyperplane
$(x_1+\cdots x_n)^\perp$.
The quadratic form $Q$ restricted to this hyperplane has
as radical $\langle x_1+\cdots +x_n\rangle$.
So, for a non-bipartite $\Gamma$ we find
$\mathfrak{g}$ to be isomorphic to $\mathfrak{su}(2^{n-1})\oplus \mathfrak{su}(2^{n-1})$.
If $\Gamma$ is bipartite, we have $V_{S_\mathfrak{g}}$ a codimension $2$
space contained in $(x_1+\cdots +x_n)^\perp$, and find
$\mathfrak{g}$ to be isomorphic to one of
$$\begin{array}{l}
\mathfrak{sp}(2^{n-2})\oplus \mathfrak{sp}(2^{n-2})\\
\mathfrak{s0}(2^{n-1})\oplus \mathfrak{so}(2^{n-1})\\
\mathfrak{su}(2^{n-1})
\end{array}$$
depending on the sizes of $I$ and $J$ in the partition $\{1,\dots,n\}=I\cup J$.
See \cite{freeLie}.

\end{example}

\section{An algorithm to determine the isomorphism type of a Pauli Lie algebra} 
\label{sect:alg}

The above  \cref{generators} can also be used to determine the isomorphism type of the 
Lie subalgebra $\mathfrak{g}$ of $\mathfrak{su}(2^n)$ generated by a set $\mathcal{G}$
of Pauli strings.

Suppose $\mathcal{G}$ is a  set of Pauli strings in $\mathfrak{su}(2^n)$ and let $(V,Q)$ the quadratic space pf \cref{sect:pauli}.
Then  the following steps will provide the isomorphism type of the subgeometry $S$ of $NO(V,Q)$ generated by the elements 
$\{v_p\mid p\in \mathcal{G}\}$ and hence also the isomorphism type of  the  Lie subalgebra  of $\mathfrak{su}(2^n)$  generated by $\mathcal{G}$:
\begin{enumerate}[1.] 
\item {\bf Reduce to connected frustration graphs}. Determine the frustration graph $\Gamma$; if 
$\Gamma$ is not connected, then $\mathfrak{g}$ is the direct sum of the Lie subalgebras
generated by the vertices of the various connected components of $\Gamma$.
So, for the next steps we can restrict our attention to the case that $\Gamma$ is connected.
\item {\bf The frustration graph is a line graph}.
Determine whether $\Gamma$ is a line graph of a multi-graph $\Delta$.
Here we can use the algorithm presented in \cite{cuypers_whitney}.
If so, then the subgeometry of $NO(V,Q)$ spanned by the vertices of $\Gamma$
is isomorphic to $T(\Omega,\Omega')$ where $\Omega$ is the vertex set of
$\Delta$ and $\Omega'$ is a set disjoint of $\Omega$. We can take $\Omega$ to be not of size $4$.

If $|\Omega|$ is not divisible by $4$, 
then $|\Omega'|$ equals the dimension of the radical of $Q_W$, where $W$ is the subspace of $V$ spanned by $S$. This dimension can by found using standard linear algebra algorithms.

So, assume $|\Omega|$ is not divisible by $4$.
We still have to determine the size of $\Omega'$.
For this we fix a subset of $\mathcal{G}_0\subseteq \mathcal{G}$ for which
$\{v_p\mid p\in \mathcal{G}_0\}$ is a minimal set of generators
for a subgeometry $S_0$ isomorphic to $T(\Omega)$.
We can take for $\mathcal{G}_0$ the edges of a spanning tree of $\Delta$.
Let $W_0$ be the subspace of $W$ spanned by $S_0$.
Now the set $\Omega'$ will have size equal to the codimension $d$ of $W_0$
in $W$ if $S\cap V_0=S_0$, or $d-1$ if $S\cap V_0$ contains $S_0$ as a proper subspace.
See \cref{nonstandard}.
The Lie subalgebra  
$\mathfrak{g}$ is then isomorphic to  $\bigoplus_{i=1}^{2^{|\Omega'|}}\mathfrak{so}(|\Omega|)$.

\item {\bf The frustration graph is not a line graph}.
For each connected component $\Gamma$ 
which is not a line graph, 
we find the subgeometry of $NO(V,Q)$ spanned by the vertices of $\Gamma_0$
is isomorphic to $NO(W,Q_W)$ where $W$ is the subspace of $V$ which is spanned by 
$\{v_p\mid p $ a vertex of $\Gamma_0\}$.
If $r$ is the dimension of the radical of $(W,Q_W)$  and $m$ the dimension of $W/\mathrm{Rad}(Q_W)$,
then the Lie algebra generated by the elements of $\Gamma_0$ is the  sum of $2^r$ copies of $\mathfrak{su}(2^{\frac{m-1}{2}})$
if $m$ is odd, of copies of $\mathfrak{so}(2^{\frac{m}{2}})$ if the type of 
the form $Q$ induced on  $W/\mathrm{Rad}(Q_W)$ is $+$ and of copies of
$\mathfrak{sp}(2^{\frac{m}{2}-1})$ if the type induced on $W/\mathrm{Rad}(Q_W)$ is $-$.
\item  {\bf Conclusion}.
The isomorphism type of 
$\mathfrak{g}$ is then the  direct sum of the isomorphism type of the Lie subalgebras determined by the various connected  subgraphs of $\Gamma$.
\end{enumerate}

Each of these four steps can be turned into an algorithm with complexity at most $\mathcal{O}(\max(n,|\mathcal{G}|)^3)$.

For step (1) and (4) this is obvious. 

For step (2)
we can use the results of \cite{cuypers_whitney}. There an algorithm is provided 
that determines whether a connected graph on $|\mathcal{G}|$ vertices is a line graph or not and provides, in case it is a line graph 
a (and then also  actually all) root graph $\Delta$
whose line graph is $\Gamma$. 
This algorithm has complexity $\mathcal{O}(\max(n,|\mathcal{G}|)^3)$, implying   step (2) has complexity at most $\mathcal{O}(|\max(n,|\mathcal{G}|)^3)$ in case the number of vertices of
$\Delta$ 
is not divisible by $4$. The extra steps needed in case the number of  vertices in $\Delta$ is divisible by $4$ can also be bounded by $\mathcal{O}(\max(n,|\mathcal{G}|)^3)$.

It remains step (3).
If $\Gamma$ is not the line graph of a multi graph we have to determine
the type of the subspace $W$ of $V$ spanned by the vertices of $\Gamma$.
We can consider the subspace $W$ spanned by the vectors $v_p$ of $V$, where $p\in \mathcal{G}$.
A straightforward Gram-Schmidt computation on these vectors will provide a \emph{hyperbolic basis}
$$v_1,w_1,\dots,v_k,w_k,r_1,\dots, r_l$$ of $W$
with $v_1,w_1\dots,v_{k},w_k$ being a hyperbolic basis for the space $W_0$ spanned by these vectors 
and the remaining vectors $r_{1},\dots r_\ell$ a basis for the radical of $f_W$, the restriction of $f$ to $W$.
Evaluating $Q$ on the hyperbolic $2$-spaces $\langle v_i,w_i\rangle$ with $1\leq i\leq k$, will tell us the type of $Q$ restricted to the $2$-spaces and hence also to $W_0$.

Finally if $Q(r_i)=0$ for all $i$, then the radical of  $Q_W$ has dimension $l$, and the type of $Q_w$ is determined as
the product of types of the various $2$-spaces $\langle b_{2i-1},b_{2i}\rangle$, with $1\leq i\leq k$.
If there is an $i$ with $1\leq i\leq r$ and  $Q(r_i)=1$, we find 
$\mathrm{Rad}(Q_W)$ to have dimension $r-1$ and the form induced by $Q$ on $W/\mathrm{Rad}(Q_W)$ to be of $0$-type.

In this way we obtain not only the type of $Q$ on the subspace of $W$ but also 
the isomorphism type of the Lie subalgebra $\mathfrak{g}$. 
So, we have found an algorithm that determines the isomorphism type of $\mathfrak{g}=\langle \mathcal{G}\rangle$ with complexity
$\mathcal{O}(\max(n,|\mathcal{G}|)^3)$.

For more details see \cref{app:alg}.

\section{Relations to recent work}

The connection between dynamical Lie algebras generated by Pauli strings and the partial linear spaces  
$NO(V,Q)$ provides us with a rich geometric model for studying 
these dynamical Lie algebras.
In this section we provide some examples of recent research, where the use of this connection 
provides (at least in our opinion) extra insight and options for more general or stronger results,  and simplifies proofs.

\subsection{Subalgebras generated by Pauli strings}
In \cite{fullclass} the authors obtain the same classification for Lie
subalgebras of $\mathfrak{su}(2^n)$ generated by Pauli strings as in \cref{classification}.
Their approach is somewhat different. Starting with a minimal set of generators,
they construct a new set of generators having a frustration graph which is a tree.
For all possible trees they provide Pauli generators in $\mathfrak{su}(2^k)$, $\mathfrak{sp}(2^k)$,
$\mathfrak{so}(2^k)$ and $\mathfrak{so}(k)$  having this tree as 
frustration graph. 

In our setting, the construction of the new set of generators for a Lie subalgebra $\mathfrak{g}$ from a starting set of $\mathcal{G}$ of generating Pauli strings
is equivalent to finding a new generating set for $S_\mathfrak{g}$.
This approach has already been investigated by Brown and Humphries \cite{brown}.
This method lies also at the heart of our characterization of line graphs in \cref{E6theorem}
as given in \cite{cuypersgraphs}, leading to the $32$ forbidden graphs given in \cref{E6graphs}.

Although the approach of \cite{fullclass} can also be turned into an algorithm, we think our approach provides
a more natural way to classify the Lie subalgebras generated by Pauli strings and determine them algorithmically.
Not only does the set up of \cite{fullclass} require that the set of generators is minimal in the sense that any proper subset will not generate the same Lie subalgebra, it also does give less information on how disconnected frustration graphs can give rise to
various classes of Lie subalgebras under the action of the Clifford group.

In two cases we find that a simple Lie subalgebra $\mathfrak{g}$ of $\mathfrak{su}(2^n)$ generated by Pauli strings
gives rise to a subspace $W=V_\mathfrak{g}$ of $V$ such that $(V,Q_W)$ admits a nontrivial
radical for the associated symplectic form $f_W$.
Indeed, for natural subspaces of $NO(V,Q)$ which modulo the radical of $Q$ are
of odd dimension, and for the standard embedding of spaces $T(\Omega,\Omega')$ with $|\Omega|$
being of even size.
These are precisely the cases described in \cite{fullclass} to be related to algebraic dependencies between the generating Pauli strings. 

In \cite{wiersema1,wiersema2} the authors classify Lie subalgebras
of $\mathfrak{su}(2^n)$ with $2$-local action.
They start with the so-called interaction graph whose vertex set is $\{1,\dots,n\}$
and an arbitrary set $E$ of edges.
They fix a subset  $\mathcal{X}$ of Pauli strings from $\mathfrak{su}(4)$
and consider for the each edge $\{i,j\}$ the 
Pauli strings $$I\otimes\cdots \otimes I\otimes P_i\otimes I\otimes \cdots \otimes I\otimes Q_j\otimes I \cdots \otimes I$$
where $P_i,Q_j\in \mathcal{X}$ are at place $i$ and $j$, respectively, in this tensor product.
Several of the results of \cite{wiersema1,wiersema2}  can be obtained by translating the problem on Lie subalgebras generated
by these Pauli strings to the problem of finding the subspaces of $NO(V,Q)$ generated by some
subsets of non-isotropic vectors, and vice versa.

These dynamical Lie algebras have also been studied in \cite{freeLie} in connection  
to quantum approximation optimization algorithm.
In \cref{freeLie} we show that the results of \cite{freeLie} on dynamical Lie subalgebras 
related to the free Ansatz of this quantum algorithm follow easily from our work.

\subsection{Minimal sets of generators}

\cref{generators} generalizes several results of \cite{generators}. 
Indeed, as the dimension of $V$ is $2n+1$, we need at least $2n+1$ generating non-isotropic vectors for $NO(V,Q)$ and then also  $2n+1$ Pauli strings to 
generate $\mathfrak{su}(2^n)$. \cref{generators} provides a necessary and sufficient condition that a set of $2n+1$
Pauli strings does generate  $\mathfrak{su}(2^n)$.

\begin{theorem}
Let $\mathcal{G}$ be a set of Pauli strings in $i\mathcal{P}_n$.
Then 
$\mathcal{G}$ generates $\mathfrak{su}(2^n)$ if and only if
the frustration graph $\Gamma_\mathcal{G}$ is connected, contains a subgraph of \cref{E6graphs}, and 
$\{v_p\mid p\in \mathcal{G}\}$  spans $V$. In particular, $|\mathcal{G}|\geq 2n+1$.
\end{theorem}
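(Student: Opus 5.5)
We prove both directions by reducing to the geometry $NO(V,Q)$, using the correspondence between Lie subalgebras generated by Pauli strings and subspaces of $NO(V,Q)$, together with the classification in \cref{generators} (for $n\geq 3$; small $n$ can be checked directly).

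\emph{Necessity.} Suppose $\mathcal{G}$ generates $\mathfrak{g}=\mathfrak{su}(2^n)$.
\begin{enumerate}
\item Then $S_\mathfrak{g}$ consists of all points of $NO(V,Q)$, that is, all non-isotropic vectors outside $\langle v_{i\one}\rangle$. These span $V$, and $S_\mathfrak{g}\subseteq V_{P_0}$ for $P_0=\{v_p\mid p\in\mathcal{G}\}$ by the remark $\langle P_0\rangle_{NO}\subseteq V_{P_0}\cap P$. Hence the $v_p$ span $V$.
\item Since $\dim V=2n+1$, spanning $V$ forces $|\mathcal{G}|\geq 2n+1$.
\item Connectedness of $\Gamma_\mathcal{G}$ follows from simplicity. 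If $\Gamma_\mathcal{G}$ splits into components $\mathcal{G}_1,\mathcal{G}_2$, then the subalgebras they generate commute and $\mathfrak{g}$ is their sum. Each summand is an ideal, and neither is zero, which contradicts simplicity of $\mathfrak{su}(2^n)$.
\item If $\Gamma_\mathcal{G}$ contained no graph from \cref{E6graphs}, then by \cref{E6theorem} it would be a line graph of a multigraph. By \cref{generators}(iv),(v), $\mathfrak{g}$ would then be a sum of copies of $\mathfrak{so}(k)$. This is impossible: comparing dimensions and simplicity, $\mathfrak{su}(2^n)\simeq\mathfrak{so}(k)$ only for $n\leq 2$ ($\mathfrak{so}(3)$ and $\mathfrak{so}(6)$).
\end{enumerate}

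\emph{Sufficiency.} Assume $\Gamma_\mathcal{G}$ is connected, contains a subgraph from \cref{E6graphs}, and the $v_p$ span $V$. Then $W=V$, and $Q$ on $V$ is non-degenerate: its radical is $\langle v_{i\one}\rangle$ with $Q(v_{i\one})=1$, so $\mathrm{Rad}(Q_W)=0$ and $r=0$. Moreover $W/\mathrm{Rad}(Q_W)=V$ has odd dimension $2n+1$. \cref{generators}(i) therefore gives $\mathfrak{g}\simeq\mathfrak{su}(2^n)$.

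To upgrade this isomorphism to equality, use the classification of connected subspaces. Since $\Gamma$ is not a line graph, $S$ is natural, so $S=NO(V,Q)$. Thus $S$ contains every point, and $\mathfrak{g}$ contains every element of $i\mathcal{P}_n\setminus\{i\one\}$, which is a basis of $\mathfrak{su}(2^n)$.

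\emph{Main obstacle.} The delicate step is justifying that $S$ equals $NO(V_S,Q)$ rather than a $T(\Omega,\Omega')$-type subspace. This is exactly the content of \cref{E6theorem}(i)$\Leftrightarrow$(ii), combined with the claim in \cref{sect:generators} that $S$ is of type $T(\Omega,\Omega')$ iff $\Gamma_\mathcal{G}$ is a line graph. I would invoke these as stated, and treat $n\leq 2$ separately, where \cref{generators} does not apply.
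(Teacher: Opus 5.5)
For $n\ge 3$ your argument is sound and is essentially the paper's. The paper simply invokes \cref{generators}(i) with $W=V$ (so $\mathrm{Rad}(Q_W)=0$, $r=0$, $k=n$) together with $\dim V=2n+1$. Your derivation of connectedness from simplicity fills in a step the paper leaves implicit, since \cref{generators} presupposes a connected frustration graph. To make that step airtight, note that the two commuting subalgebras are proper: if one were all of $\mathfrak{g}$, the other would be central, and $\mathfrak{su}(2^n)$ has trivial center. Also, the upgrade from $\mathfrak{g}\simeq\mathfrak{su}(2^n)$ to $\mathfrak{g}=\mathfrak{su}(2^n)$ needs no geometry: $\mathfrak{g}\subseteq\mathfrak{su}(2^n)$ and the dimensions agree.

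The genuine error is your claim that $n\le 2$ ``can be checked directly''. For $n\le 2$ the statement is false, so the hypothesis $n\ge 3$ (inherited from \cref{generators}) has to be imposed, not verified.
\begin{itemize}
\item For $n=1$, $\{iX,iZ\}$ generates $\mathfrak{su}(2)$ although $|\mathcal{G}|=2<3$. In fact the only points $x_1+r$, $z_1+r$, $x_1+z_1$ of $NO(V,Q)$ lie in a $2$-space not containing $r$, so no generating set spans $V$.
\item For $n=2$, the set $\{iZ_1,iX_1,iZ_1Z_2,iX_2,iZ_2\}$ spans $V$ and generates $\mathfrak{su}(4)$. Its frustration graph is a path on $5$ vertices, so it contains no graph of \cref{E6graphs}. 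This reflects the isomorphism $\mathfrak{so}(6)\simeq\mathfrak{su}(4)$, i.e.\ $T(\Omega)=NO(V,Q)$ for $|\Omega|=6$.
\end{itemize}
Your own step (iv) already names this obstruction ($\mathfrak{su}(2^n)\simeq\mathfrak{so}(k)$ for $n\le 2$). The right conclusion to draw from it is that the equivalence holds only for $n\ge 3$.
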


Our results also give similar results for the Lie algebras $\mathfrak{sp}(2^n)$, $\mathfrak{so}(2^n)$ and $\mathfrak{so}(n)$.

In \cite{optimal} it is shown that the dimension of $W/\mathrm{Rad}(f)$ is an invariant 
for the minimal generating sets of Pauli strings for a Lie subalgebra $\mathfrak{g}$ of 
$\mathfrak{su}(2^n)$ which itself is generated by Pauli strings.
Suppose $\mathcal{G}$ is a set of generators of minimal size for  $\mathfrak{g}$.
It is claimed in \cite[Result 2, Theorem 19]{optimal} that any set $\mathcal{G}_0$
of Pauli's in $\mathfrak{g}$ with the same size as  $\mathcal{G}$
and frustration graph $\Gamma_0$ also generates   $\mathfrak{g}$
if the dimension of $V_{\Gamma_0}/\mathrm{Rad}{f_{\Gamma_0}}$ has the same dimension
as $W/\mathrm{Rad}(f)$.

However, in this characterization, the authors of \cite{optimal} forget to add the connectivity of the corresponding 
frustration graphs.

The condition on the dimension implies that the vectors $v_p$ with $p\in \mathcal{G}_0$
generate the subspace $W$.
So, the only possibility for $\mathcal{G}_0$ not to generate $\mathfrak{g}$
is that its frustration graph is a line graph, but that of $\mathcal{G}$ not.
However, as follows from \cref{generators}, this is not possible.

In \cite{minimalparitybasis} the authors consider minimal generating sets $\mathcal{G}$ for
$\mathfrak{su}(2^n)$ which contain $\mathcal{G}_0=\{iX_i,iZ_i$ where $1\leq i\leq n\}$.
Here minimal means, no proper subset of $\mathcal{G}$ generates $\mathfrak{su}(2^n)$.

These Pauli strings map to vectors of $V$ generating a hyperplane $V_0$ of $V$ which is the orthogonal sum
$$V_0=V_1\perp V_2\cdots \perp V_n$$
of elliptic $2$-spaces.
Let $r$ be the nonzero vector in the radical of $Q$. So, $V$ is spanned by $V_0$ and $r$. 

Now consider an element $p\in \mathcal{G}\setminus\mathcal{G}_0$. Then $v_p$ can be written as a sum
$$v=\lambda r+\sum_{i\in S} v_i$$
for some subset $S$ of $\{1,\dots, n\}$,
where for $i\in S$ we have $v_i$ is a nonzero vector of $V_i$ and $\lambda\in \mathbb{F}_2$.
So, then $Q(v_p)=\lambda+|S|\pmod{2}$ and $v_p\not \in V_0$ if and only if $\lambda\neq 0$.
Moreover, if $Q(v_p)=1$, then $v_p$ is collinear with two nonzero vectors of $V_i$ for all $i\in S$
but to none in $V_i$ for $i\not \in S$.

So, if $|S|>3$ is even, and $\lambda=1$, we find
in the frustration graph of $\mathcal{G}_0\cup \{v_p\}$ a subgraph $E_6$ and
the subspace of $NO(V,Q)$ spanned by $\{V_i,v_p| i\in S\}$ equal to  
be $NO(W,Q_W)$ where $W=\bigoplus_{i\in S} V_i\oplus \langle r\rangle$. 

In particular, if $\lambda=1$ and $S=\{1,\dots,n\}$ is of even size at least $3$,
we find the subspace of $NO(V,Q)$ spanned by $\{V_i,v| i\in S\}$ to be the full
$NO(V,Q)$. The corresponding Pauli strings then generate $\mathfrak{su}(2^n)$.

However, if $n$ is odd, we also find that any non-isotropic vector outside $V_0$ will not provide a 
connected frustration graph.
So, by adding a single Pauli $p$ to our initial set of generators in $\mathcal{G}_0$  we will still
not generate the full $\mathfrak{su}(2^n)$.

This proves Theorem 2 of \cite{minimalparitybasis}.
Also Theorem 1 of \cite{minimalparitybasis} follows easily, and can even be extended a bit.

If for $S\subseteq \{1,\dots,n\}$
we define $$v_{S}=\lambda r+\sum_{i\in S} v_i$$
where $0\neq v_i\in V_i$ and  $\lambda+|S|\pmod{2}=1$,
then for sets $S_1,\dots, S_k$
with 
\begin{itemize}
\item $S_1\cup \cdots \cup S_k=\{1,\dots,n\}$,
\item $|S_i|$ is even for at least one $i\in \{1,\dots, k\}$,
\item the graph of the set $\{1,\dots, k\}$ with $i$ adjacent to $j$ if and only if $S_i\cap S_j\neq \emptyset$ is connected,
\end{itemize}
we find vectors $v_{S_1},\dots, v_{S_k}$ which 
span together with $V_0$ the full space $V$ (as indeed at least one vector has $\lambda=1$).
Moreover, these vectors are all non-isotropic and  the conditions easily imply that the frustration graph is connected and contains a subgraph $E_6$.
Hence  the corresponding Pauli strings together with $\mathcal{G}_0$ generate the full Lie algebra $\mathfrak{su}(2^n)$.

\subsection{Commutator graph}
In \cite{showcase,West} the authors explore average notions
of scrambling, chaos and complexity over ensembles of systems with dynamical Lie algebras that possess a basis consisting of  Pauli strings.
A main tool in these explorations is the commutator graph.

Let $\mathfrak{g}$ be generated by a set $\mathcal{G}$ of Pauli strings in $i\mathcal{P}_n$
different from $1_{2^n}$. 
Then the \emph{commutator graph} is the graph on all Pauli strings in $i\mathcal{P}_n$
where two Pauli strings $p$ and $q$ are adjacent if and only
there is a $g\in \mathcal{G}$ such that 
$$[p,g]=\pm q$$
or equivalently if there is a $g\in \mathcal{G}$ with
$$v_p+v_g=v_q.$$
So, we can also identify the commutator graph
with the graph on the points of $NO(V,Q)$,
where two vertices $v_p$ and $v_q$, with $p,q$ Pauli strings in $i\mathcal{P_n}$ are adjacent if and only
if there is a $g\in \mathcal{G}$ with 
$$v_p+v_g=v_q.$$
(We can leave out $i1_{2^n}$  as it commutes with all Pauli strings
and hence would represent a component consisting of a single vertex inside the commutator graph.)

See \cite{showcase,West} for the definition and applications of the commutator graph.

Notice that if $p$ is adjacent to $q$, and $q=[p,g]$ for some generator $g\in \mathcal{G}$, 
then 
$$[[p,g],g]=[q,g]=p$$ and
$$(v_q+v_g)=(v_p+v_g)+v_g=v_p.$$
We find that the commutator graph is a subgraph of the collinearity graph of $NO(V,Q)$.

If the frustration graph $\Gamma$ on $\mathcal{G}$ connected, then we
find that the connected component  $C$ of the 
commutator graph containing an element of $\mathfrak{G}$  equals (after identification) the subspace $S_\mathfrak{g}$.
Indeed, if $p,q\in \mathcal{G}$ are non-commuting,
then $[p,[p,q]]=q$. So, the vertices of $\Gamma$ are contained in $C$.
Moreover,
if $p$ and $q$  are Pauli strings in $\mathfrak{g}$  
and $g$ a generator in $\mathfrak{G}$ with $[p,g]=\pm q$,
then $v_q=v_p+v_q\in S$.
So, all the vertices of $C$ are contained in $S$.
If $g_1,g_2$ are two non-commuting Pauli strings in $\mathfrak{g}$,
then, if $[p,[g_1,g_2]]\neq 0$ we find (up to a permutation of $g_1$ and $g_2$) that
$[p,g_1]\neq 0$ and $[p,g_2]=0$.
But then $[p,[g_1,g_2]]=[[p,g_1],g_2]$.
So, if $[p,g_1]\in C$ and $g_2\in \mathfrak{G}$, we find 
$[p,[g_1,g_2]]\in C$.
But this implies that  $C$ equals $\{[p,g]\mid g\in \mathfrak{g}\cap \Pi\}$,
In particular, $C=S$.

If $p$ is a Pauli string not in $\mathfrak{g}$,
commuting with all elements of $\mathfrak{G}$, then its components is just $\{p\}$.

Finally, assume  $p$ is a Pauli string not in $\mathfrak{g}$,
which does not commute with some element $g$ of $\mathfrak{G}$.
Let $C$ be  the component of the commutator graph containing $p$.
Then, by the same arguments as above, we find $C$ equals $\{[p,g]\mid g\in \mathfrak{g}\cap P\}$.
But that implies that $C\cup S$ is contained in $S_1$ of $NO(V,Q)$ generated by $p$ and $\mathcal{G}$.

Suppose $S=T(\Omega)$ in its standard embedding in $NO(V,Q)$.
Then we can identify $V$ with the subspace of $\mathbb{F}_2\Omega$ of even weight vectors.
We easily check that if $p\in C$ is a vector of weight $2m$, with $m\geq 1$ odd, then the vectors in $C$ are precisely all vectors of weight $2m$.

If $4\mid |\Omega|$ and the embedding of $\Omega$ is  the quotient of the standard embedding, we find the components
to be the images of the components of the standard embedding in this quotient.

If $S=NO(W,Q_W)$ for some subspace $W$ of $V$ with nondegenerate form $Q_W$, where $W$ has dimension at least $7$ or $6$ and is of $-$-type, then 
of course we find besides $S$ also the non-isotropic vector in the radical of $f_W$, if it exists, as a component.
And for a Pauli string with $v_p\not S$ and not perpendicular to $S$, we find $C$
to be contained in the subspace $\hat{W}=\langle W,v_p\rangle$.
Now every elliptic line in $\hat{W}$ will meet $W$ in a point of $S$.
But then, as the complement of $S$ in the collinearity graph of $NO(\hat{W},Q_{\hat{W}})$ is connected (just check inside all dual affine planes) 
one easily checks that all points of $NO(\hat{W},Q_{\hat{W}})$ are in $C\cup S$, so that $C$ is the complement of $S$ inside 
$NO(\hat{W},Q_{\hat{W}})$.

\begin{remark}
In \cite{West} it is stated that  
the Pauli strings spanning $\mathfrak{g}$ itself will
appear as a single component if $\mathfrak{g}$ is a simple Lie algebra.
This indeed follows from the above.
But it is also claimed that for the various simple summants of $\mathfrak{g}$, their set of Pauli strings will
appear as separate components. See \cite[end of page 3]{West}.
However, as the above shows, this is only the case when the various summants are 
themselves generated by Pauli strings.
Examples where this is not the case occur when the geometry $S_\mathfrak{g}$ is isomorphic
to $N(W,Q_W)$ where $Q_W$ has an isotropic radical.
\end{remark}

\subsection{Free-fermion Hamiltonians}

In \cite{Chapman2020} the relation between the frustration graph and free-fermion descriptions of a Hamiltonion are investigated. As these systems on $n$ particles correspond to the case that the dynamical Lie algebra is an orthogonal Lie algebra $\mathfrak{so}(n)$,  \cref{generators}
provides detailed solutions to  problems considered in \cite{Chapman2020}.

\subsection{Cartan decomposition}
A Cartan decomposition of a Lie algebra $\mathfrak{g}$ is a decomposition $\mathfrak{g}=\mathfrak{l}\oplus \mathfrak{m}$
such that

$$[ \mathfrak{l},\mathfrak{l} ]  \subseteq \mathfrak{l}$$
$$[ \mathfrak{m},\mathfrak{m} ]  \subseteq \mathfrak{l}$$
$$[\mathfrak{l},\mathfrak{m} ] \subseteq \mathfrak{m} $$

Cartan decompositions can be very useful in the study of dynamical Lie algebras, see for example \cite{cartan}.
The Lie subalgebras under consideration do admit such Cartan decompositions, which we can easily recognize in the geometry of $(V,Q)$.

Indeed, if $\mathfrak{g}$ is a subalgebra of $\mathfrak{su}(2^n)$ generated by Pauli strings,
then fix  a hyperplane $H$ of $V$ meeting $S_\mathfrak{g}$ non-trivially.
Then for $p,q\in S_\mathfrak{g}$ we find the following:
If $v_p,v_q\in H$ then $[p,q]=0$ or $v_{[p,q]}=v_p+v_q\in H$.
If $v_p,v_q\not\in H$ then $[p,q]=0$ or $v_{[p,q]}=v_p+v_q\in H$.
And finally, $v_p\in H$ but $v_q\not\in H$ then $v_{[p,q]}=v_p+v_q\not\in H$.
So if $\mathfrak{l}$ is the linear subspace of $\mathfrak{g}$ spanned by the Pauli strings $p$ with $v_p\in H$
and $\mathfrak{m}$ the subspace spanned by the Pauli strings $p$ with $v_p\not \in H$,
then $\mathfrak{g}=\mathfrak{l}\oplus \mathfrak{m}$
is a Cartan decomposition.

\section{Discussion}

In this paper we provided a one-to-one correspondence between Lie subalgebras of $\mathfrak{su}(2^n)$
generated by Pauli strings and subspaces of the point-line geometry of non-isotropic points and elliptic lines for a non-degenerate quadratic $\mathbb{F}_2$ space $(V,Q)$ of dimension $2n+1$.
Using this correspondence we are able to not only classify the Lie subalgebras generated by Pauli strings, but also
provide a way to recognize them algorithmically up to isomorphism.

Indeed, if the Lie subalgebra $\mathfrak{g}$ is generated by set $\mathcal{G}$ of Pauli strings with a connected frustration graph,
then $\mathfrak{g}$ is a direct sum of Lie algebras all isomorphic to $\mathfrak{su}(2^m)$, $\mathfrak{so}(2^m)$  or  $\mathfrak{sp}(2^m)$
provided the frustration graph contains one of $32$ subgraphs on $6$ points, or equivalently, a subset of $6$ elements in $\mathcal{G}$ generating 
$\mathfrak{sp}(2^2)$.
If no such subgraphs (or subsets of Pauli strings) are present, then  $\mathfrak{g}$ is a direct sum of Lie algebras isomorphic to
$\mathfrak{so}(m)$ for some fixed $m$.

We also described an algorithm that given $\mathcal{G}$ determines the isomorphism type of $\mathfrak{g}$.
This algorithm runs in time $\mathcal{O}(\max(n,|\mathcal{G}|)^3)$ and for the greater part
depends on some standard linear algebra algorithms for vector spaces over $\mathbb{F}_2$.

The results of our work imply several of the recent results found in \cite{fullclass, wiersema1, wiersema2, freeLie,optimal, generators}.
The geometry $NO(V,Q)$ provides a richer and more natural  geometry underlying the 
Pauli group than the commonly used model of a symplectic space over $\mathbb{F}_2$.
As such it may have impact on the various applications mentioned in these papers
and future investigations of dynamical Lie algebras generated by Pauli strings, or even
by special sums of Pauli strings, see for example \cite{freeLie,reductions,showcase} or \cite{West}.

\newpage


\bibliographystyle{plain}

\bibliography{physics.bib}

{\small

\parindent=0pt
Hans Cuypers\\
Department of Computer Science and Mathematics\\
Eindhoven University of Technology\\
P.O. Box 513 5600 MB, Eindhoven\\
The Netherlands\\
}

\newpage


\appendix

\section{Quadratic spaces over the field with $2$ elements}
\label{app:quad}

In this appendix we provide a short introduction to quadratic spaces over the field $\mathbb{F}_2$.

Let $V$ be a vector space over $\mathbb{F}_2$, the field with $2$ elements.
Then a quadratic form on $V$ is
a map
$$Q:V\rightarrow \mathbb{F}_2$$
such that the form $$f:V\times V\rightarrow \mathbb{F}_2$$
defined by
$$f(v,w)=Q(v+w)+Q(v)+Q(w)$$
for all $v,w\in V$ is a \emph{symplectic form}, i.e. it is bilinear and $f(v,v)=0$ for all $v\in V$

We call $f$ the symplectic form associated to $Q$.

The \emph{radical} of a symplectic form $f$ is the subspace $\mathrm{Rad}(f)=\{v\in V\mid f(v,w)=0$ for all $w\in V\}$.
The form $f$ is called non-degenerate if and only if the radical equals $\{0\}$.
The \emph{radical} of $Q$ denoted by $\mathrm{Rad}(Q)$ is  the subspace  $\{v\in \mathrm{Rad}(f)\mid Q(v)=0\}$ of $\mathrm{Rad}(f)$. The radical  $\mathrm{Rad}(Q)$ of $Q$ is also called the \emph{isotropic} radical.
The form $Q$  is called \emph{non-degenerate} if $\mathrm{Rad}(Q)=\{0\}$ and \emph{degenerate} otherwise.
Then $\mathrm{Rad}(Q)$ has codimension at most $1$ in $\mathrm{Rad}(f)$.

We call a pair $(V,Q)$ with $V$ a vector space equipped with a quadratic form $Q$ a \emph{quadratic space} or \emph{orthogonal space}.

\begin{example}\label{1-spaces}\label{2-spaces}
Let $(V,Q)$ be a quadratic space over the field $\mathbb{F}_2$.
A vector $v\in V$ is called \emph{isotropic} if $Q(v)=1$, and non-isotropic otherwise.

If $W$ is a subspace of $V$ of dimension $2$, then  one of the following can occur:
\begin{itemize}
\item $Q$ restrict to $W$ is $0$.
We call  $W$ an \emph{isotropic} line (or 2-space).
\item $f$ is $0$ on $W$, but $Q$ not.
Then two of the three nonzero vectors of $W$ are mapped to $1$ and one to $0$.
The space $W$ a called a \emph{tangent} line (or 2-space). 
\item $f$ is non-trivial on $W$, but there is a $0\neq w\in W$ with $Q(w)=0$.
Then there are two vectors of $W$ on which $Q$ takes the value $0$ and one where it takes the value $1$. We call $W$ a  \emph{hyperbolic} line (or $2$-space) or of \emph{$+$-type}. 
\item $f$ is non trivial on $W$ and  $Q$ takes only the value $1$ on the nonzero vectors in $W$.
We call  $W$ an \emph{elliptic} line (or $2$-space) or of $-$-\emph{type}. 
\end{itemize}
\end{example}

If $(V,Q)$ is a quadratic space and $V=V_0\oplus V_1$ with $f(v_0,v_1)=0$ for all $v_0\in V_0$ and $v_1\in V_1$, then we say 
$V$ is an \emph{orthogonal sum} of $V_0$ and $V_1$ and we write $V=V_0\perp V_1$.

Nondegenerate quadratic $\mathbb{F}_2$-spaces can be classified:

\begin{theorem}
Let $(V,Q)$ be a finite dimensional non-degenerate quadratic space over $\mathbb{F}_2$.
Then we have one of the following:
\begin{itemize}
\item $V$ has even dimension $2m$ and is the orthogonal sum of  hyperbolic $2$-spaces and an even number of elliptic $2$-spaces. 
\item $V$ has even dimension $2m$ and is the orthogonal sum of $m-1$ hyperbolic $2$-spaces and an odd number of elliptic $2$-space.
\item $V$ has odd dimension $2m+1$ and is the orthogonal sum of $m$ hyperbolic $2$-spaces and a non-isotropic $1$-space which is the radical of $f$.
\end{itemize}

A subspace of $V$ which is the orthogonal sum of an even number of elliptic $2$-spaces
is also the orthogonal sum of an even number of hyperbolic $2$-spaces.
\end{theorem}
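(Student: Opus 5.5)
We would prove this by induction on $\dim V$, splitting off $2$-spaces one at a time. Afterwards we would use two small exchange lemmas to normalise the number of elliptic summands.

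\emph{Step 1: the radical of $f$.} On $\mathrm{Rad}(f)$ we have $Q(v+w)=Q(v)+Q(w)$, so $Q$ restricted to $\mathrm{Rad}(f)$ is $\mathbb{F}_2$-linear. Its kernel is $\mathrm{Rad}(Q)=\{0\}$, hence $\dim \mathrm{Rad}(f)\le 1$. If the radical equals $\langle r\rangle$, then $Q(r)=1$. Choose a complement $W$ of $\mathrm{Rad}(f)$; then $f$ is non-degenerate on $W$ and $V=W\perp\langle r\rangle$. If the radical is trivial, take $W=V$. A non-degenerate symplectic form lives on a space of even dimension, so $\dim V$ is odd exactly when the radical is $\langle r\rangle$.

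\emph{Step 2: splitting $W$.} Pick $v,w\in W$ with $f(v,w)=1$. Then $U=\langle v,w\rangle$ is a $2$-space on which $f$ is nontrivial, so by \cref{2-spaces} it is hyperbolic or elliptic. Since $f|_U$ is non-degenerate, $W=U\perp (U^\perp\cap W)$, and $f$ is non-degenerate on $U^\perp\cap W$. By induction, $W$ is an orthogonal sum of hyperbolic and elliptic $2$-spaces, and $V$ is this sum possibly $\perp\langle r\rangle$.

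\emph{Step 3: exchange lemmas.} These turn the decomposition into the three normal forms.

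(a) If $E_1,E_2$ are elliptic with bases $e_1,f_1$ and $e_2,f_2$, put $a=e_1+e_2$ and $b=f_1+e_2$. Then $Q(a)=Q(b)=0$ and $f(a,b)=f(e_1,f_1)=1$, so $\langle a,b\rangle$ is hyperbolic. Its complement $P$ in $E_1\perp E_2$ is a non-degenerate $2$-space. Counting nonzero isotropic vectors in the $4$-space distinguishes the two possibilities: $E\perp E$ has $9$ of them, $H\perp H$ has $9$, and $H\perp E$ has $5$. Hence $P$ is hyperbolic, and $E\perp E\cong H\perp H$. This already gives the final sentence of the theorem.

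(b) If $E=\langle e,f\rangle$ is elliptic and $r$ is non-isotropic with $r\perp E$ and $r\in\mathrm{Rad}(f)$, then $e+r$ and $f+r$ are isotropic with $f(e+r,f+r)=1$. So $E\perp\langle r\rangle=\langle e+r,f+r\rangle\perp\langle r\rangle$ with the first summand hyperbolic.

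\emph{Step 4: assembling.} In the odd case, we apply (b) repeatedly to convert every elliptic summand into a hyperbolic one. This gives $m$ hyperbolic $2$-spaces plus $\langle r\rangle=\mathrm{Rad}(f)$. In the even case, we apply (a) to pairs of elliptic summands. We are left with either no elliptic summand, i.e.\ hyperbolic $2$-spaces together with an even number (zero) of elliptic ones, or exactly one elliptic summand, i.e.\ $m-1$ hyperbolic spaces and an odd number (one) of elliptic ones.

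The main obstacle is the identification in (a) of the complement $P$ as hyperbolic rather than elliptic. The isotropic-vector count handles this cleanly, since it also shows the two even types are genuinely distinct. We would only need the count $9$ versus $5$, computed by summing over pairs $(x,y)$ with $Q(x)=Q(y)$.
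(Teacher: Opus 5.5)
The paper does not prove this statement. In the appendix on quadratic spaces it is quoted as standard background (``Nondegenerate quadratic $\mathbb{F}_2$-spaces can be classified''), with no argument given. So there is no paper route to compare against, and your proposal supplies a proof the paper omits.

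Your proof is correct and complete, and it follows the usual Witt-type argument.
\begin{itemize}
\item On $\mathrm{Rad}(f)$ the form $Q$ is additive with kernel $\mathrm{Rad}(Q)=0$, which forces $\dim\mathrm{Rad}(f)\le 1$ and $Q(r)=1$.
\item Splitting off nondegenerate $2$-spaces is a valid induction.
\item Both exchange computations check out: $Q(e_1+e_2)=Q(f_1+e_2)=0$ and $f(e_1+e_2,f_1+e_2)=1$; likewise $Q(e+r)=Q(f+r)=0$ and $f(e+r,f+r)=1$.
\item The simultaneous replacements in (b) keep all summands mutually orthogonal and the sum direct, because $r\in\mathrm{Rad}(f)$.
\end{itemize}

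Two small remarks. First, the isotropic-vector count in (a) can be avoided. The orthogonal complement of $\langle a,b\rangle$ in $E_1\perp E_2$ is explicitly $P=\langle e_1+f_1+f_2,\,e_2\rangle$. Since $Q(e_1+f_1+f_2)=Q(e_1+f_1+f_2+e_2)=0$ and $f(e_1+f_1+f_2,e_2)=1$, $P$ is visibly hyperbolic.

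Second, your closing claim that the count shows the two even types are genuinely distinct is only established in dimension $4$. For general dimension $2m$ you would need the full counts $2^{2m-1}\pm 2^{m-1}-1$ of nonzero isotropic vectors. The statement only asserts that one of the cases holds, not that they are mutually exclusive, so this does not affect your proof.
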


Consider a quadratic $\mathbb{F}_2$-space $(V,Q)$ and 
its quotient space $(\overline{V},\overline{Q})$, where $\overline{V}=V/\mathrm{Rad}(Q)$
and $\overline{Q}$ the quadratic form induced by $Q$ on $\overline{V}$.
We call the  quadratic $\mathbb{F}_2$-space $(V,Q)$ \emph{hyperbolic}  or of \emph{$+$-type}
if $(\overline{V},\overline{Q})$ is the orthogonal sum of hyperbolic $2$-spaces.
If $(\overline{V},\overline{Q})$ is the  orthogonal sum of hyperbolic $2$-spaces and one elliptic one, then $(V,Q)$ is called an \emph{elliptic} or of \emph{$-$-type}.

\section{The partial linear space $NO(V,Q)$ and its subspaces}
\label{app:cotriangular}

A \emph{partial linear space} is a pair $(P,L)$, where $P$ is a set of \emph{points} and $L$ a set of subsets of $P$ of size at least $2$ called \emph{lines}, such that
any two points are on at most one line.

Given an quadratic $\mathbb{F}_2$-space $(V,Q)$  we define 
$NO(V,Q)$ to be the partial linear space point set consists of all $v\in V\setminus\mathrm{Rad}(f)$ with $Q(v)=1$
and whose lines are the triples $\{u,v,w\}$ of nonzero vectors in elliptic lines of $V$.
Here $f$ is the symplectic form associated to $Q$.

These geometries are well studied, 
see for example 
\cite{brown2,hall,Hall3,seven,ward}.

In particular, the results of \cite{hall} imply:
\begin{theorem}
Let $S$ be a connected subspace of $NO(V,Q)$, then 
we have one of the following.

\begin{enumerate}
\item There is a subspace $W$ of $V$ such that $S$ equals $NO(W,Q_W)$;
\item There are disjoint sets $\Omega$ and $\Omega'$  such that 
$S$ is isomorphic to a standard  embedding of $T(\Omega,\Omega')$ or its quotient in $V_S$.
\end{enumerate}
\end{theorem}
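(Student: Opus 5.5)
The plan is to deduce this theorem from Hall's classification of \emph{cotriangular spaces}, i.e.\ partial linear spaces with three points per line in which, for every point $p$ and line $\ell$ not containing $p$, the point $p$ is collinear with either none or exactly two of the points of $\ell$. Hall shows that every connected cotriangular space is either a triangular space $T(\Omega)$, a symplectic space $NO(V,Q)$ coming from a nondegenerate form, or a finite list of small exceptions, and he describes their embeddings in $\mathbb{F}_2$-spaces.

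\textbf{Step 1: $NO(V,Q)$ and its subspaces are cotriangular.} Take a point $u$ and an elliptic line $\{a,b,a+b\}$ not containing $u$. Collinearity of $u$ and $a$ means $f(u,a)=1$; this follows because $Q(u)=Q(a)=1$ forces $Q(u+a)=f(u,a)$. Bilinearity gives $f(u,a)+f(u,b)+f(u,a+b)=0$, so $u$ is collinear with an even number of points of the line, which is $0$ or $2$. A subspace $S$ is closed under lines, so it inherits this property. Hence every connected subspace $S$ is a connected cotriangular space.

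\textbf{Step 2: apply Hall's classification to $S$.} This gives two possibilities.
\begin{enumerate}
\item If $S$ is abstractly isomorphic to $T(\Omega)$, then the point set of $S$ is $T(\Omega)$ together with its embedding $S\subseteq V_S$. The map sending each point $\{i,j\}$ of $T(\Omega)$ to the vector $v\in S$ extends linearly from the universal embedding. The universal embedding module of $T(\Omega)$ is the even-weight subspace of $\mathbb{F}_2\Omega$, and lines must map to triples summing to $0$. So $V_S$ is a quotient of the even-weight space, possibly enlarged. Connectedness means $S$ spans $V_S$, so $V_S$ is a quotient of even-weight vectors.
\item If $S$ is of symplectic type, I take $W=V_S$ and check directly that $S=NO(W,Q_W)$.
\end{enumerate}

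The main obstacle is identifying which quotients of the even-weight space occur, and hence where the $\Omega'$ coordinates come from. The kernel of the projection must avoid all vectors of weight $2$, since points stay distinct and non-isotropic, and it must be compatible with $Q$. Analysing which vectors can lie in the kernel, the only option in $\mathbb{F}_2\Omega$ is the all-one vector $v_0$, and only when it is isotropic in the radical, i.e.\ $4\mid|\Omega|$; this yields the quotient embedding. Conversely, the ambient $V_S$ may also contain extra radical directions. I would record these as $\Omega'$, by letting $V_S$ be the span of the images together with a radical part that the points of $S$ pick up. This requires separating $V_S$ into $\mathbb{F}_2\Omega$-even vectors plus vectors in $\mathrm{Rad}(f_{V_S})$, and matching it with the standard embedding of $T(\Omega,\Omega')$. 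I expect this bookkeeping of the radical to be the delicate part.

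\textbf{Step 3: the symplectic case.} Here $S$ spans $W$ and contains a dual affine plane structure that generates all of $NO(W,Q_W)$. The inclusion $S\subseteq NO(W,Q_W)$ is clear. For the reverse, I would argue by induction on $\dim W$. If $S$ contains $NO(W',Q_{W'})$ for a hyperplane $W'$, then a point $v\in S\setminus W'$ is collinear with enough points of $W'$, and taking sums along elliptic lines produces every non-isotropic vector of $W$ outside $\mathrm{Rad}(f_W)$. The small exceptional cotriangular spaces in Hall's list can be checked by hand to fall into one of the two cases.
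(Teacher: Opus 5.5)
The main gap is in Step 2. The classification you invoke concerns \emph{reduced} cotriangular spaces, i.e.\ those in which no two distinct points have the same set of neighbours. As you state it, it is false. If $0\neq r\in\mathrm{Rad}(Q_W)$, then in the connected cotriangular space $NO(W,Q_W)$ the distinct points $x$ and $x+r$ have the same neighbours. No $T(\Omega)$ with $|\Omega|\geq 5$ and no $NO(U,Q_U)$ with $Q_U$ non-degenerate has such a pair, and the same problem occurs for $T(\Omega,\Omega')$ with $\Omega'\neq\emptyset$. These non-reduced subspaces are precisely the source of $\mathrm{Rad}(Q_W)\neq 0$ in (i) and of $\Omega'\neq\emptyset$ in (ii), and your argument does not reach them. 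You first show, correctly, that if $S\cong T(\Omega)$ then $V_S$ is a quotient of the even-weight space by a subspace of the radical of the dot product. You then propose to find ``extra radical directions'' in $V_S$ to serve as $\Omega'$, but under that hypothesis there are none. What is missing is a reduction before applying the classification. Put $R=\mathrm{Rad}(Q_{V_S})$ and $D_x=\{r\in R\mid x+r\in S\}$ for $x\in S$. If $r\in D_x$ and $y\in S$ is collinear with $x$, then $f(x+r,y)=1$ gives $x+r+y\in S$, and $f(x,x+r+y)=1$ gives $y+r\in S$. So by connectivity $D_x=D$ does not depend on $x$, and it follows that $D$ is a subgroup. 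Hence $S$ is a union of cosets of $D$, and $S/D$ is a reduced connected subspace of $NO(V_S/D,\bar Q)$. Only at this point can you apply the classification to $S/D$ and lift back, with a basis of $D$ playing the role of $\Omega'$.

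The second gap is the orthogonal/symplectic case (your Step 3). Knowing that $S/D$ is abstractly isomorphic to some $NO(U,Q_U)$ with $Q_U$ non-degenerate does not yet give $S/D=NO(V_S/D,\bar Q)$. You also need that this geometry admits no embedding spanning a space of dimension larger than $\dim U$; this is the analogue for $NO(U,Q_U)$ of your universal-embedding computation for $T(\Omega)$, and it is not automatic. Your induction on $\dim W$ does not provide it: it assumes $S\cap W'=NO(W',Q_{W'})$ for some hyperplane $W'$, which is essentially what is to be proved, and it has no base case. The paper avoids both issues because Hall's Theorem~5.5 in \cite{hall} is already a classification \emph{relative to the span} $W=V_S$. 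There $S$ is one of three things: the vectors of $W$ outside $\mathrm{Rad}(f_1)$ for a symplectic form $f_1$ on $W$; the vectors with $Q_1=1$ for a quadratic form $Q_1$ on $W$; or $T(\Omega,\Omega')$. What remains is only to match these with $Q$. In the $Q_1$ case, $Q_1$ and $Q$ both equal $1$ on the spanning set $S$ and their polar forms agree on pairs of points of $S$, so $Q_1=Q$. In the $f_1$ case, $Q=1$ on all of $W\setminus\mathrm{Rad}(f_1)$ forces $W/\mathrm{Rad}(f_W)$ to be an elliptic line. If you want to keep your route, you must add the reduction modulo $D$ and the universal embedding of $NO(U,Q_U)$. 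If instead you rely on Hall's description of the embeddings, you are using the embedded theorem, and the abstract detour becomes unnecessary.
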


\begin{proof}
Let $W$ be the subspace of $V$ spanned by the elements of $S$.
 By \cite[Theorem 5.5]{hall} we have one of the following:
 \begin{enumerate}
\item  There is a symplectic form $f_1$ on $W$ such that $S$ consists of all vectors
outside the radical of $f_1$;
\item
There is a quadratic form $Q_1$ on $W$ such that $S$ consists of all vectors on which
$Q_1$ takes the value 1.
\item  $S$ is isomorphic to $T(\Omega,\Omega')$  for some disjoint sets $\Omega,\Omega'$.
\end{enumerate}
If we are in case (i), then all 2-spaces  of $W$ which do not meet $\mathrm{Rad}(f_1)$
are elliptic. Moreover, $\mathrm{Rad}(f_1)$ is contained in the radical of $f_W$, the restriction of $f$ to $W$.
But then $W/\mathrm{Rad}(f_W)$ is an elliptic $2$-space and $S$ equals $NO(W,Q_W)$.

In case (ii) we find for any vector $w\in W$ that $Q_1(w)=1$ implies $Q(w)=1$.
Furthermore for any two collinear vectors $v,w\in W$ we find $f(v,w)=Q_1(v+w)+Q_1(v)+Q_1(w)$.
As the $S$ linearly spans $W$, we find that  $Q_1$ and $Q$ need to be the same.
\end{proof}

\section{Subspaces and Lie subalgebras}

Let $S$ be a subspace of $NO(V,Q)$
then clearly $\mathfrak{g}_S$ is a Lie subalgebra of $\mathfrak{su}(2^n)$
with $S\subseteq S_{\mathfrak{g}_S}$.

Let $\mathcal{P}_S$ be the set of Pauli strings $p\in i\mathcal{P}_n$ with $v_p\in S$.
Then let $\mathfrak{g}_0$ be the linear subspace of $\mathfrak{su}(2^n)$
spanned by $\mathcal{P}_S$.
Now for any two Pauli strings $p,q\in \mathcal{P}_S$ we find $[p,q]=0$ or $r=[p,q]$ is a Pauli string with
$v_r\in S$ and hence $r\in \mathfrak{g}_0$.
So $\mathfrak{g}_S=\mathfrak{g}_0$.
If $p\in \mathfrak{g}_S$ is a Pauli string with $v_p\not\in S$.
Then $p$ has to be a linear combination of elements of $\mathcal{P}_S$.
But as Pauli strings in $i\mathcal{P}_n$ form a basis for  $\mathfrak{su}(2^n)$,
we find $p$ to be a scalar multiple of some element of $\mathcal{P}_S$, contradicting 
 $v_p\not\in S$.
 
This proves  the second statement of the following proposition. The first statement is obvious.

\begin{proposition}
\phantom{a}
\begin{enumerate}
\item 
If $\mathfrak{g}$ is a Lie subalgebra of $\mathfrak{su}(2^n)$  generated by
Pauli strings, then  $S_\mathfrak{g}$  is a subspace of $NO(V,Q)$.
\item 
If $S$ is a subspace of $NO(V,Q)$, then
$\mathfrak{g}_S$ is a Lie subalgebra of $\mathfrak{su}(2^n)$
 with $S_{\mathfrak{g}_S}=S.$
\end{enumerate}
\end{proposition}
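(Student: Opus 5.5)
Part (ii) has just been argued in the text preceding the statement, so the plan concentrates on (i), with only a brief recheck of (ii).

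For (i), let $\mathfrak{g}$ be a Lie subalgebra of $\mathfrak{su}(2^n)$ and take two collinear points $v_p,v_q\in S_\mathfrak{g}$, where $p,q\in i\mathcal{P}_n\setminus\{i\one\}$ lie in $\mathfrak{g}$. Collinearity means that $\langle v_p,v_q\rangle$ is an elliptic line. In particular $f(v_p,v_q)=1$, so $p$ and $q$ anticommute. By the bracket computation in the introduction, $[p,q]=pq$ is then a Pauli string, and it lies in $\mathfrak{g}$ because $\mathfrak{g}$ is closed under the bracket. Its image in $V$ is $v_{pq}=v_p+v_q$, which is the third point of the line. Since the line is elliptic, $Q(v_p+v_q)=1$, so $pq$ squares to $-\one$; that is, $pq\in\pm i\mathcal{P}_n$.

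Two details remain. Because $\mathfrak{g}$ is a real subspace, it contains $-pq$ as well, so after choosing the sign we get an element of $i\mathcal{P}_n$ lying in $\mathfrak{g}$. This element is not $\pm i\one$, since $v_p+v_q$ is not in $\mathrm{Rad}(f)=\langle v_{i\one}\rangle$: it has $f(v_p+v_q,v_p)=1$. Hence $v_p+v_q\in S_\mathfrak{g}$, and $S_\mathfrak{g}$ is closed under lines.

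For (ii), the points to check are as follows. The span $\mathfrak{g}_0$ of $\mathcal{P}_S$ is closed under brackets, because a nonzero bracket of two elements of $\mathcal{P}_S$ corresponds to the third point of an elliptic line, and $S$ is a subspace. The inclusion $S\subseteq S_{\mathfrak{g}_S}$ holds by definition. For the reverse inclusion we use that $i\mathcal{P}_n$ is a basis of $\mathfrak{u}(2^n)$: a Pauli string in the span of $\mathcal{P}_S$ must then be $\pm$ an element of $\mathcal{P}_S$.

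The only step requiring care is the sign and scalar bookkeeping, namely ensuring that $[p,q]$ lands in $i\mathcal{P}_n$ up to a real sign and not in $\mathcal{P}_n$. This follows exactly from $Q(v_p+v_q)=1$.
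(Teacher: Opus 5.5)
Your proof is correct and follows the paper's argument. For (ii), the paper likewise shows that the span of $\mathcal{P}_S$ is closed under the bracket, and then uses the linear independence of $i\mathcal{P}_n$ to conclude $S_{\mathfrak{g}_S}=S$; for (i), the paper simply calls it obvious, and your line-closure argument, including the sign and $\mathrm{Rad}(f)$ bookkeeping, spells out exactly that.
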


\section{Ideals in the Lie algebras}
\label{app:ideal}

Suppose the  Lie subalgebra $\mathfrak{g}$ of $\mathfrak{su}(2^n)$ is generated by its
set of Pauli strings $\mathcal{P}_\mathfrak{g}\subseteq i\mathcal{P}_n$, and let $S_\mathfrak{g}$ be the corresponding subspace of $NO(V,Q)$.

\begin{proposition}\label{app:ideals}
Suppose $S_\mathfrak{g}$ is connected.
\begin{enumerate}
\item If $\equiv$ is trivial on $\mathcal{P}_\mathfrak{g}$, then $\mathfrak{g}$ is simple.
\item Suppose $p_0\neq q_0\in \mathcal{P}_\mathfrak{g}$.
Then $$p_0\equiv q_0 \Leftrightarrow 0\neq v_{p_0}+v_{q_0}\in \mathrm{Rad}(Q_{V_\mathfrak{g}}).$$
\item 
If $p_0\neq q_0\in \mathcal{P}_\mathfrak{g}$ and $p_0\equiv q_0$ then with $z=p_0q_0$ we find
$$I_z^+=\langle (1+z)p\mid p\in \mathcal{P}\rangle$$ and 
$$I_{z}^-=\langle (1-z)p\mid p\in \mathcal{P}\rangle$$
to be ideals of $\mathfrak{g}$.

Moreover, $$\mathfrak{g}/I_z^{+}\simeq I_z^{+}\simeq I_z^{-}\simeq\mathfrak{g}/I_z^{-}$$
is isomorphic to the $\mathfrak{g}_{S_0}$, where $S_0$ is the intersection of $P$ with $W_0$, a hyperplane in the subspace of $V$ generated by $S_\mathfrak{g}$ not containing $v_{p_0}+v_{q_0}$.
\end{enumerate}
\end{proposition}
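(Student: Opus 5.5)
I would prove the three items in the order (ii), (iii), (i). Item (ii) is pure linear algebra over $\mathbb{F}_2$, item (iii) is an explicit computation in the associative algebra $M_{2^n}(\mathbb{C})$, and item (i) then follows from the first two.

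\textbf{Item (ii).} Write $W=V_\mathfrak{g}$ for the span of $S_\mathfrak{g}$, and put $u=v_{p_0}+v_{q_0}=v_z$ with $z=p_0q_0$. For $r\in\mathcal{P}_\mathfrak{g}$ we have $[p_0,r]=0$ if and only if $f(v_{p_0},v_r)=0$. Hence $p_0\equiv q_0$ says exactly that $f(u,v_r)=0$ for all $v_r\in S_\mathfrak{g}$, that is, $u\in\mathrm{Rad}(f_W)$, since $S_\mathfrak{g}$ spans $W$. It remains to see that $Q(u)=0$; note $u\neq 0$ because $p_0\neq q_0$ and $p_0\neq -q_0$ as distinct basis elements. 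We have $Q(u)=Q(v_{p_0})+Q(v_{q_0})+f(v_{p_0},v_{q_0})=f(v_{p_0},v_{q_0})$. Suppose $f(v_{p_0},v_{q_0})=1$. Then $f(v_{p_0},u)=f(v_{p_0},v_{q_0})=1$, contradicting $u\in\mathrm{Rad}(f_W)$. So $Q(u)=0$. The converse reverses the same steps. Here connectedness is not needed.

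\textbf{Item (iii).} Since $v_z$ is isotropic we get $z^2=\one$. Since $v_z\in\mathrm{Rad}(f_W)$, $z$ commutes with every $p\in\mathcal{P}_\mathfrak{g}$, hence with all of $\mathfrak{g}$. So $e_\pm=\tfrac12(1\pm z)$ are orthogonal central idempotents of the associative algebra generated by $\mathfrak{g}$, and $x\mapsto e_\pm x$ are Lie algebra homomorphisms on $\mathfrak{g}$. The sets $I_z^\pm=e_\pm\mathfrak{g}$ are spanned by the $e_\pm p$. I must check that $e_\pm p\in\mathfrak{g}$. For this I use connectedness: $zp$ is, up to sign, a Pauli string $p'$ with $v_{p'}=v_p+u$. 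Starting from $p_0$, whose partner is $q_0\in\mathfrak{g}$, I propagate along collinearity. If $p\sim r$, then $[p',r]=\pm[p,r]'$, which shows inductively that $v_r+u\in S_\mathfrak{g}$ for every $r$. The point needing care is the signs: I need $zp=\lambda p'$ with $p'\in\mathfrak{g}$, so that $e_\pm p\in\mathfrak{g}$. Ideal-ness then follows from centrality, since $[x,e_\pm p]=e_\pm[x,p]$, and $\mathfrak{g}=I_z^+\oplus I_z^-$.

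For the isomorphisms, $\mathfrak{g}/I^\pm\simeq I^\mp$ is immediate from the direct sum. To show $I^+\simeq \mathfrak{g}_{S_0}$, I choose a hyperplane $W_0$ of $W$ with $u\notin W_0$. Each point of $S_\mathfrak{g}$ lies in exactly one of the pairs $\{v,v+u\}$, and exactly one element of each pair lies in $W_0$. The map $p\mapsto e_+p$ for $v_p\in S_0$ is injective, because the $e_+p$ for distinct pairs are linearly independent. It respects brackets, and it is onto $I^+$ because $e_+p'=\pm e_+p$. The map $x\mapsto e_-\,\sigma(x)$ gives $I^+\simeq I^-$, where $\sigma$ is an automorphism swapping the signs; equivalently, both are isomorphic to $\mathfrak{g}_{S_0}$ by the same argument.

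\textbf{Item (i).} This is the main obstacle. If $\equiv$ is trivial, then by (ii) $\mathrm{Rad}(Q_W)=0$. Let $I\neq 0$ be an ideal and $x=\sum_p a_p p\in I$ a nonzero element. Bracketing with Pauli strings $r\in\mathcal{P}_\mathfrak{g}$ acts diagonally up to sign and kills those $p$ commuting with $r$. Because $\equiv$ is trivial, for any two support elements $p\neq q$ some $r$ separates them. Iterating, I extract a single Pauli string in $I$. Connectedness then gives $[r,[r,p]]=\pm p$ along edges, so every point of $S_\mathfrak{g}$ lies in $I$, hence $I=\mathfrak{g}$. The delicate point is that bracketing with $r$ sends $p$ to $pr$ rather than fixing it. I would handle this by applying $\mathrm{ad}_r^2$, which acts as $\pm$ identity (a nonzero scalar) on the $p$ anticommuting with $r$ and as $0$ on the rest. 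This gives a genuine projection that shrinks the support while keeping it nonempty.
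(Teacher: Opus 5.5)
Your proof is correct. For (i), (ii) and the ideal property in (iii) it is essentially the paper's argument:
\begin{itemize}
\item In (i) you shrink the support of an element of a nonzero ideal using a Pauli string that separates two support elements, then spread a single Pauli string over $S_\mathfrak{g}$ by connectedness.
\item In (ii) you identify $p_0\equiv q_0$ with $v_{p_0}+v_{q_0}\in\mathrm{Rad}(f_W)$, with isotropy coming from $f(v_{p_0},v_{q_0})=0$.
\end{itemize}
Your $\mathrm{ad}_r^2$ device is a tidy variant, but the ``delicate point'' is not a real obstacle. The paper brackets once with the separating element $u$. Since $v\mapsto v+v_u$ is injective, the elements $[v,u]$ for the surviving support elements are still linearly independent, so only the size of the support matters.

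You go beyond the paper in (iii). The paper only observes that $z$ commutes with $\mathfrak{g}$. It never checks that $(1\pm z)p\in\mathfrak{g}$, and this genuinely needs your connectedness argument that $S_\mathfrak{g}+u=S_\mathfrak{g}$. One small step is implicit there: from $[p,r]'\in\mathfrak{g}$ you need one more bracket with $p$ to reach $r'$. The paper also never proves the stated isomorphisms. Your central idempotents $e_\pm$, together with the correspondence between the pairs $\{v,v+u\}$ and the points of $S_0$, supply that proof.

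Two small corrections:
\begin{itemize}
\item The aside in (i) that trivial $\equiv$ forces $\mathrm{Rad}(Q_W)=0$ is false. Item (ii) only excludes radical vectors that arise as differences of two points of $S_\mathfrak{g}$. For the standard embedding of $T(\Omega)$ with $|\Omega|=8$, $\equiv$ is trivial, yet the all-one vector spans $\mathrm{Rad}(Q_W)$. You never use the aside, so simply delete it.
\item Define $S_0=S_\mathfrak{g}\cap W_0$ explicitly, as in the main-text version of the statement. If you read it as $P\cap W_0$, as the appendix literally says, the isomorphism fails for subspaces of type $T(\Omega,\Omega')$. For example, with $|\Omega|=7$ and $|\Omega'|=1$, $P\cap W_0$ has $28$ points while $I_z^+\simeq\mathfrak{so}(7)$ has dimension $21$.
\end{itemize}
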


\begin{proof}
Suppose $\equiv$ is trivial and $\mathcal{I}$ is a nontrivial ideal.
Suppose $x\in \mathcal{I}$. Then $x$ can be expressed as 
$$x=\sum_{y\in \mathcal{P}_\mathfrak{g}} \lambda_y y$$
with $\lambda_y$ scalar.
The set of all $y\in \mathcal{P}_\mathfrak{g}$ with $\lambda_y\neq 0$ is called 
the support of $x$ and denoted by $\mathrm{supp}(x)$.
Take $x\in \mathcal{I}$ with a support of minimal size.
If the support of $x$ is of size $1$, then by connectedness of $S_\mathfrak{g}$
we find $\mathcal{P}_\mathfrak{g}\subseteq \mathcal{I}$ and $\mathcal{I}=\mathfrak{g}$.

So, assume the support is of size at least $2$.
As $\equiv$ is trivial, we find $y,z$ in the  support of $x$
such that there is a $u$ with $[y,u]=0$ but $[z,u]\neq 0$.
But then $$[x,u]=[\sum_{v\in \mathrm{supp}(x)} \lambda_v v,u]=\sum_{v\in \mathrm{supp}(x),v\neq y} \lambda_v [v,u]$$
is a nontrivial element in $\mathcal{I}$ with smaller support. This contradicts the choice of $x$.
This proves (i).

Suppose $p\equiv q$. Then $[p,q]=0$ and $v_p+v_q$ is isotropic. For each Pauli string  $r$ we have 
$f(v_p,v_r)=f(v_q,v_r)$ and hence $v_p+v_r$ is in the radical of $f$.
Hence $v_p+v_r$ is in the radical of $Q$.

On the other hand, if $v_p+v_r$ is in this radical, we find $f(v_p,v_r)=f(v_q,v_r)$
for all $r$ and we easily deduce $p\equiv q$. This proves (ii).

Finally consider (iii).
Suppose $p\equiv q$ and set $z=pq$. The $z$ commutes with all elements from $\mathfrak{g}$.
So for each $r, s\in S_\mathfrak{g}$ we find
$[r,(1+z)s]=[r,s+zs]=[r,s]+[r,zs]=[r,s](1+z)=(1+z)[r,s]$.
Hence $\mathcal{I}^+$ is an ideal. Similarly we can find $\mathcal{I}^-$ to be an ideal.
\end{proof}

\section{Simple Lie algebras and related geometries}
\label{app:lie}
The above proposition shows that  
$\mathfrak{g}$ is isomorphic to a direct sum of $2^r$
copies of Lie subalgebras, where $r$ is the dimension of the isotropic radical of $W$ in case $S_\mathfrak{g}$ is isomorphic to
$NO(W,Q_W)$ or the size of the set $\Omega'$ in case $S_\mathfrak{g}$ is isomorphic to $T(\Omega,\Omega')$ for disjoint sets $\Omega$ and $\Omega'$. 

So, to identify the Lie subalgebras of $\mathfrak{su}(2^n)$ generate by Pauli strings
we need only consider for the geometries $NO(W,Q_W)$ with $Q$ non-degenerate
and $T(\Omega,\Omega')$, where $|\Omega|\geq 5$ and $\Omega$ is the empty set.

\begin{theorem}\label{identify}
Let $\mathfrak{g}$ be a Lie subalgebra of $\mathfrak{su}(2^n)$ generated by Pauli strings.
\begin{enumerate}
\item If $S_\mathfrak{g}$ is isomorphic to $NO(W,Q_W)$, for some non-degenerate quadratic space  $(W,Q_W)$
of odd dimension $2m+1$, then $\mathfrak{g}$ is isomorphic to $\mathfrak{su}(2^m)$;
\item If $S_\mathfrak{g}$ is isomorphic to $NO(W,Q_W)$, for some non-degenerate quadratic space $(W,Q_W)$
of even dimension $2m$ of $+$-type, then $\mathfrak{g}$ is isomorphic to $\mathfrak{so}(2^m)$;
\item If $S_\mathfrak{g}$ is isomorphic to $NO(W,Q_W)$, for some non-degenerate quadratic space $(W,Q_W)$
of even dimension $2m$ of $-$-type, then $\mathfrak{g}$ is isomorphic to $\mathfrak{sp}(2^m)$;
\item If $S_\mathfrak{g}$ is isomorphic to $T(\Omega,\emptyset)$ for some set $\Omega$ of size at least $5$,
then $\mathfrak{g}$ is isomorphic to $\mathfrak{so}(|\Omega|)$.
\end{enumerate}

\end{theorem}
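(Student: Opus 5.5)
By \cref{samegeometry}, the isomorphism type of $\mathfrak{g}$ depends only on the isomorphism type of the geometry $S_\mathfrak{g}$. So it suffices, in each of the four cases, to exhibit one explicit Lie subalgebra of some $\mathfrak{su}(2^N)$ that is generated by Pauli strings, has the prescribed geometry, and is visibly isomorphic to the claimed classical algebra. I would also use the Witt-type fact that non-degenerate quadratic $\mathbb{F}_2$-spaces of given dimension and type are isometric (Appendix A). Together with the observation that an isometry $W\to W'$ maps $NO(W,Q_W)$ isomorphically onto $NO(W',Q_{W'})$, this means any model of the right dimension and type will do.

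For (i), take $W=V$ itself for $n=m$. Then $NO(V,Q)$ consists of all non-isotropic vectors outside $\langle v_{i\one}\rangle$, which are exactly the images of $i\mathcal{P}_m\setminus\{i\one\}$. These span $\mathfrak{su}(2^m)$, and the correspondence of Proposition~3.1 then gives $\mathfrak{g}\simeq\mathfrak{su}(2^m)$. For (ii) and (iii), a non-degenerate $2m$-space is a complement of the radical $\langle r\rangle$ in $V$, or more conveniently a hyperplane of $V$ not containing $v_{i\one}$. I would choose the hyperplane adapted to transposition. Each Pauli string satisfies $p^\top=\pm p$ (the sign being $-$ exactly when the number of $Y$ factors is odd), and $p\mapsto$ this sign, twisted appropriately by $Q$, is affine on $V$. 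Hence the set of $p\in i\mathcal{P}_m$ with $p^\top=-p$, i.e.\ the real antisymmetric ones, is $NO(H,Q_H)$ for a hyperplane $H$. These strings span $\mathfrak{so}(2^m)$, and counting gives $\dim\mathfrak{so}(2^m)=2^{m-1}(2^m-1)$ points, which is the number of non-isotropic vectors in a $2m$-space of $+$-type. That count pins down the type. Similarly, the strings satisfying $yp^\top y=p$, with $y$ a Pauli string such as $Y_1$ or $Y_1Z_2\cdots$, form $\mathfrak{sp}(2^{m-1})$ inside $\mathfrak{su}(2^m)$. Their number, $2^{m-1}(2^m+1)$, equals the count of non-isotropic vectors for $-$-type. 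I would need to check the indexing convention of $\mathfrak{sp}$ used in the statement against this count, and adjust $m$ if necessary.

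For (iv), I would use the standard Jordan--Wigner construction. Take $2k$ Majorana operators $\gamma_1,\dots,\gamma_{N}$ (with $N=|\Omega|$, adding one extra qubit if $N$ is odd); these are pairwise anticommuting Pauli strings with square $\one$. The products $\gamma_a\gamma_b$ for $a\neq b$ lie in $i\mathcal{P}$ up to sign and span a copy of $\mathfrak{so}(N)$. Two such products $\gamma_a\gamma_b$ and $\gamma_c\gamma_d$ anticommute exactly when $|\{a,b\}\cap\{c,d\}|=1$, and their bracket is $\pm\gamma_a\gamma_d$ when $b=c$. So $\{a,b\}\mapsto v_{i\gamma_a\gamma_b}$ is an isomorphism $T(\Omega)\to S$, and because this set is closed under brackets it is a subspace.

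The main obstacle is cases (ii) and (iii). I would need to verify that the transpose-(skew)symmetry condition really corresponds to a linear hyperplane section of $NO(V,Q)$, and to get the $\mathfrak{sp}$ indexing right. If the hyperplane description fails, my fallback is to show directly that the Pauli strings in a Clifford-conjugate of $\mathfrak{so}(2^m)$ or $\mathfrak{sp}$ form a subspace of the correct size in a $2m$-space, and then identify the type by the point count.
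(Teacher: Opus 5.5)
Your overall strategy works once one sign is corrected in (iii), and it is organized differently from the paper.

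\textbf{How the two routes differ.} The paper stays inside a fixed $\mathfrak{su}(2^n)$ and proceeds in four steps:
\begin{enumerate}
\item It introduces the involutions $\tau_y(p)=-y^2\,yp^\top y$.
\item It shows that the Pauli strings fixed by $\tau_y$ form a geometric hyperplane of $NO(V,Q)$, hence a hyperplane section.
\item It uses transitivity of the Clifford group, plus an explicit conjugation by the phase gate $S$ taking $\{x=-Z_1x^\top Z_1\}$ to $\mathfrak{so}(2^n)$, to reduce every fixed-point algebra to $\mathfrak{so}(2^n)$ or $\mathfrak{sp}(2^{n-1})$.
\item It matches these to $+$- and $-$-type by counting points.
\end{enumerate}
You instead make the reliance on \cref{samegeometry} and on uniqueness of non-degenerate quadratic spaces explicit, so one model per case suffices. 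You also prove the hyperplane property directly. Write $p^\top=(-1)^{\sigma(v_p)}p$; then $(pq)^\top=(-1)^{\sigma(v_p)+\sigma(v_q)+f(v_p,v_q)}pq$. Hence $\sigma+Q$ is linear and equals $1$ on $v_{i\one}$. Its kernel $H$ is therefore a complement of $\langle v_{i\one}\rangle$, and $NO(H,Q_H)$ is exactly the set of antisymmetric strings.

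This bypasses both the geometric-hyperplane lemma and the Clifford-transitivity step. Since \cref{samegeometry} is stated for a fixed $n$, place your model and $\mathfrak{g}$ in a common $\mathfrak{su}(2^N)$ via $p\mapsto p\otimes\one$. Your Jordan--Wigner argument for (iv) is a genuine addition: the paper's appendix only claims the first three items and gives no proof of (iv).

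\textbf{The sign error in (iii).} For a Hermitian string such as $y=Y_1$ (so $y^2=\one$), the condition $yp^\top y=p$ cuts out the $(-1)$-eigenspace of the symplectic involution, which is not a subalgebra. Already for $m=1$, the only string with $Yp^\top Y=p$ is $i\one$.

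The correct condition is $yp^\top y^{-1}=-p$, i.e.\ $p=-Y_1p^\top Y_1$. Equivalently, use $p=yp^\top y$ with $y^2=-\one$, for instance $y=-iY_1=\begin{pmatrix}0&-1\\1&0\end{pmatrix}$ as in the paper; this is exactly $\tau_y(p)=p$. With this fix your linearity argument goes through unchanged. Conjugation by $y$ multiplies $p$ by $(-1)^{f(v_y,v_p)}$, so the relevant functional becomes $\sigma+Q+f(v_y,\cdot)$, which is still linear and still equals $1$ on $v_{i\one}$. Your count $2^{m-1}(2^m+1)$ then confirms $-$-type.

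\textbf{Indexing of $\mathfrak{sp}$.} Your count equals $\dim\mathfrak{sp}(2^{m-1})$ in the paper's convention $\mathfrak{sp}(k)\subseteq\mathfrak{su}(2k)$. This agrees with:
\begin{enumerate}
\item the paper's own proof, where a $2n$-dimensional $-$-type hyperplane gives $\mathfrak{sp}(2^{n-1})$;
\item the isomorphism $\mathfrak{so}(5)\simeq\mathfrak{sp}(2)$;
\item the algebra $\mathfrak{sp}(2^2)$ attached to a $6$-dimensional $-$-type space.
\end{enumerate}
So item (iii) should read $\mathfrak{sp}(2^{m-1})$.

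\textbf{A small fix in (iv).} The point attached to $\{a,b\}$ is $v_{\gamma_a\gamma_b}$, not $v_{i\gamma_a\gamma_b}$. Since $(\gamma_a\gamma_b)^2=-\one$, the vector $v_{i\gamma_a\gamma_b}$ is isotropic.
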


Let $y\in \Pi_n$. Then let  $\tau_{y}:\Pi_n\rightarrow \Pi_n$
be the map defined by

$$\tau_{y}(p)=-y^2\cdot yp^\top y=\begin{cases} yp^\top y &\mathrm{if}\ Q(v_y)=1\\
-yp^\top y &\mathrm{if}\ Q(v_y)=0\\
\end{cases}$$

Then for $p\in \Pi_n$ we have
$\tau_y(p)=\pm p$.

Moreover, for all non-commuting $p,q$ we have
$$\begin{array}{ll}
\tau_{y}([p,q])&=\tau_{y}(pq)\\
&=-y^2\cdot y(pq)^\top y\\
&=-y^2\cdot yq^\top p^\top y\\
&=y^2\cdot yp^\top q^\top y\\
&=yp^\top yy q^\top y\\
&=[\tau_y(p),\tau_y(q)]\\
\end{array}$$
  
So, $\tau_y$ is an automorphism of the Lie algebra $\mathfrak{g}=\mathfrak{su}(2^n)$ and its
fix points form a Lie subalgebra $\mathfrak{g}_{\tau_y}$.
Moreover, if $\tau_y(p)=-p$ and $\tau_y(q)=-q$, for non-commuting Pauli strings $p,q$,
then $\tau_y(pq)=\tau_y([p,q])=[\tau_y(p),\tau_y(q)]=[-p,-q]=pq$.
So, $S_{\mathfrak{g}_{\tau_y}}$ is a subspace of $NO(V,Q)$ meeting every line in at least a point.
Such subspaces are called geometric hyperplanes of  $NO(V,Q)$.
Every such geometric hyperplane of $NO(V,Q)$ is the intersection
of $NO(V,Q)$ with a hyperplane of $V$.

If $y=\epsilon I$, with $\epsilon \in \{\pm 1,\pm i\}$, 
then $\tau_y(p)=-p^\top$ for all Pauli strings $p$.
So $$\mathfrak{g}_{\tau_y}=\mathfrak{so}(2^n)=\{x\in \mathfrak{su}(2^n)\mid x=-x^\top\}.$$

Now assume $y$ is not a multiple of $I$.
As the Clifford group  is transitive
on the element $y$ with $y^2=1$ and on those with $y^2=-1$, we find that these subalgebras are all isomorphic to
the Lie subalgebra
$\{x\in \mathfrak{su}(2^n)\mid x=-yx^\top y\}$
where $y=Z_1$
or 
the Lie subalgebra
$$\mathfrak{sp}(2^{n-1})=\{x\in \mathfrak{su}(2^n)\mid x=yx^\top y\}$$
where $$y=\begin{pmatrix} 0 & -1_{2^{n-1}}\\ 1_{2^{n-1}} &0\end{pmatrix}.$$
  
In the first case, the equation   $$x+yx^\top y=0$$
where $y=Z_1$ can be transformed into  
$$x+x^\top=0,$$ by applying
the element $S$, where $S=\begin{pmatrix}1&0\\0&i\\\end{pmatrix}\otimes I\otimes \dots\otimes I$ from the Clifford group. Then $Z\overline{S}=S$ and $S^\top Z= \overline{S}^\top$.
For each element $x$ satisfying  $$x+yx^\top y=0$$
we find $x'=\overline{S}^\top xS$ to satisfy 
$$\begin{array}{ll}
0&=Sx'\overline{S}^\top+ y(Sx'\overline{S}^\top)^\top y\\
&=Sx'\overline{S}^\top+y\overline{S}x'^\top S^\top y\\
&=Sx'\overline{S}^\top+Sx'^\top \overline{S}^\top\\
&=S(x'+x'^\top)\overline{S}^\top\\
\end{array}$$
which is equivalent to
$$x'+x'^\top=0.$$

So, we encounter two non-isomorphic Lie subalgebras
$$\begin{array}{ll}
\mathfrak{so}(2^{n})&=\{x\in \mathfrak{su}(2^n)\mid x=-x^\top \}\\
\mathfrak{sp}(2^{n-1})&=\{x\in \mathfrak{su}(2^n)\mid x=yx^\top y\}
\end{array}$$
where $$y=\begin{pmatrix} 0 & -1_{2^{n-1}}\\ 1_{2^{n-1}} &0\end{pmatrix}.$$
 
As the dimension of $\mathfrak{so}(2^{n})$ equals the number of points in 
a geometric hyperplane of $NO(V,Q)$ of $+$-type
and the dimension of $\mathfrak{sp}(2^{n-1})$ the number of points in 
a geometric hyperplane of $NO(V,Q)$ of $-$-type, we have proven the first three statements of \cref{identify}.

\section{Enumerating $T(\Omega)$}
\label{app:triangle}

Suppose $\Delta$ is a connected multi-graph, and $\Delta_0$ a spanning tree
of $\Delta$, which is a connected subgraph on all the vertices of $\Delta$ and
some edges. Such a spanning tree does always exist.
Denote by $\Gamma$ the line graph of $\Delta$ and by $\Gamma_0$ that of $\Delta_0$,

Suppose $\mathcal{G}$ is a set of Pauli strings with frustration graph
$\Gamma$. Let $\mathfrak{g}$ be the subalgebra of $\mathfrak{su}(2^n)$ generated by $\mathcal{G}$.

Then consider the subset $\mathcal{G}_0$ of vertices of $\Gamma_0$.
We can embed $\Gamma_0$ as points in 
$T(\Omega)$, and find it to generate a subspace $S_0$ of $NO(V,Q)$ isomorphic to $T(\Omega)$.

Indeed, if $w$ is a leaf of $\Delta_0$, which is on the edge $e$, then by induction
we find that the points of $\Gamma_0$ different from $e$ generate
$T(\Omega\setminus\{w\})$.
But then one easily checks that all points of $\Gamma_0$ generate the full
space $T(\Omega)$.

In order to enumerate all elements of $S_0$ we can proceed as follows.
Let $\Delta_{i+1}$ be recursively defined as the tree obtained from $\Delta_i$
by removing a leaf end the edge $e_{i}$ on this leaf from $\Delta_i$.

Then denote by $\Gamma_i$ the line graph of $\Delta_i$ and $S_i$ the subspace
of $NO(V,Q)$ generated by the points $v_p$ with $p$ a vertex of $\Gamma_i$.
Then $$S_{n-1}\subsetneq S_{n-2}\subsetneq\dots\subsetneq S_{0}.$$
We find $S_i$ by adding $v_{e_{i}}+v$ to $S_{i+1}$ for all $v\in S_{i+1}$ with $f(v,v_{e_{i}})=1$.
This requires $O(n^3)$ operations.

In this way we can also identify a basis for the Lie subalgebra $\mathfrak{so}(n)$
generated by the elements in $\mathcal{G}_0$ inside $\mathfrak{g}$.

\section{The algorithm}
\label{app:alg}

In this section we will make the algorithm of \cref{sect:alg} explicit.

As input we have a set $\mathcal{G}=\{p_1,\dots, p_m\}$ of Pauli strings in $\mathfrak{su}(2^n)$.
These Pauli strings are given as (minimal) products of a fixed set
of $2n+1$ Pauli strings generating $\mathfrak{su}(2^n)$.
Let $B$ be the basis of $V$ obtained from these generating Pauli strings.

\begin{enumerate}[1.]
\item
The first preliminary step is to include that information in to determine
the vectors $v_{p_j}$ in the basis $B$.
\item
The second step is to determine the Gram matrix $G$ of the vectors $\{v_p\mid p\in \mathcal{G}\}$  
with respect to the form $f$.
This requires $O(m^2\cdot n)$ basic computations.

\item 
Now find the connected components of the graph with adjacency matrix $G$.
A breadth-first search will do this in runtime $O(m^2)$.

\end{enumerate}

We now have found the connected components of the frustration graph of 
the set $\mathcal{G}$.
The next steps will be done for each connected component of the frustration graph.
So, from now on, assume the frustration $\Gamma$ on the vertices of $\mathcal{G}$ is connected.
\begin{enumerate}[1.,resume]

\item
Determine whether $\Gamma$ is a line graph of a multi-graph $\Delta$
with $k\neq 4$ vertices and determine $\Delta$ (Notice this is always possible).
In \cite{cuypers_whitney} one can find an algorithm running in time $O(|E|)$,
where $E$ is the edge set. In particular, it requires  $O(m^2)$ basic calculations.
\item If yes,  determine $\mathtt{dim}$, the dimension of $\langle v_p\mid p\in \mathcal{G}\rangle$ using the Gaussian algorithm.
This can be done in $O(\max(n,m)^3)$ basic calculations.

Take a  a spanning tree $\Delta_0$ of $\Delta$ with edge set $\mathfrak{G}_0$.
Then determine $\dim(W_0)$ where $W_0$ is the subspace of $V$ spanned by $\{v_p\mid p\in G_0\}$.
If $4\not \mid |\Omega|$, then the size of $|\Omega'|$ equals $\dim(W)-\dim(W_0)$.
The Lie subalgebra  
$\mathfrak{g}$ is then isomorphic to  $\bigoplus_{i=1}^{2^{|\Omega'|}}\mathfrak{so}(|\Omega|)$.

If $4\mid \Omega$ and $\dim(W_0)=|\Delta|-2$, we find that $S\cap V_0$
is a quotient of the standard embedding of $T(\Omega)$.
Again, we can conclude that $\Omega'$ has size $\dim(W)-\dim(W_0)$.
and  $\mathfrak{g}$ is then isomorphic to  $\bigoplus_{i=1}^{2^{|\Omega'|}}\mathfrak{so}(|\Omega|)$.

If $4\mid \Omega$ and $\dim(W_0)=|\Delta|-1$, we find that $S\cap V_0$
is  the standard embedding of $T(\Omega)$.
Determine the isotropic vector $r_0$ in the radical of $Q$ restricted to $W_0$.
For each of the generators $p$ in $\mathcal{G}\setminus \mathcal{G}_0$ we can 
check whether $v_p=r_0+v_{p'}$ for some $v_{p'}$ in the subspace $S_0$ generated
by $\{v_q\mid q\in \mathcal{G}_0\}$.
This can be done by enumerating all elements of $S_0$ as in \cref{app:triangle}
and then check for all elements $p$ in $\mathcal{G}\setminus \mathcal{G}_0$ whether $v_p$ appears in $S_0$.
If for some $p$ we  do find $v_p=r_0+v_{p'}$, then 
$|\Omega'|=\dim(W)-\dim(W_0)|$,
otherwise $|\Omega'|=\dim(W)-\dim(W_0)|-1$.

Clearly, as the enumeration of the elements of $S_0$ requires at most $O(n^3)$ basic operations, the whole step requires at $O(\max(n,m)^3)$ basic operations.

The Lie subalgebra $\mathfrak{g}$ is isomorphic to
a direct sum of $2^{|\Omega'|}$ copies 
of $\mathfrak{so}(k)$.

\item 
If no, then find an $f$-hyperbolic basis $HB$ for $\langle v_p\mid p\in\mathcal{G}\rangle$
and for each basis vector $v$ determine $Q(v)$. 
Here $$HB=\{v_1,w_1,\dots, v_k,w_k,r_1,\dots, r_l\}$$
where $f(v_i,w_j)=\delta_{i,j}$ and $r_1,\dots, r_l$ in $\mathrm{Rad}(f)$.
This can be done in a modified Gram-Schmidt algorithm where in each step the value of $Q$ on the new vectors is computed. This will take at most $O(\max(n,m)^3)$ calculations.
\item 
Determine $Q(r_j)$ for $1\leq j\leq l$.

If not all these values are $0$, then 
$\mathfrak{g}$ is a direct sum of $2^{r-1}$ copies of 
$\mathfrak{su}(2^k)$. 

If all these values are $0$, then go to the next step.

This step requires $O(\ell)$ basic calculations.
\item For each hyperbolic pair $v_i,w_i$ of $HB$ determine the type of the restriction of $Q$ to  $\langle v_i,w_i\rangle$ 
by evaluating $Q$ at the three two vectors $v_i$ and $w_i$. 
Find  the type $T$ of $Q$ restricted to $\langle v_1,\dots,v_k\rangle$ as the product of the types
on the hyperbolic $2$-spaces.
This takes at most $O(n)$ computations.

The Lie subalgebra  $\mathfrak{g}$ is a direct sum of $2^r$ copies of 
$\mathfrak{so}(2^k)$ if the type $T$ is $+$ and of $2^r$ copies of 
$\mathfrak{sp}(2^{k-1})$ if the type $T$ is $-$. 

\end{enumerate}

As each of these eight steps requires running time at most $O(\max(n,m)^3)$, we find that the complexity of our algorithm is $O(\max(n,m)^3)$.

\end{document}